\documentclass[%
 reprint,
superscriptaddress,
 amsmath,amssymb,
 aps,
pra,
]{revtex4-2}
\usepackage[utf8]{inputenc}

\usepackage[caption=false]{subfig}
\usepackage{graphicx}
\usepackage{here}
\usepackage{epstopdf}
\usepackage{array}
\usepackage{physics}
\usepackage{verbatim}
\usepackage{amsmath,amsfonts,amssymb,amscd,mathtools}
\usepackage{amsthm}
\usepackage{tabularx}
\usepackage{stmaryrd}
\usepackage{enumerate}
\usepackage{wasysym}
\usepackage{braket}
\usepackage{mathrsfs}
\usepackage{microtype}
\usepackage{hyperref}
\usepackage{soul}
\usepackage[dvipsnames]{xcolor}
\hypersetup{
    bookmarksnumbered=true, % If Acrobat bookmarks are requested, include section numbers
    unicode=false, % non-Latin characters in Acrobat bookmarks
    pdfstartview={FitH}, % fits the width of the page to the window
    pdftitle={}, % title
    pdfauthor={}, % author
    pdfsubject={}, % subject of the document
    pdfcreator={}, % creator of the document
    pdfproducer={}, % producer of the document
    pdfkeywords={}, % list of keywords
    pdfnewwindow=true, % links in new window
    colorlinks=true, % false: boxed links; true: colored links
    linkcolor=NavyBlue, % color of internal links
    citecolor=NavyBlue, % color of links to bibliography
    filecolor=NavyBlue, % color of file links
    urlcolor=NavyBlue % color of external links
}
\usepackage{tikz}
\usepackage[lmargin=.7in,rmargin=.7in,tmargin=.7in,bmargin=1in]{geometry}
\usepackage{newtxtext,newtxmath}

\usepackage{cleveref}

\usepackage{algorithm}
\usepackage{algorithmicx}
\usepackage{algpseudocode}


\usepackage{booktabs}

\theoremstyle{plain}
\newtheorem{thm}{Theorem}

\newtheorem{lem}[thm]{Lemma}
\newtheorem{pro}[thm]{Proposition}

\theoremstyle{definition}
\newtheorem{defn}[thm]{Definition}

\usepackage[most,breakable]{tcolorbox}
\tcbset{ sharp corners}
  {\expandafter\ifstrequal\expandafter{#1}{filled}{\begin{tcolorbox}[colback=MidnightBlue!70!black!70!TealBlue!2!white,colframe=MidnightBlue!70!black!70!TealBlue!30!white,breakable,enhanced,left=5.75pt,right=5.75pt,grow sidewards by=10pt]}{\begin{tcolorbox}[colback=white,colframe=gray!0,breakable,enhanced,left=5.75pt,right=5.75pt,grow sidewards by=10pt]}}%
  {\end{tcolorbox}}

\newenvironment{defnboxed}[1][white]%definition
  {\expandafter\ifstrequal\expandafter{#1}{filled}{\begin{tcolorbox}[colback=MidnightBlue!70!black!70!TealBlue!2!white,colframe=MidnightBlue!70!black!70!TealBlue!30!white,breakable,enhanced,left=5.75pt,right=5.75pt,grow sidewards by=10pt]}{\begin{tcolorbox}[colback=MidnightBlue!10,colframe=MidnightBlue!0,breakable,enhanced,left=5.75pt,right=5.75pt,grow sidewards by=10pt]}}%
  {\end{tcolorbox}}

\newenvironment{proboxed}[1][white]%proposition
  {\expandafter\ifstrequal\expandafter{#1}{filled}{\begin{tcolorbox}[colback=MidnightBlue!70!black!70!TealBlue!2!white,colframe=MidnightBlue!70!black!70!TealBlue!30!white,breakable,enhanced,left=5.75pt,right=5.75pt,grow sidewards by=10pt]}{\begin{tcolorbox}[colback=OliveGreen!10,colframe=OliveGreen!0,breakable,enhanced,left=5.75pt,right=5.75pt,grow sidewards by=10pt]}}%
  {\end{tcolorbox}}

  \newenvironment{thmboxed}[1][white]%Theorem
  {\expandafter\ifstrequal\expandafter{#1}{filled}{\begin{tcolorbox}[colback=MidnightBlue!70!black!70!TealBlue!2!white,colframe=MidnightBlue!70!black!70!TealBlue!30!white,breakable,enhanced,left=5.75pt,right=5.75pt,grow sidewards by=10pt]}{\begin{tcolorbox}[colback=Maroon!10,colframe=Maroon!0,breakable,enhanced,left=5.75pt,right=5.75pt,grow sidewards by=10pt]}}%
  {\end{tcolorbox}}
  
   \newenvironment{lemboxed}[1][white]%Theorem
  {\expandafter\ifstrequal\expandafter{#1}{filled}{\begin{tcolorbox}[colback=MidnightBlue!70!black!70!TealBlue!2!white,colframe=MidnightBlue!70!black!70!TealBlue!30!white,breakable,enhanced,left=5.75pt,right=5.75pt,grow sidewards by=10pt]}{\begin{tcolorbox}[colback=Gray!20,colframe=Maroon!0,breakable,enhanced,left=5.75pt,right=5.75pt,grow sidewards by=10pt]}}%
  {\end{tcolorbox}}

\newcommand{\eq}[1]{(\hyperref[eq:#1]{\ref*{eq:#1}})}

\renewcommand{\sec}[1]{\hyperref[sec:#1]{Section~\ref*{sec:#1}}}
\newcommand{\thrm}[1]{\hyperref[thrm:#1]{Theorem~\ref*{thrm:#1}}}
\newcommand{\lemm}[1]{\hyperref[lemm:#1]{Lemma~\ref*{lemm:#1}}}
\newcommand{\prop}[1]{\hyperref[prop:#1]{Proposition~\ref*{prop:#1}}}
\newcommand{\corr}[1]{\hyperref[corr:#1]{Corollary~\ref*{corr:#1}}}
\newcommand{\fig}[1]{\hyperref[fig:#1]{~\ref*{fig:#1}}}
\newcommand{\deff}[1]{\hyperref[deff:#1]{~\ref*{deff:#1}}}

\newcommand{\mA}{\mathcal{A}}

\newcommand{\mE}{\mathcal{E}}

\newcommand{\mD}{\mathcal{D}}

\newcommand{\mL}{\mathcal{L}}

\newcommand{\mH}{\mathcal{H}}

\newcommand{\mB}{\mathcal{B}}

\newcommand{\mP}{\mathcal{P}}

\newcommand{\mR}{\mathcal{R}}

\newcommand{\mbD}{\mathbb{D}}

\newcommand{\mbF}{\mathbb{F}}

\newcommand{\mbN}{\mathbb{N}}
\newcommand{\mbO}{\mathbb{O}}
\newcommand{\mbR}{\mathbb{R}}

\newcommand{\mfR}{\mathfrak{R}}

\newcommand{\saf}{\widetilde{f}}
\newcommand{\pf}{\overline{f}}
\newcommand{\sQ}{\widetilde{Q}}
\newcommand{\pQ}{\overline{Q}}

\newcommand{\xm}{{X_m}}

\newcommand{\sD}{\widetilde{D}}
\newcommand{\sH}{\widetilde{H}}
\newcommand{\pH}{\overline{H}}
\newcommand{\pD}{\overline{D}}

\newcommand{\spsi}{\widetilde{\psi}}

\renewcommand{\*}{\textup{*}}
\DeclareMathOperator{\cone}{cone}

\newcommand{\ve}{\varepsilon}

\DeclareMathOperator{\aff}{aff}

\DeclareMathOperator{\Span}{Span}
\DeclareMathOperator{\supp}{supp}

\DeclareMathOperator{\spec}{spec}

\DeclareMathOperator{\argmax}{argmax}

\newcommand{\ba}{\begin{eqnarray}}
\newcommand{\ea}{\end{eqnarray}}
\newcommand{\bann}{\begin{eqnarray*}}
\newcommand{\eann}{\end{eqnarray*}}
\newcommand{\bal}{\begin{equation}\begin{aligned}}
\newcommand{\eal}{\end{aligned}\end{equation}}
\newcommand{\dm}[1]{\ketbra{#1}{#1}}

\newcolumntype{L}[1]{>{\raggedright}p{#1}}
\newcolumntype{C}[1]{>{\centering}p{#1}}
\newcolumntype{R}[1]{>{\raggedleft}p{#1}}
\newcolumntype{D}{>{\centering\arraybackslash}X}

\renewcommand{\*}{\textup{*}}

\usepackage{dcolumn}% Align table columns on the decimal point
\usepackage{bm}% bold math
\makeatletter
\renewcommand{\paragraph}[1]{\medskip\noindent\textbf{\textit{#1}}\ignorespaces}
\makeatother

\begin{document}

\title{Reliability of asymptotic work extraction}% Force line breaks with \\
% \thanks{A footnote to the article title}%

\author{Kaito Watanabe}
\email{watanabe715@g.ecc.u-tokyo.ac.jp}
\affiliation{Department of Basic Science, The University of Tokyo, 3-8-1 Komaba, Meguro-ku, Tokyo 153-8902, Japan}
\affiliation{Mathematical Quantum Information RIKEN Hakubi Research Team, RIKEN Pioneering Research Institute (PRI) and RIKEN Center for Quantum Computing (RQC), Wako, Saitama 351-0198, Japan}

\author{Bartosz Regula}
\email{bartosz.regula@gmail.com}
\affiliation{Mathematical Quantum Information RIKEN Hakubi Research Team, RIKEN Pioneering Research Institute (PRI) and RIKEN Center for Quantum Computing (RQC), Wako, Saitama 351-0198, Japan}

\author{Marco Tomamichel}
\email{marco.tomamichel@nus.edu.sg}
\affiliation{Centre for Quantum Technologies, National University of Singapore, Singapore 117543, Singapore}
\affiliation{Department of Electrical and Computer Engineering,
National University of Singapore, Singapore 117583, Singapore}  

\author{Ryuji Takagi}
\email{ryujitakagi@g.ecc.u-tokyo.ac.jp}
\affiliation{Department of Basic Science, The University of Tokyo, 3-8-1 Komaba, Meguro-ku, Tokyo 153-8902, Japan}

%\date{\today}% It is always \today, today,
             %  but any date may be explicitly specified
%TC:ignore
\begin{abstract}
Extracting work from quantum states is a fundamental task in quantum thermodynamics. Previous studies have primarily focused on determining the best achievable rate of work extraction, and remarkably, this characterization appeared to remain unchanged regardless of the choice of allowed processes: whether one considers the operationally motivated class of energy-conserving thermal operations, or the axiomatic class of Gibbs-preserving operations, the optimal extractable work is given by the Helmholtz free energy.
Here, we challenge this perspective, showing that a more refined analysis of the asymptotic performance of work extraction reveals significant differences in the performance for the two different classes of free operations.
Precisely, we focus on the trade-off between the extraction rate and its reliability, characterized by the optimal asymptotic speed at which the extraction error can be suppressed. 
We establish that the reliability of Gibbs-preserving operations and of thermal operations are respectively characterized by the Petz and the sandwiched R\'enyi relative entropies, demonstrating that the former in general strictly outperforms the latter, and providing new interpretations of several information-theoretic divergences.
Our analysis reveals that operational constraints such as energy conservation impose stronger limitations on the achievable precision of quantum tasks than can be inferred from their asymptotic rates, thereby questioning the use of Gibbs-preserving operations as a mathematically convenient substitute for the physically realizable thermal processes. 

\end{abstract}
%TC:endignore
\maketitle

\paragraph{Introduction.}\,---\, Characterizing the ultimate performance of manipulating nonequilibrium quantum states is a central problem in quantum thermodynamics~\cite{horodecki_2013, Faist_2018_fundamental_work, Lostaglio_2019_introductory, Brandao_resource_theory_of_quantum, Gour_role_of_quantum_coherence}, as it determines the fundamental limits of work extraction, one of the most basic notions in thermodynamics. 
With recent advances in the precise control of nanoscale quantum systems, understanding such ultimate limits has become increasingly important.

A powerful framework for addressing this problem is the resource-theoretic approach~\cite{horodecki_2013,Brandao_resource_theory_of_quantum}, where thermodynamic state transformations are studied under a restricted class of quantum operations, called free operations~\cite{Chitamber_Gour, Gour_2025_book}. 
A particularly useful insight from this approach is that work extraction can be quantitatively connected to quantum hypothesis testing, establishing a direct bridge between thermodynamics and information theory.

Within the resource-theoretic formalism, one makes a suitable choice of the allowed free operations, aiming to reflect the operational constraints of the given setting.
Among the possible such choices, two representative classes are the axiomatically-motivated \emph{Gibbs-preserving operations} --- which are a useful approximation of the operational restrictions of quantum thermodynamics, often chosen due to their simple structure --- and the physically implementable \emph{thermal operations}, which can be easily realized without expending any thermodynamic resources, but whose more complicated mathematical structure leads to difficulties in their characterization.
While these two classes are known to exhibit significant structural differences~\cite{Faist2015Gibbs-preserving, Tajima_2024_Gibbs-preserving}, previous studies showed that the optimal asymptotic work extraction rate under both classes is universally characterized by the Helmholtz free energy. 
This naturally suggests an operational equivalence between the two classes in the task of work extraction, leading to the impression that the mathematically convenient choice of Gibbs-preserving operations leads to no loss of physical insight.

In this work, we show that this intuition breaks down once one considers not 
merely the amount of extracted work, but rather the error exponent (also called the reliability function) of work extraction. 
Specifically, we ask: if one aims to achieve a target work extraction rate, how fast can the extraction error vanish asymptotically? 
This refinement gives a much more precise understanding of the performance that can be achieved in practice, and it is a natural question to consider in many relevant scenarios\,---\,errors are often required to vanish sufficiently fast to allow for reliable work extraction.

Our main result reveals a sharp operational separation between Gibbs-preserving operations and thermal operations: although both classes exhibit the same performance when reliability is not taken into account, their achievable rates differ when the error is required to vanish exponentially fast.
In other words, the error exponents achievable under Gibbs-preserving operations are in general strictly larger than those achievable under thermal operations for the same extraction rate.
The result relies on the analysis of the underlying error exponents of quantum hypothesis testing, allowing us to characterize the asymptotic error decay at any fixed extraction rate.
Remarkably, this separation is completely characterized by distinct information-theoretic quantities. 
Under Gibbs-preserving operations, the error exponent is governed by the \emph{Petz Rényi relative entropy}~\cite{Petz_1986_quasi_entropy}, whereas under thermal operations it is governed by the \emph{sandwiched Rényi relative entropy}~\cite{Muller_Lennert_2013_on_quantum_renyi, Wilde_2014_strong_converse}. 
Our finding thus uncovers a fundamental gap in asymptotic performance between the mathematically convenient axiomatic free operations and the physically implementable thermodynamic processes, putting into question the suitability of using Gibbs-preserving operations as a faithful thermodynamic model.

The performance gap is particularly prominent in the setting of zero-rate work extraction, which places the focus purely on the quality of work extraction rather than its yield. 
Characterizing this setting completely, we show that while the reliability under Gibbs-preserving operations is governed by a reverse variant of the conventional quantum relative entropy, under thermal operations it is given by the \emph{star divergence}~\cite{Audenaert_alpha_z_2015, Lipka_Bartosik_2024} --- a limit of the so-called reverse sandwiched R\'enyi divergences~\cite{Audenaert_alpha_z_2015} --- which has not found applications in thermodynamic work extraction before.

Altogether, our results provide new operational interpretations of Petz Rényi, sandwiched Rényi, and the star divergences in thermodynamics. 
Notably, we find that even divergence-like quantities that do not satisfy the standard data-processing inequality --- such as sandwiched Rényi divergence with $\alpha<1/2$, or the star divergence --- find use in characterizing the reliability of practical tasks.
We show in particular that both quantities nevertheless satisfy monotonicity under time-translation-covariant channels, thereby identifying symmetry-restricted monotonicity as the operational principle underlying the applications of these quantities.

Quantum thermodynamics can be viewed as a special case of the broader framework of quantum resource theories~\cite{Chitamber_Gour, Gour_2025_book}, which study the ultimate limits of manipulating valuable quantum resources. 
More generally, we show that the correspondence between the precision of resource manipulation and hypothesis testing extends beyond thermodynamics to general quantum resource theories, including tasks such as resource distillation and dilution, and yields bounds and exact characterizations of the corresponding error and strong-converse exponents. 
The general framework, together with the proofs of the main results presented here, is deferred to the Appendix.

\paragraph{Quantum thermodynamics.}\,---\,
We consider a quantum system associated with a Hilbert space $\mH$ and Hamiltonian $H$, in contact with a thermal bath of inverse temperature $\beta$.

To analyze both the amount and the precision of work extraction, we employ the resource-theoretic approach, which has been successful in characterizing the fundamental limitations of various quantum resource manipulation tasks under operational constraints~\cite{Chitamber_Gour, Gour_2025_book}.
In general, a quantum resource theory is specified by identifying the class of quantum channels that can be applied at no cost (called \emph{free operations}) and the sets of states that can be prepared freely (called \emph{free states}). 
In quantum thermodynamics, the set of free states is the singleton $\qty{\tau}$ containing the Gibbs thermal state $\tau=e^{-\beta H}/\Tr[e^{-\beta H}]$, since the thermal state can always be prepared by leaving the system in contact with the thermal bath until it thermalizes.

One standard choice of free operations is the \emph{thermal operations} (TO), which correspond to physically implementable isothermal processes. 
Let $\mD(\mH)$ denote the set of density operators acting on the Hilbert space $\mH$.
A quantum channel $\Lambda:\mD(\mH_A)\to\mD(\mH_B)$ is called a thermal operation if and only if $\Lambda$ is decomposed as (1) appending the thermal state of an ancillary system, (2) applying an energy-conserving unitary, and (3) discarding a subsystem. Mathematically, a thermal operation is a channel which is decomposed as $\Lambda(\rho_A)=\Tr_{A+E-B}[U_{AE}(\rho_A\otimes \tau_E)U_{AE}^{\dagger}]$, where $U_{AE}$ satisfies $[U_{AE},H_A\otimes I_E +I_A\otimes H_E ]=0.$ 
We call a channel realized as a limit of a sequence of thermal operations also a thermal operation.
Thermal operations satisfy two important properties. One is that any thermal operation preserves the thermal state, that is, $\Lambda(\tau_A)=\tau_B$ holds. The other is that thermal operations are time-translation covariant, i.e., it holds that $\Lambda(e^{-iH_At}\rho_A e^{iH_A t})=e^{-iH_Bt}\Lambda(\rho_A)e^{iH_Bt}.$

Since the class of thermal operations is mathematically difficult to describe, one often considers a larger class of operations called the \emph{Gibbs-preserving operations} (GPO)~\cite{Janzing_2000_thermodynamic, Faist_2018_fundamental_work}. Here, the class of Gibbs-preserving operations consists of all quantum channels satisfying $\Lambda(\tau_A)=\tau_B$. 
Although Gibbs-preserving operations have a simple mathematical structure and are easier to analyze, they are known to be strictly more powerful than thermal operations. In fact, they can create and detect energetic coherence from scratch~\cite{Faist2015Gibbs-preserving}, which is prohibited by time-translation covariance, and their implementation by thermal operations may require an infinite amount of coherence as an additional resource~\cite{Tajima_2024_Gibbs-preserving}.

We are now in a position to discuss how to quantify the amount and precision of work extraction.
To quantify the work extracted from a quantum state, we consider an additional quantum system, which we call a \emph{work battery}~\cite{Wang_2019, Gour_role_of_quantum_coherence}. Work battery $X_m$ labeled by a positive number $m>0$ is associated with a 2-dimensional Hilbert space $\mH_{X_m}=\Span\qty{\ket{0}_{X_m}, \ket{1}_{X_m}}$. We take a Hamiltonian $H_{X}=E_{X,0}\dm{0}+E_{X,1}\dm{1}$ with $E_{X,1}-E_{X,0}=\beta^{-1}\log(m-1)$ so that the thermal state of the battery system $\mu_m$ is $\mu_m=\frac{m-1}{m}\ketbra{0}{0}+\frac{1}{m}\ketbra{1}{1}$.
Suppose that we are given a quantum state $\rho\in\mD(\mH)$. If the conversion $\rho\otimes \mu_m\mapsto\ketbra{1}{1}_\xm$ is possible up to some fidelity error $\ve$ under some free operation $\Lambda\in\mbO$, we say that the work $\beta^{-1}\log m$ is extractable from the quantum state $\rho$ up to fidelity error $\ve$.
The one-shot optimal extractable work from $\rho\in\mD(\mH)$ under the class of free operations $\mbO$ is defined as $\beta W^\ve_{\mbO}(\rho):=\log\max \qty{m>0~|~\max_{\Lambda\in\mbO}\bra{1}\Lambda(\rho\otimes \mu_m)\ket{1}\geq 1-\ve}$.

We can also consider the asymptotic (thermodynamic) limit of this quantity by taking multiple copies of the same system. Here, following the standard assumption~\cite{Brandao_resource_theory_of_quantum}, we suppose that there is no correlation between subsystems, and the Hamiltonian of the whole system is described as $H^{\times n}:=\sum_{i=1}^nI^{\otimes (i-1)}\otimes H\otimes I^{\otimes (n-i)}$.
The asymptotic extractable work per copy under the class $\mbO$ of free operations is defined as $\beta W^{\rm asymp}_\mbO(\rho):=\lim_{\ve\to +0}\limsup_{n\to \infty}\frac{1}{n}\beta W^\ve_\mbO(\rho^{\otimes n})$.

Previous studies on work extraction showed that the asymptotic extractable work from i.i.d.\ states is characterized as $\beta W^{\rm asymp}_\mbO(\rho)=D(\rho\|\tau)$ under both thermal operations and Gibbs-preserving operations.
Here, $D(\rho\|\tau):=\Tr[\rho\log \rho-\rho\log \tau]$ is the Umegaki relative entropy~\cite{Umegaki_relative}, which is directly connected to the Helmholtz free energy since $D(\rho\|\tau)$ satisfies $D(\rho\|\tau)=\Tr[\rho H]-\beta^{-1}S(\rho)-F_{\rm eq}$, where $S(\rho):=-\Tr(\rho\log\rho)$ is the von Neumann entropy and $F_{\rm eq}:=-\log\Tr\qty[e^{-\beta H}]/\beta$ is the equilibrium free energy. 
Thus, at the first-order asymptotic level, whether or not we impose time-translation covariance on the available operations does not affect the amount of extractable work. 
The Umegaki relative entropy arises from the tight connection between work extraction and the state-discrimination task known as quantum hypothesis testing.

\paragraph{Quantum hypothesis testing.}\,---\,%
We briefly review standard notions in quantum hypothesis testing, an information-theoretic task of discriminating between two hypotheses: the \emph{null hypothesis} $\rho\in\mD(\mH)$ and the \emph{alternative hypothesis} $\sigma\in\mD(\mH)$. 
Suppose that either $\rho$ or $\sigma$ is given, and our goal is to determine which state is actually given by performing a test described by a POVM $\qty{M, I-M}$. 
Here, $M$ corresponds to the measurement outcome associated with the guess that the given state is $\rho$, while $I-M$ corresponds to the opposite guess.

There are two possible types of error in this task. 
One is to infer that $\sigma$ is given when the true state is $\rho$, called the type I error. 
The other is to infer that $\rho$ is given when the true state is $\sigma$, called the type II error. 
For a test $\qty{M, I-M}$, the probability of type I error is $\Tr[\rho(I-M)]$, and that of type II error is $\Tr[\sigma M]$.

A standard setting is the asymmetric hypothesis testing scenario, where one minimizes the probability of type II error under the constraint that the probability of type I error is no larger than a fixed constant $\ve>0$. 
The corresponding figure of merit, called the hypothesis testing divergence, is defined as %$D^\ve_H(\rho\|\sigma)=-\log \inf_{\substack{0\leq M\leq I\\ \Tr[\rho(I-M)]\leq \ve }}\Tr[\sigma M]$.
$D^\ve_H(\rho\|\sigma)=-\log \inf \{ \Tr[\sigma M] : 0\leq M\leq I, \Tr[\rho(I-M)]\leq \ve \}$.

When multiple copies of $\rho$ or $\sigma$ are available, the discrimination task becomes easier, and the error probabilities decay exponentially with the number of copies. 
In fact, previous works has identified the optimal asymptotic decay rates of these errors, known as Stein's exponent, as the Umegaki relative entropy as~\cite{hiai_1991_proper, Ogawa_2000_strong} $\lim_{n\to\infty}\frac{1}{n}D^\ve_H(\rho^{\otimes n}\|\sigma^{\otimes n})=D(\rho\|\sigma),~~\forall\ve\in(0,1)$.

Another important setting is to characterize the smallest $\ve$ such that the probability of type II error is kept below a prescribed target value. 
In the asymptotic regime, the optimal type I error vanishes when the target decay rate of the type II error is below the optimal rate, whereas it converges to one when the target decay rate exceeds the optimal rate. 
These are known as the error exponent regime and the strong converse exponent regime, respectively.

The exponent governing the decay of the type I error in the former regime is called the error exponent, and is fully characterized by a quantum extension of the Rényi divergence, called the Petz Rényi divergence~\cite{Petz_1986_quasi_entropy} $\pD_\alpha(\rho\|\sigma):=\frac{1}{\alpha-1}\log \Tr[\rho^\alpha\sigma^{1-\alpha}]$~\cite{nagaoka_2006_converse_theorem, Hayashi_2007_error_exponent}. Another common quantum extension of Rényi divergence that plays an important role in our work is called the sandwiched Rényi divergence~\cite{Muller_Lennert_2013_on_quantum_renyi, Wilde_2014_strong_converse} $\sD_\alpha(\rho\|\sigma):=\frac{1}{\alpha-1}\log \Tr[(\sigma^{\frac{1-\alpha}{2\alpha}}\rho\sigma^{\frac{1-\alpha}{2\alpha}})^\alpha]$. Although it also finds use in characterizing quantum hypothesis testing, it typically emerges in the complementary regime of strong converse exponents~\cite{Mosonyi_Ogawa_2014,Hayashi_2016} and not in the analysis of error decay.
The fact that there is seemingly not a single type of Rényi divergence that governs the asymptotic behavior of quantum systems has, in addition to causing considerable technical difficulties~\cite{ogawa_2004,Hayashi_2007_error_exponent}, led to persistent conceptual unease~\cite{tomamichel_2025}.

\paragraph{One-shot performance of work extraction.}\,---\,%
We now characterize the optimal precision of work extraction, quantified by the fidelity error with respect to the target work state.
Suppose that we are given a quantum state $\rho$. 
For a target work $W$ (equivalently, the dimensionless quantity $\beta W$), our goal is to minimize the fidelity error of the final state. 
The one-shot optimal error of work extraction from $\rho$ with target work $W$ under a class $\mbO$ of free operations is defined as the infidelity between the output state and the target excited battery state
\bal
\mE_{\mbO}(\rho; \beta W):=1-\max_{\Lambda\in\mbO}F(\Lambda(\rho\otimes \mu_{\beta W}),\ketbra{1}{1}_{X_{\beta W}}).
\eal

We begin by characterizing this one-shot optimal error.
The following result shows that this optimization can be reduced to a hypothesis-testing problem between the input state $\rho$ and the thermal state $\tau$.

 \begin{pro}~\label{pro: one-shot error of work extraction}
    The one-shot optimal error of work extraction from $\rho$ with target amount $W$ of work under Gibbs-preserving operations and thermal operations is given by
    \bal
    \mE_{{\rm GPO}}(\rho; \beta W)&=\min\qty{\ve~|~\beta W\leq D^\ve_H(\rho\|\tau)},\\
    \mE_{{\rm TO}}(\rho; \beta W)&=\min\qty{\ve~|~\beta W\leq D^\ve_H(\mP_{\tau}(\rho)\|\tau)}.
    \eal
    Here, $\mP_\tau$ is the pinching channel~\cite{Hayashi_2002_optimal_sequence} defined as $\mP_\tau(\cdot):=\sum_{i}\Pi_i(\cdot)\Pi_i$, where $\tau=\sum_i\lambda_i\Pi_i$ is the spectral decomposition of $\tau$.
\end{pro}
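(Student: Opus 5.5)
The plan is to prove the Gibbs-preserving characterization first, by matching an achievability construction with a converse. The thermal-operation case then follows by reducing to time-translation-covariant Gibbs-preserving maps, which effectively see only the pinched state $\mP_\tau(\rho)$. Throughout, write $m=e^{\beta W}$, so that $\mu_m=\frac{m-1}{m}\dm{0}+\frac1m\dm{1}$. Since the target $\dm{1}$ is pure, the fidelity is $F(\Lambda(\rho\otimes\mu_m),\dm{1})=\bra{1}\Lambda(\rho\otimes\mu_m)\ket{1}$. The error is therefore $1$ minus the largest achievable population of $\ket{1}$.

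\emph{Converse for GPO.} Fix a channel $\Lambda$ with $\Lambda(\tau\otimes\mu_m)=\mu_m$ and success probability $1-\ve$. Set $M:=\Lambda^\dagger(\dm{1})$, which satisfies $0\le M\le I$. Then $\Tr[(\rho\otimes\mu_m)M]=1-\ve$ and $\Tr[(\tau\otimes\mu_m)M]=1/m$. Define the reduced test $M_\rho:=\Tr_X[(I\otimes\mu_m)M]$. It is a valid test on the system because $\mu_m$ is a state and $0\le M\le I$. This test has type I error $\ve$ against $\rho$ and type II error $1/m$ against $\tau$. Hence $D^\ve_H(\rho\|\tau)\ge\log m=\beta W$, and since the error of any admissible $\Lambda$ obeys this constraint, $\mE_{\rm GPO}\ge\min\{\ve\,|\,\beta W\le D^\ve_H(\rho\|\tau)\}$.

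\emph{Achievability for GPO.} Take an optimal test $M$ with $\Tr[\rho(I-M)]\le\ve$ and $\Tr[\tau M]=e^{-D_H^\ve}\le1/m$. Build a measure-and-prepare channel that discards the battery input, measures $\{M,I-M\}$, and outputs $\dm{1}$ on outcome $M$ and a suitable state $\omega$ on the other outcome. The battery must be kept in the input (and traced out) so that the channel has the required input space $\rho\otimes\mu_m$.

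The main obstacle is the Gibbs-preservation condition, which reads $t\dm{1}+(1-t)\omega=\mu_m$ with $t=\Tr[\tau M]\le 1/m$. It is solved by $\omega=(\mu_m-t\dm1)/(1-t)$, which is a valid state precisely because $t\le1/m$. If equality with $1/m$ is needed, the test can be mixed with a trivial one without increasing the type I error. The output then has $\ket1$-population at least $\Tr[\rho M]\ge1-\ve$. The minimum in the statement is attained because $D^\ve_H$ is upper semicontinuous and nonincreasing in $\ve$, so the set of admissible $\ve$ is closed.

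\emph{Thermal operations.} For the converse, use that TO are GPO and also time-translation covariant. Consequently $\Lambda^\dagger(\dm1)$ commutes with the total Hamiltonian, and $\dm1$ is an energy eigenprojector. Because the battery's $M_\rho$ is computed with the diagonal $\mu_m$, the reduced test commutes with $H$, hence with $\tau$. Therefore $\Tr[\rho M_\rho]=\Tr[\mP_\tau(\rho)M_\rho]$, and the GPO converse gives the bound with $\mP_\tau(\rho)$. For achievability, note that an optimal test for the commuting pair $(\mP_\tau(\rho),\tau)$ can be chosen block diagonal in the eigenspaces of $\tau$. The measure-and-prepare map built above is then a classical Gibbs-stochastic processing of energy-incoherent data, applied after the covariant dephasing $\mP_\tau$. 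The remaining and hardest step is showing that this map is a thermal operation. I would invoke the known fact that on energy-incoherent (quasi-classical) inputs and outputs, thermal operations realize every Gibbs-stochastic matrix, with closure under limits covering the approximate realizations; the dephasing $\mP_\tau$ is itself a thermal operation. This yields the matching upper bound on $\mE_{\rm TO}$.
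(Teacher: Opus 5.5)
Your proposal is correct. For GPO it follows the paper's proof. Your reduced test $M_\rho=\Tr_X[(I\otimes\mu_m)\Lambda^\dagger(\dm{1})]$ is the explicit content of the paper's data-processing step $D^{\ve}_H(\rho\|\tau)\ge D^{\ve}_H(\Lambda(\rho\otimes\mu_m)\|\mu_m)\ge\log m$. Your measure-and-prepare channel is the paper's, made slightly more robust: the paper asserts $D^{\ve^*}_H(\rho\|\tau)=\log m$ so that it can output $\dm{0}$ on the rejecting outcome, whereas your $\omega=(\mu_m-t\dm{1})/(1-t)$ handles $\Tr[\tau M]<1/m$ without needing that equality. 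Your TO achievability is also the paper's idea: dephase, then use that classical Gibbs-stochastic processing is realizable by thermal operations.

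Where you genuinely differ is the TO converse. The paper passes through Gibbs-preserving covariant operations and cites $\mE_{\rm GPC}(\rho;\beta W)=\mE_{\rm GPO}(\mP_\tau(\rho);\beta W)$ from the literature. You instead derive it directly. Covariance forces $\Lambda^\dagger(\dm{1})$ to commute with $H\otimes I+I\otimes H_X$, and because $[\mu_m,H_X]=0$ the reduced test commutes with $H$. It therefore commutes with every spectral projector of $\tau$, giving $\Tr[\rho M_\rho]=\Tr[\mP_\tau(\rho)M_\rho]$. This is self-contained and applies to every time-translation-covariant Gibbs-preserving map, so it reproves the cited statement rather than relying on it.

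A few small points to tidy:
\begin{enumerate}
\item $D^\ve_H$ is nondecreasing, not nonincreasing, in $\ve$. Attainment of the minimum comes from upper semicontinuity (compactness of the feasible tests), so your conclusion is unaffected.
\item When $\tau$ has degenerate eigenvalues, a test that is merely block diagonal is not yet classical data. Choose $M$ diagonal in a joint eigenbasis of $\mP_\tau(\rho)$ and $\tau$. Then precede the map by the complete dephasing in that basis, which is a mixture of energy-conserving phase unitaries and hence a TO. This is exactly the paper's $\Delta$.
\item For the classical step you only need the state-level statement: a Gibbs-stochastic matrix taking the dephased populations to the target battery populations implies a TO conversion, since only this one input matters.
\item With the paper's base-2 logarithms, $m=2^{\beta W}$ rather than $e^{\beta W}$.
\end{enumerate}
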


Proposition~\ref{pro: one-shot error of work extraction} shows that the optimal error of work extraction is fully characterized by the smallest error threshold $\ve$ satisfying the hypothesis-testing constraint. 
An immediate consequence is that the more distinguishable $\rho$ and $\tau$ are, the more reliably one can extract work from $\rho$.

We note that connections between work extraction and one-shot entropies such as $D^\ve_H$ have a long history in quantum information~\cite{dahlsten_2011,horodecki_2013,yungerhalpern_2016}. A key aspect that we rely on is not merely an approximate, but an \emph{exact} correspondence between work extraction and hypothesis testing --- for Gibbs-preserving operations, this is a consequence of the prior findings of~\cite{Wang_2019,Buscemi2019information,Regula_benchmarking}; the expression in the second equality of Proposition~\ref{pro: one-shot error of work extraction} also appeared in~\cite{Gour_role_of_quantum_coherence} in the study of Gibbs-preserving covariant operations, an intermediate class between GPO and TO.

Since thermal operations are more restrictive than Gibbs-preserving operations, the achievable reliability of work extraction is expected to decrease. 
Proposition~\ref{pro: one-shot error of work extraction} shows that this loss of reliability is precisely captured by the appearance of the pinching channel $\mP_\tau$. 
Indeed, by the data-processing inequality of the hypothesis testing divergence~\cite{Wang_2012_oneshot}, it follows that $\mE_{{\rm GPO}}(\rho; \beta W)\leq \mE_{{\rm GPO}}(\mP_\tau(\rho); \beta W)=\mE_{{\rm TO}}(\rho; \beta W)$.
That is, thermal operations effectively replace the input state by its dephased version in the energy eigenbasis, and the resulting loss of distinguishability directly limits the achievable precision.

\paragraph{Error exponent of work extraction.---}
We characterize the asymptotic behavior of the optimal error of work extraction at a fixed target work extraction rate. 
Our main focus is on the error exponent of work extraction, which quantifies the exponential rate at which the optimal error of work extraction vanishes.
For a fixed target work extraction rate $r$, it is defined as 
\begin{equation}
    B_{\mbO}(\rho; r):=\sup\left\{
    \begin{aligned}
        &\liminf_{n\to\infty}-\frac{1}{n}\log \mE_{\mbO}(\rho^{\otimes n},\beta W_n)\\
&~|~\liminf_{n\to\infty}\frac{1}{n}\beta W_n\geq r
    \end{aligned}\right\}.
\end{equation}

Proposition~\ref{pro: one-shot error of work extraction} allows us to directly relate these quantities to the corresponding error exponent in quantum hypothesis testing. 
In particular, the error exponent of work extraction is associated with the largest $c\geq 0$ such that the type I error decays exponentially as $2^{-cn}$, while the type II error decays no slower than $2^{-n\beta W}$.

In particular, when thermal operations are taken as the class of free operations, the problem reduces to hypothesis testing between the sequence of pinched states $\qty{\mP_{\tau^{\otimes n}}(\rho^{\otimes n})}_{n\in\mbN}$ and the thermal states $\qty{\tau^{\otimes n}}_{n\in\mbN}$. 
Exploiting the seminal derivations of the error exponents of quantum hypothesis testing~\cite{audenaert_2008,nagaoka_2006_converse_theorem,Hayashi_2007_error_exponent} and revisiting their extensions to settings with additional symmetries~\cite{Hiai_2008, Lipka_Bartosik_2024}, we obtain the following characterization.

\begin{thm}\label{thm: error/strong converse of work extraction}
The reliability function (error exponent) of work extraction from a quantum state $\rho\in\mD(\mH)$ with the target work extraction rate $r$ under the classes of Gibbs-preserving operations and thermal operations is characterized as
    \bal\label{eq: error exponent of work extraction_main}
    B_{{\rm GPO}}(\rho; r)&=\sup_{0<\alpha<1}\frac{\alpha-1}{\alpha}\qty(r-\pD_\alpha(\rho\|\tau))\\
    B_{{\rm TO}}(\rho; r)&=\sup_{0<\alpha<1}\frac{\alpha-1}{\alpha}\qty(r-\sD_\alpha(\rho\|\tau)),
    \eal
and in general it holds that $B_{{\rm GPO}}(\rho; r) > B_{{\rm TO}}(\rho; r)$ for rates $r < D(\rho\|\tau)$, unless $\rho$ commutes with the thermal state $\tau$.
\end{thm}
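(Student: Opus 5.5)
By Proposition~\ref{pro: one-shot error of work extraction}, the work-extraction error is exactly the optimal type I error of a hypothesis test whose type II error is at most $2^{-\beta W}$. Hence $B_{\mbO}(\rho;r)$ equals the Hoeffding-type error exponent of testing $\rho^{\otimes n}$ (GPO), respectively $\mP_{\tau^{\otimes n}}(\rho^{\otimes n})$ (TO), against $\tau^{\otimes n}$, at type II rate $r$.

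The monotone reparametrization in $\beta W_n$ and the $\liminf$ over sequences need a short sandwich argument. Since $\mE$ is nonincreasing in the error threshold and $D^\ve_H$ is monotone in $\ve$, the optimum is attained by taking $\beta W_n = nr$ up to $o(n)$ corrections. These corrections are harmless because the hypothesis-testing exponent is continuous in $r$ on the relevant range.

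\textbf{GPO case.} I would invoke the quantum Hoeffding bound: achievability by Audenaert et al.\ and Hayashi, converse by Nagaoka. These give
\[
\lim_n -\tfrac1n\log \min\{\ve : nr\le D^\ve_H(\rho^{\otimes n}\|\tau^{\otimes n})\} = \sup_{0<\alpha<1}\tfrac{\alpha-1}{\alpha}\bigl(r-\pD_\alpha(\rho\|\tau)\bigr).
\]
This yields the first line directly.

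\textbf{TO case.} This is the main step. Let $v_n$ be the number of distinct eigenvalues of $\tau^{\otimes n}$, which is $\mathrm{poly}(n)$.

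\emph{Achievability.} Pinching gives $\rho^{\otimes n}\le v_n \mP_{\tau^{\otimes n}}(\rho^{\otimes n})$, so $\pD_\alpha$ and $\sD_\alpha$ of the pinched pair agree with $n\sD_\alpha(\rho\|\tau)$ up to $O(\log n)$. For commuting pairs Petz equals sandwiched, and $\sD_\alpha$ is additive on tensor products. I would apply the one-shot Hoeffding achievability, namely Audenaert's inequality $\Tr[\rho^{\alpha}\sigma^{1-\alpha}]\ge$ (error sum) applied to the commuting pair $\mP(\rho^{\otimes n}),\tau^{\otimes n}$. This gives type I error at most $2^{-n\frac{\alpha-1}{\alpha}(r-\sD_\alpha)+o(n)}$ at type II rate $r$.

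\emph{Converse.} Nagaoka's argument gives, for any test, a one-shot lower bound on the type I error in terms of $\pD_\alpha$ of the pinched pair, which is again a commuting pair. I then need
\[
\tfrac1n \pD_\alpha\bigl(\mP_{\tau^{\otimes n}}(\rho^{\otimes n})\,\|\,\tau^{\otimes n}\bigr)\to \sD_\alpha(\rho\|\tau) \quad\text{for } \alpha\in(0,1).
\]
This is the step I expect to be hardest, because for $\alpha<1/2$ data processing fails and Hiai--Mosonyi--Ogawa-type limits must be redone. I would attempt it as follows. First, the upper bound follows from $\mP(\rho^{\otimes n})\le \rho^{\otimes n}$-type operator comparisons, namely $\mP(\rho^{\otimes n})\ge v_n^{-1}\rho^{\otimes n}$, combined with operator monotonicity of $x^\alpha$. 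Second, the lower bound follows from the identity $\Tr[(\tau^{\frac{1-\alpha}{2\alpha}}\mP(\rho)\tau^{\frac{1-\alpha}{2\alpha}})^\alpha]=\Tr[\mP(\tau^{\frac{1-\alpha}{2\alpha}}\rho\tau^{\frac{1-\alpha}{2\alpha}})^\alpha]$, which holds because $\tau$ commutes with the pinching, together with the pinching inequality applied to concave $x^\alpha$. Equivalently, I would use the monotonicity of $\sD_\alpha$ under covariant channels, as advertised in the introduction, following Lipka-Bartosik et al. The supremum over $\alpha$ and the limit can be exchanged by convexity and monotonicity in $\alpha$.

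\textbf{Strict separation.} For $r<D(\rho\|\tau)$ the supremum is attained at some $\alpha^*\in(0,1)$ with positive value. By the Araki--Lieb--Thirring inequality, $\sD_\alpha(\rho\|\tau)<\pD_\alpha(\rho\|\tau)$ strictly for $\alpha\in(0,1)$ whenever $[\rho,\tau]\ne0$, using the known equality condition. The TO optimizer then gives a strictly larger GPO value, so $B_{\rm GPO}(\rho;r)>B_{\rm TO}(\rho;r)$.
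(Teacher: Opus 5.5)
Your outline matches the paper's. You reduce via Proposition~\ref{pro: one-shot error of work extraction} to Hoeffding exponents for $\rho^{\otimes n}$, resp.\ $\mP_{\tau^{\otimes n}}(\rho^{\otimes n})$, against $\tau^{\otimes n}$. You quote the quantum Hoeffding bound for GPO, and you get strictness from the Araki--Lieb--Thirring equality case at the TO maximizer.

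\textbf{The gap: the converse for TO.} Nagaoka's argument does not give a one-shot lower bound on the type I error ``in terms of $\pD_\alpha$'' for each $\alpha$, and no per-$\alpha$ bound can produce the answer. A bound $\ve_n\gtrsim 2^{-n\frac{1-\alpha}{\alpha}(\sD_\alpha(\rho\|\tau)-r)}$ valid for every $\alpha$ would cap the exponent by the infimum over $\alpha$ of that expression, which is $\le 0$. Per-$\alpha$ bounds are what achievability gives; a converse has to work at the optimal tilt.

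Nagaoka's one-shot input is the Nussbaum--Szkola-type inequality $\Tr[\rho_n(I-T)]+e^{\lambda}\Tr[\sigma_nT]\ge\frac12\sum_x\min\{P_n(x),e^{\lambda}Q_n(x)\}$. For a commuting pair, $P_n,Q_n$ are the joint eigenvalue distributions. Extracting the exponent from this requires a large-deviation \emph{lower} bound for the log-likelihood ratio. For i.i.d.\ pairs this is Cram\'er's theorem. However, $\mP_{\tau^{\otimes n}}(\rho^{\otimes n})$ is not i.i.d.: the pinching by $\tau^{\otimes n}$ is coarser than $\mP_\tau^{\otimes n}$ and does not factorize.

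You therefore need a G\"artner--Ellis argument. It requires the limit $\spsi(\alpha)=\lim_n\frac1n\log\Tr[\mP_{\tau^{\otimes n}}(\rho^{\otimes n})^{\alpha}(\tau^{\otimes n})^{1-\alpha}]=(\alpha-1)\sD_\alpha(\rho\|\tau)$ to be differentiable on $(0,1)$. Pointwise convergence plus convexity or monotonicity in $\alpha$ only gives the large-deviation upper bound, i.e.\ achievability. This is exactly what the paper imports: differentiability of $\spsi$ (Mosonyi--Ogawa, Lemma~III.1), fed into the Hoeffding theorem for general sequences of Hiai--Mosonyi--Ogawa (Theorem~4.8).

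\textbf{Two smaller points.} First, consider your proof of $\frac1n\pD_\alpha(\mP_{\tau^{\otimes n}}(\rho^{\otimes n})\|\tau^{\otimes n})\to\sD_\alpha(\rho\|\tau)$, with $X=(\tau^{\otimes n})^{\frac{1-\alpha}{2\alpha}}\rho^{\otimes n}(\tau^{\otimes n})^{\frac{1-\alpha}{2\alpha}}$.
\begin{itemize}
\item The pinching inequality with operator monotonicity of $x^\alpha$ gives $\Tr[\mP(X)^\alpha]\ge v_n^{-\alpha}\Tr[X^\alpha]$.
\item Concavity of $x^\alpha$ under the unital pinching gives $\Tr[\mP(X)^\alpha]\ge\Tr[X^\alpha]$.
\end{itemize}
Both yield only the \emph{upper} bound on the pinched quantity, whereas your achievability needs the lower one. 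The lower bound follows from $\Tr[\mP(X)^\alpha]=\Tr[\mP(X)^{\alpha-1}X]\le v_n^{1-\alpha}\Tr[X^\alpha]$, using operator anti-monotonicity of $x^{\alpha-1}$. Alternatively, simply cite Eq.~\eqref{eq: limit of pinched sandwich}. Appealing to covariant monotonicity of $\sD_\alpha$ is circular, since the paper derives it from this very limit, and it yields the same (upper) direction anyway.

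Second, the TO supremum is attained in $(0,1)$ only for $r>\sD_0(\rho\|\tau)$. For $r<\sD_0(\rho\|\tau)$, which is a nonempty range whenever $\rho$ is rank-deficient, both exponents equal $+\infty$ and there is no strict inequality. Your separation argument, which is otherwise the paper's, must therefore be restricted to $r\in(\sD_0(\rho\|\tau),D(\rho\|\tau))$.
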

Previous studies of work extraction showed that the presence or absence of time-translation covariance does not affect the optimal asymptotic work extraction rate~\cite{Brandao_resource_theory_of_quantum, Gour_role_of_quantum_coherence}. 
This has led to the expectation that time-translation covariance is not essential in asymptotic work extraction. 
However, Eq.~\eqref{eq: error exponent of work extraction_main} in Theorem~\ref{thm: error/strong converse of work extraction} shows that this intuition breaks down at the level of precision.

Specifically, the presence or absence of time-translation covariance appears explicitly as the difference between the underlying divergences governing the error exponent. 
In particular, the achievable error exponent is strictly smaller under time-translation-covariant operations when $[\rho,\tau]=0$, which follows directly from the equality condition of the Araki--Lieb--Thirring inequality~\cite{araki_inequality_1990, LiebThirring} studied in Ref.~\cite{Friedland1994}

We can also find an example of the input state $\rho$ and the target work extraction rate $r>0$ such that $B_{{\rm GPO},{\rm ext}}(\qty{\rho^{\otimes n}}_{n\in\mbN}; r)$ diverges while $B_{{\rm TO},{\rm ext}}(\qty{\rho^{\otimes n}}_{n\in\mbN}; r)$ remains finite, which implies that one can achieve the work extraction from quantum states $\qty{\rho^{\otimes n}}_{n\in\mbN}$ of rate $r$ exactly under Gibbs-preserving operations, but not under thermal operations.
Further discussion of these examples is deferred to the Appendix.

\paragraph{Fundamental  bound on quality of work extraction.}\,---\,%
We now turn to an extreme regime of work extraction. 
So far, we have characterized the error when the work extraction rate is constrained to be no smaller than a fixed threshold $r$. 
Here, we instead consider the question of how reliably work extraction can be achieved if the exact rate is not of importance --- this can be understood as extracting an arbitrarily large but constant amount of work from $\rho^{\otimes n}$ asymptotically, and is conventionally termed the \emph{zero-rate} regime. A similar setting was recently studied in the context of distillation of quantum entanglement~\cite{lami_2024_asymptotic_quantification}. 
The zero-rate error exponent of work extraction is formally defined as
\bal
B^{\rm Z. R.}_{\mbO}(\rho):=\lim_{W\to\infty}\liminf_{n\to\infty}-\frac{1}{n}\log\mE_{\mbO}(\rho^{\otimes n}; \beta W)
\eal
The quantity can be evaluated by using the connection with quantum hypothesis testing established in Proposition~\ref{pro: one-shot error of work extraction}, which leads to the following. 

\begin{thm}\label{thm: zero-rate error}
    The zero rate error exponent of work extraction under the classes of Gibbs-preserving operations and thermal operations is characterized as
    \bal\label{eq: zero-rate error}
    B^{\rm Z. R.}_{{\rm GPO}}(\rho)=D(\tau\|\rho),~~
    B^{\rm Z. R.}_{{\rm TO}}(\rho)=D^\star(\tau\|\rho).
    \eal
    Here, $D^\star(\sigma\|\rho)$ is the star divergence defined as $D^\star(\sigma\|\rho):=\lim_{\alpha\to 1}\lim_{n\to\infty}\frac{1}{n}D_\alpha(\sigma^{\otimes n}\|\mP_{\sigma^{\otimes n}}(\rho^{\otimes n}))$~\cite{Audenaert_alpha_z_2015, Lipka_Bartosik_2024}.
\end{thm}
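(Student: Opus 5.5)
By Proposition~\ref{pro: one-shot error of work extraction}, the task reduces to hypothesis testing. For fixed $W$ we have $\mE_{\rm GPO}(\rho^{\otimes n};\beta W)=\min\{\ve : \beta W\le D_H^\ve(\rho^{\otimes n}\|\tau^{\otimes n})\}$. This is the optimal type~I error $\Tr[\rho^{\otimes n}(I-M)]$ subject to the type~II constraint $\Tr[\tau^{\otimes n}M]\le 2^{-\beta W}$, where $2^{-\beta W}$ is a constant independent of $n$. Swapping the roles of the two hypotheses, this is exactly the Stein-type problem in which $\tau^{\otimes n}$ is the null hypothesis and $\rho^{\otimes n}$ is the alternative. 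The test $I-M$ must accept $\tau^{\otimes n}$ with probability at least $1-2^{-\beta W}$, and we ask for the optimal exponent of the error probability on $\rho^{\otimes n}$. In formulas,
\begin{equation*}
-\log\mE_{\rm GPO}(\rho^{\otimes n};\beta W)=D_H^{\delta}(\tau^{\otimes n}\|\rho^{\otimes n}),\qquad \delta=2^{-\beta W}.
\end{equation*}
For TO the same identity holds with $\rho^{\otimes n}$ replaced by $\mP_{\tau^{\otimes n}}(\rho^{\otimes n})$. I would first check this swap carefully, including the handling of the minimum versus the infimum and of the boundary case of equality.

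For GPO, quantum Stein's lemma with its strong converse~\cite{hiai_1991_proper, Ogawa_2000_strong} gives $\lim_n \frac1n D_H^\delta(\tau^{\otimes n}\|\rho^{\otimes n})=D(\tau\|\rho)$ for every $\delta\in(0,1)$. The limit is therefore independent of $W$, and taking $W\to\infty$ yields $B^{\rm Z.R.}_{\rm GPO}(\rho)=D(\tau\|\rho)$. This also covers the case where $\supp\tau\not\subseteq\supp\rho$ and both sides are infinite.

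For TO, the alternative $\rho_n:=\mP_{\tau^{\otimes n}}(\rho^{\otimes n})$ is no longer i.i.d. It does, however, commute with $\tau^{\otimes n}$, it is permutation invariant, and the number of distinct eigenvalues of $\tau^{\otimes n}$ grows only polynomially in $n$. My plan is to prove
\begin{equation*}
\lim_n\tfrac1n D_H^\delta(\tau^{\otimes n}\|\rho_n)=D^\star(\tau\|\rho)
\end{equation*}
in two directions.

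\emph{Achievability.} Because $\tau^{\otimes n}$ and $\rho_n$ commute, the problem for each $n$ is a classical hypothesis test. The Neyman--Pearson/Chernoff-type bound gives $D_H^\delta(\tau^{\otimes n}\|\rho_n)\ge D_\alpha(\tau^{\otimes n}\|\rho_n)-\frac{\alpha}{1-\alpha}\log\frac1\delta$ for $\alpha<1$, where $D_\alpha$ is classical here because the two states commute. Dividing by $n$, letting $n\to\infty$, and then letting $\alpha\to1$ produces $D^\star$. The limit over $n$ exists by superadditivity.

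\emph{Converse.} For the strong converse I would use the bound $D_H^\delta\le D_\alpha+\frac{\alpha}{\alpha-1}\log\frac{1}{1-\delta}$ for $\alpha>1$. This gives $\lim_{\alpha\to1^+}\lim_n\frac1n D_\alpha(\tau^{\otimes n}\|\rho_n)$. To match it with the achievability side, I need to show that the regularized quantity $\phi(\alpha):=\lim_n\frac1n D_\alpha(\tau^{\otimes n}\|\rho_n)$ is continuous at $\alpha=1$ from both sides.

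This continuity step is the main obstacle. I would handle it by identifying $\phi(\alpha)$ with a reverse sandwiched-type Rényi quantity, following~\cite{Audenaert_alpha_z_2015, Lipka_Bartosik_2024}. The approach is to use pinching inequalities, namely $\rho^{\otimes n}\le \mathrm{poly}(n)\,\rho_n$ together with the approximate reverse bound, to get a closed single-letter formula. I expect this to take the form $\frac1{\alpha-1}\log\Tr[(\tau^{\frac{\alpha}{2(1-\alpha)}}\rho\,\tau^{\frac{\alpha}{2(1-\alpha)}})^{1-\alpha}]$, up to the exact exponents. Such a formula is analytic in $\alpha$ near $1$, so continuity follows by convexity of $\alpha\mapsto(\alpha-1)\phi(\alpha)$. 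If a closed form proves elusive, I would instead argue directly that $(\alpha-1)\phi(\alpha)$ is a pointwise limit of convex functions, hence convex and finite near $1$ when $\supp\tau\subseteq\supp\rho$, and therefore continuous there; the infinite case would be treated separately. Once continuity is established, the two bounds coincide and give $B^{\rm Z.R.}_{\rm TO}(\rho)=D^\star(\tau\|\rho)$ independently of $W$.
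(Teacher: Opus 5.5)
Your overall route is the paper's. You swap the hypotheses via Proposition~\ref{pro: one-shot error of work extraction} so that $-\log\mathcal{E}$ becomes $D_H^{2^{-\beta W}}(\tau^{\otimes n}\|\cdot)$, you use Stein's lemma for GPO, and for TO you squeeze $\frac1n D_H^\delta(\tau^{\otimes n}\|\rho_n)$ between the $\alpha<1$ bound and the $\alpha>1$ bound before sending $\alpha\to1^{\mp}$. The gap is in the step you yourself single out: neither of your two arguments for continuity at $\alpha=1$ works.

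First, the single-letter formula you expect is right for $\alpha\in(0,1)$. It equals $\frac{\alpha}{1-\alpha}\widetilde{D}_{1-\alpha}(\rho\|\tau)$, which the paper also uses in proving Proposition~\ref{pro: DPI of sandwiched}. However, it is not analytic at $\alpha=1$. The power $\frac{\alpha}{2(1-\alpha)}$ on $\tau$ blows up while the outer power $1-\alpha$ vanishes. For full-rank $\rho$, where $\widetilde{D}_0(\rho\|\tau)=0$, the limit is a $0/0$ derivative of $\beta\mapsto\widetilde{D}_\beta$ at $\beta=0$, which is precisely the unwieldy Audenaert--Datta limit. For $\alpha>1$ the formula would also need the pinching identity $\lim_n\frac1n\widetilde{D}_\beta(\mathcal{P}_{\tau^{\otimes n}}(\rho^{\otimes n})\|\tau^{\otimes n})=\widetilde{D}_\beta(\rho\|\tau)$ at negative order $\beta=1-\alpha$. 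The data-processing-plus-pinching-inequality argument does not provide this.

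Second, convexity and finiteness of $\psi(\alpha):=(\alpha-1)\phi(\alpha)$ near $1$ give continuity of $\psi$, which is not what you need. Since $\psi(1)=0$, your two bounds are $\lim_{\alpha\to1^-}\phi(\alpha)=\psi'_-(1)$ and $\lim_{\alpha\to1^+}\phi(\alpha)=\psi'_+(1)$. They coincide exactly when $\psi$ is \emph{differentiable} at $1$. Convexity only yields $\psi'_-(1)\le\psi'_+(1)$, and pointwise limits of smooth convex cumulant generating functions can develop kinks. For example, $\frac1n\log\bigl((1-e^{-\sqrt n})e^{nat}+e^{-\sqrt n}e^{nbt}\bigr)\to\max(at,bt)$. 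In such a case your sandwich leaves a genuine interval.

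The paper does not prove this step either. It takes the existence of the two-sided limit $\alpha\to1$ as part of the definition of $D^\star$, relying on Audenaert--Datta and on Lipka-Bartosik et al.\ (whose Lemma~5 states the pinched Stein limit). With that, both one-sided bounds equal $D^\star(\tau\|\rho)$ immediately, and you should do the same. A self-contained proof would have to show differentiability of the regularized $\psi$ at $\alpha=1$, equivalently concentration of the normalized log-likelihood ratio of the pinched problem at a single value. That needs the specific structure of $\mathcal{P}_{\tau^{\otimes n}}(\rho^{\otimes n})$, not convexity alone.

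The remaining steps are fine: the hypothesis swap, Stein's lemma for GPO, and superadditivity in $n$. The non-full-rank case is also fine, since there the $\alpha<1$ bound already diverges because $\widetilde{D}_0(\rho\|\tau)>0$.
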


From the data-processing inequality of the Petz and sandwiched Rényi divergences, it follows that $D(\tau\|\rho)\geq D^\star(\tau\|\rho)$; the two are, however, different in general, and Theorem~\ref{thm: zero-rate error} shows that their difference exactly quantifying the extent of the operational gap between Gibbs-preserving and thermal operations in the zero-rate regime of reliable work extraction.

Moreover, Theorem~\ref{thm: zero-rate error} provides a new operational interpretation of the star divergence as the fundamental quantity governing the ultimate precision of zero-rate work extraction.
Interestingly, due to Ref.~\cite[Theorem 2]{Audenaert_alpha_z_2015}, $B^{\rm Z. R.}_{{\rm GPO}}(\rho)$ remains finite whenever $B^{\rm Z. R.}_{{\rm TO}}(\rho)$ is finite, and $B^{\rm Z. R.}_{{\rm TO}}(\rho)$ diverges whenever $B^{\rm Z. R.}_{{\rm GPO}}(\rho)$ diverges.

\paragraph{Data-processing inequality.}\,---\,%
So far, we have characterized the error exponents of work extraction. 
Remarkably, the sandwiched Rényi divergence of order $\alpha\in(0,\frac{1}{2})$ appearing in Eq.~\eqref{eq: error exponent of work extraction_main}, as well as the star divergence appearing in Eq.~\eqref{eq: zero-rate error}, do not satisfy the data-processing inequality for general channels~\cite{Berta_2017_on_variational}, which is generally regarded as a fundamental requirement for divergences to admit an operational interpretation. 
Despite this, we show that both quantities satisfy the data-processing inequality under an appropriate covariance condition on the channel.

\begin{pro}\label{pro: DPI of sandwiched}
    Let $\sigma\in\mD(\mH)$ be a quantum state and $\mP_\sigma$ be a pinching channel corresponding to $\sigma$.
    Furthermore, let $\Lambda:\mD(\mH_A)\to\mD(\mH_B)$ be a channel satisfying $\mP_{\Lambda(\sigma)^{\otimes n}}\circ \Lambda^{\otimes n}\circ\mP_{\sigma^{\otimes n}}=\mP_{\Lambda(\sigma)^{\otimes n}}\circ \Lambda^{\otimes n}, ~~\forall n\in\mbN$.
    Then, for any quantum state $\rho\in\mD(\mH)$, we have
    \bal
    \sD_\alpha(\rho\|\sigma)&\geq \sD_\alpha(\Lambda(\rho)\|\Lambda(\sigma)),~\forall \alpha \in [0,\infty],\\
    D^\star(\sigma\|\rho)&\geq D^\star(\Lambda(\sigma)\|\Lambda(\rho)).
    \eal
\end{pro}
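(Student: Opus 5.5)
The plan is to express both quantities as regularized, pinched commutative divergences. Once they are in that form, the covariance condition reduces both inequalities to the classical data-processing inequality.

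\emph{Step 1: regularized pinched expressions.} For the sandwiched divergence, I would use the standard pinching asymptotics (Hayashi's pinching inequality together with the fact that the number of distinct eigenvalues of $\sigma^{\otimes n}$ grows only polynomially in $n$). This gives, for all $\alpha\in[0,\infty]$,
\begin{equation}
\sD_\alpha(\rho\|\sigma)=\lim_{n\to\infty}\frac{1}{n}D_\alpha\bigl(\mP_{\sigma^{\otimes n}}(\rho^{\otimes n})\big\|\sigma^{\otimes n}\bigr).
\end{equation}
Here $D_\alpha$ is the classical R\'enyi divergence; it applies because the two arguments commute. This identity is classical for $\alpha\ge 1/2$. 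The range $\alpha<1/2$ is the delicate one, and there I would invoke the extension used for the error exponent (the analysis behind Theorem~\ref{thm: error/strong converse of work extraction}, following~\cite{Hiai_2008,Lipka_Bartosik_2024}). The endpoint cases $\alpha=0,\infty$ would be handled by monotone limits in $\alpha$. The star divergence is already defined in exactly this regularized pinched form, with $\sigma$ in the first argument.

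\emph{Step 2: the key inequality at fixed $n$.} Write $\sigma_n=\sigma^{\otimes n}$, $\rho_n=\rho^{\otimes n}$, $\Lambda_n=\Lambda^{\otimes n}$ and $\sigma'_n=\Lambda(\sigma)^{\otimes n}$. The quantity $D_\alpha(\mP_{\sigma_n}(\rho_n)\|\sigma_n)$ involves a commuting pair, so it is a classical divergence, and data processing holds for every $\alpha\in[0,\infty]$ under any channel. I would apply the channel $\mP_{\sigma'_n}\circ\Lambda_n$, which yields
\begin{equation}
D_\alpha\bigl(\mP_{\sigma_n}(\rho_n)\big\|\sigma_n\bigr)\ge D_\alpha\bigl(\mP_{\sigma'_n}\Lambda_n\mP_{\sigma_n}(\rho_n)\big\|\mP_{\sigma'_n}\Lambda_n(\sigma_n)\bigr).
\end{equation}
Two simplifications then follow. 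By the hypothesis, $\mP_{\sigma'_n}\Lambda_n\mP_{\sigma_n}=\mP_{\sigma'_n}\Lambda_n$. In addition, $\Lambda_n(\sigma_n)=\sigma'_n$ is invariant under its own pinching. The right-hand side is therefore $D_\alpha(\mP_{\sigma'_n}(\Lambda(\rho)^{\otimes n})\|\sigma'_n)$. The argument with the arguments swapped (star divergence) is identical: classical data processing is symmetric in which argument is pinched, since both pairs commute.

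\emph{Step 3: limits.} Dividing by $n$ and letting $n\to\infty$, Step 1 applied to the pair $(\Lambda(\rho),\Lambda(\sigma))$ gives the sandwiched inequality. For the star divergence, I would take the limit $\alpha\to1$ afterwards; the inequality persists because it holds pointwise in $\alpha$.

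The main obstacle is Step 1 for $\alpha\in(0,1/2)$, together with the existence of the limits. Away from the range where data processing of $\sD_\alpha$ is known, the pinching argument still gives
\begin{equation}
\sD_\alpha\bigl(\mP_{\sigma_n}(\rho_n)\big\|\sigma_n\bigr)\approx \sD_\alpha(\rho_n\|\sigma_n)
\end{equation}
up to $O(\log n)$ corrections. I would try to get this from the explicit bounds $\mP_{\sigma_n}(X)\ge X/|\spec(\sigma_n)|$ combined with operator monotonicity or convexity estimates for the power $\alpha$. If that fails, I would cite the corresponding regularization formula from~\cite{Lipka_Bartosik_2024}.
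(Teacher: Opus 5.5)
Your proposal is correct and follows the paper's proof: you regularize $\sD_\alpha$ through the pinched classical divergence, apply classical (Petz) data processing under $\mP_{\Lambda(\sigma)^{\otimes n}}\circ\Lambda^{\otimes n}$, use the covariance hypothesis to drop the inner pinching, and then take $n\to\infty$. The differences are minor. The paper treats your Step~1 as already available for all $\alpha\ge 0$ from the two-sided pinching bound $\sD_\alpha(\mP_\sigma(\rho)\|\sigma)\le\sD_\alpha(\rho\|\sigma)\le\sD_\alpha(\mP_\sigma(\rho)\|\sigma)+2\log|\spec(\sigma)|$, so it is not the obstacle you expect. For $D^\star$, the paper uses the skew symmetry $(1-\alpha)D_\alpha(p\|q)=\alpha D_{1-\alpha}(q\|p)$ to reduce to the first inequality, whereas you apply data processing to the swapped commuting pair directly; that also works, because the output pair commutes as well.
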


If we take $\sigma$ to be the thermal state $\tau$ and $\Lambda$ to be a thermal operation, the above condition is satisfied. 
For any Rényi divergence $\mbD_\alpha$ satisfying the data-processing inequality, one can define a resource measure in quantum thermodynamics as $\mR_{\mbD_\alpha}(\rho):=\mbD_\alpha(\rho\|\tau)$.
Among additive Rényi-type divergences satisfying the data-processing inequality, the sandwiched Rényi divergence is known to be minimal. Thus, the corresponding resource measure is no larger than any other resource measure defined in this way. 
In particular, Proposition~\ref{pro: DPI of sandwiched} implies that the resource measure $\mR_{\sD_\alpha}$ is monotone under thermal operations for any $\alpha\in(0,\infty]$, even in the regime where the sandwiched Rényi divergence fails to satisfy the standard data-processing inequality under arbitrary quantum channels, and is in particular not monotone under general Gibbs-preserving operations.
Therefore, Theorem~\ref{thm: error/strong converse of work extraction} can be understood as an operational interpretation of the minimal Rényi-type resource measure in quantum thermodynamics.

\paragraph{Precision of general resource manipulation.}\,---\,%
Work extraction from a quantum state can be seen as a special case of resource distillation, one of the central tasks in the framework of resource theory.
In the Appendix, we further extend our analysis to the setting of general resource theories, which covers various kinds of quantum resources such as entanglement, speakable coherence, and magic.
There, we obtain the fundamental bounds on the precision of the resource distillation and the resource dilution within the general framework.

\paragraph{Discussion.}\,---\,%
We analyzed the asymptotic reliability of work extraction and characterized both the error exponent and the zero-rate error exponent under axiomatic and physically implementable classes of thermodynamic operations. 
Our results reveal a sharp operational separation between Gibbs-preserving operations and thermal operations: although the two classes achieve the same first-order asymptotic work extraction rate, they exhibit distinct achievable error exponents. 
This separation is completely characterized by the information-theoretic divergences appearing in the error decay.

More specifically, the separation at finite extraction rates is characterized by the distinction between the Petz and sandwiched Rényi divergences, while the zero-rate separation is characterized by the star divergence and the reversed Umegaki relative entropy. 
These results provide new operational interpretations of the sandwiched Rényi divergence for $\alpha\in(0,1)$ and of the star divergence from a thermodynamic perspective. 
Although neither the sandwiched Rényi divergence for $\alpha<1/2$ nor the star divergence satisfies the standard data-processing inequality under arbitrary quantum channels, we showed that both quantities satisfy monotonicity under time-translation-covariant channels, identifying symmetry-restricted monotonicity as the operational principle underlying their thermodynamic relevance.

Our results show that the distinction of the performance of work extraction between mathematically convenient and physically implementable thermodynamic operations emerges only beyond the first-order asymptotic regime, and manifests itself as a gap between fundamental information-theoretic quantities. 
This reveals new features of the fundamental connections between the precision of quantum resource manipulation and quantum hypothesis testing.

An important future direction is to extend our analysis to state-agnostic work extraction. 
In particular, previous studies showed that asymptotically optimal work extraction can be achieved without any prior information about the input state~\cite{Watanabe_universal,faist_2026_universal_thermodynamic}. 
Determining the optimal error exponent for such universal work extraction remains an important open problem.

Beyond quantum thermodynamics, our results suggest that characterizing the precision of resource manipulation provides a natural operational framework for the applications information-theoretic quantities in general quantum resource theories, with potential applications to refining the ultimate limits in the preparation of high-fidelity nonequilibrium states required in settings such as quantum error correction and quantum computation.

\emph{Note added.}---We became aware of an independent work by Munan Zhang and Kun Fang on quantum thermodynamics. Although the main setting and results of our two works are unrelated, the authors there obtained also a result on the reliability of work extraction under Gibbs-preserving operations, which appears concurrently in the updated manuscript~\cite{zhang2026quantumthermodynamicsuncertainequilibrium}.

%TC:ignore
\paragraph{Acknowledgments.---} 
The authors acknowledge Kun Fang for discussion.
This work is supported by the Japan Science and Technology Agency (JST) CREST Grant No.\ JPMJCR23I3, JST PRESTO Grant No.\ JPMJPR25FB , JSPS KAKENHI Grant Nos.\ 24K16975,  24K16984, 25K00924, 26H02015, 26KJ0965, and JST NEXUS Grant Number JPMJNX26C2, and the World-Leading Innovative Graduate Study Program for Advanced Basic Science Course at the University of Tokyo. MT is supported by the NRF Investigatorship award (NRF-NRFI10-2024-0006).

%TC:endignore

\let\oldaddcontentsline\addcontentsline% Store \addcontentsline
\renewcommand{\addcontentsline}[3]{}% Make \addcontentsline a no-op

\bibliographystyle{apsrmp4-2}
\bibliography{myref}

\let\addcontentsline\oldaddcontentsline% Restore \addcontentsline

%%%%%%%%%%%%%%%%%%%%%%%%%%%%%%%%%%%%%%%%%%
%%%%%%%%%%%%%%%%%%%%%%%%%%%%%%%%%%%%%%%%%%
%%%%%%%%%%%%%%%%%%%%%%%%%%%%%%%%%%%%%%%%%%
%%%%%%%%%%%%%%%%%%%%%%%%%%%%%%%%%%%%%%%%%%

\clearpage
\newgeometry{hmargin=1.2in,vmargin=0.8in}

\widetext

\appendix

\setcounter{thm}{0}
\renewcommand{\thethm}{S.\arabic{thm}}
\setcounter{figure}{0}
\renewcommand{\thefigure}{S.\arabic{figure}}

\begin{center}
{\large \bf Appendices}
\end{center}

\tableofcontents

\section{Introduction}
Understanding how well one can exploit the advantages of the quantum features is a central problem of quantum information theory.
Specifically, recent works have found that various properties of quantum states called \emph{quantum resources} empower quantum information processing.
Recent progress in the framework of quantum resource theories~\cite{Chitamber_Gour, Gour_2025_book} enables us to characterize fundamental limitations of various kinds of resources---the features of quantum states and quantum processes which cannot be increased in the given operational setting such as entanglement~\cite{Horodecki_4_2009}, nonstabilizerness~\cite{Veitch_2014_resource_theory, Howard_application_of}, athermality~\cite{horodecki_2013, Gour_role_of_quantum_coherence}, coherence~\cite{aberg2006quantifyingsuperposition, Baumgratz_2014_quantifying,Streltsov_2017_Colloquium}, and asymmetry~\cite{Bartlett_2007_reference, Gour_2008_resource_theory_of}. In particular, the framework of general resource theories provides a comprehensive perspective on the important features shared by various resources.

Previous papers studying the general resource theory have characterized the optimal performance of the state conversion with respect to the \emph{rate} $R$ of the conversion, that is, they study whether one can achieve the conversion $\rho^{\otimes n}\to\approx_{\ve_n}\sigma^{\otimes \lfloor Rn\rfloor}$ is possible with a permissible error, which has to vanish in the asymptotic limit $n\to\infty$.
Here, an important observation is that these studies do not specify the rate at which the error decays. This implies that, if one would like to convert quantum states to other states at a nearly optimal conversion rate, the error on the conversion is sufficiently small only if the number of copies of the initial state is sufficiently large.
If one consumes finite copies of the initial state, final states are not always close to the target states.
From the complementary perspective to the conventional studies of resource manipulation, it is fundamentally and practically vital to understand the fundamental limit on the \emph{precision} of resource manipulation.

Some recent papers have examined the exponent $c$ of the fidelity error $\ve_n\sim 2^{-cn}$ with which one can distill Bell states $\ket{\Phi}:=\frac{1}{\sqrt{2}}(\ket{0}\ket{0}+\ket{1}\ket{1})$ from noisy entangled states~\cite{Hayashi_2002_error_exponents,Hayashi_2006_general_formula,lami_2024_asymptotic_quantification, lin_2026_exponential_analysis}. 
Furthermore, in Ref.~\cite{Wang_2019, berta_2025_strong_converse}, they study the strong converse exponent, the speed with which the error accumulates, in the quantum dichotomies setting which connects the setting of quantum thermodynamics.

In this work, we further extend the previous observations to a far broader range of tasks. We study the error exponent and the strong converse exponent of two main resource manipulation tasks called resource distillation and resource dilution, employing the framework of general resource theories. This result extends the result of Ref.~\cite{lin_2026_exponential_analysis} in a more general case.
Moreover, as a byproduct, we derive the strong converse exponent of the smoothed max relative entropy of sets of states. This also extends the previous observations in Ref.~\cite{Wang_2019, fang_2025_error_exponents}.

We further employ the observations in a specific resource theory, namely the resource theory of thermodynamics, where the nonequilibriumness serves as an important resource for quantum computation. 
An important property of quantum thermodynamics is that the available operations must obey energy conservation, which leads to time-translation covariance.
However, a major observation is that, as long as we consider work extraction, a central task in quantum thermodynamics, whether the available operations are restricted by time-translation covariance does not affect the amount of extractable work in the thermodynamic limit~\cite{Brandao_resource_theory_of_quantum, Wang_2019, Gour_role_of_quantum_coherence}.
This leads to the belief that time-translation invariance does not play a central role in work extraction in the thermodynamic limit.

In this work, we characterize the error exponent of work extraction and clarify that the presence of the time-translation covariance indeed restricts the performance of work extraction by focusing on the speed of the error decay, even in the asymptotic limit. Interestingly, the difference in the error exponent of work extraction under two classes of operations with and without time-translation covariance can be clearly represented as the different Rényi relative entropy appearing in the expression.

\section{Preliminaries}
Throughout this work, we consider a finite-dimensional Hilbert space. We denote the set of linear maps on a Hilbert space $\mH$ as $\mL(\mH)$.
Let $\mD(\mH)=\qty{\rho\in\mL(\mH)~|~\rho\geq 0, \Tr\rho=1}, ~\mD_\leq (\mH)=\qty{\rho\in\mL(\mH)~|~\rho\geq 0, \Tr\rho\leq 1}, ~\mP(\mH)=\qty{X\in\mL(\mH)~|~X\geq 0}$ denote the set of quantum states, subnormalized quantum states, and positive semidefinite matrices on a Hilbert space $\mH$, respectively. Moreover, we denote the set of quantum channels (completely positive trace-preserving map, or CPTP map in short) from $\mD(\mH_A)$ to $\mD(\mH_B)$ as ${\rm CPTP(\mH_A\to \mH_B)}$.

\subsection{Distance measures and divergences}

We first review distance measures and information-theoretic quantities used in this work.
Trace distance $T(\rho,\sigma)$ between two states $\rho, \sigma\in\mD(\mH)$ is defined as $T(\rho,\sigma):=\frac{1}{2}\|\rho-\sigma\|_1$. 
The (square) fidelity $F(\rho,\sigma)$ between two states is defined as $F(\rho,\sigma):=\| \sqrt{\rho}\sqrt{\sigma} \|_1^2$.
The purified distance $P(\rho,\sigma)$ and the infidelity $I(\rho,\sigma) $are defined using the fidelity as $P(\rho,\sigma)=:\sqrt{1-F(\rho,\sigma)}$, $I(\rho,\sigma):=P^2(\rho,\sigma)$.
The Petz $\alpha$-Rényi relative entropy $\pD_\alpha(\rho\|\sigma)$ and the sandwiched $\alpha$-Rényi relative entropy $\sD_\alpha(\rho\|\sigma)$ are defined for $\rho\in\mD_{\leq}(\mH)$ and $\sigma\in \mP(\mH)$ as 
\bal
\pD_\alpha(\rho\|\sigma)=
\begin{cases}
    \frac{1}{\alpha-1}\log\Tr[\rho^\alpha\sigma^{1-\alpha}]~~&\mbox{ if }(\alpha<1\land \rho\not\perp\sigma)\lor\supp(\rho)\subset\supp(\sigma),\\
    +\infty~~&\mbox{otherwise},
\end{cases}
\eal
\bal
\sD_\alpha(\rho\|\sigma)=
\begin{cases}
    \frac{1}{\alpha-1}\log\Tr\qty[\qty(\sigma^{\frac{1-\alpha}{2\alpha}} \rho\sigma^{\frac{1-\alpha}{2\alpha}} )^\alpha]&\mbox{ if }(\alpha<1\land \rho\not\perp\sigma)\lor\supp(\rho)\subset\supp(\sigma),\\
    +\infty~~&\mbox{otherwise}.
\end{cases}
\eal
Umegaki relative entropy is defined as 
\bal
D(\rho\|\sigma):=\Tr[\rho\log \rho-\rho\log \sigma],
\eal
which is also obtained as the limit
\bal
\lim_{\alpha\to 1}\pD_\alpha(\rho\|\sigma)=\lim_{\alpha\to 1}\sD_\alpha(\rho\|\sigma)=D(\rho\|\sigma).
\eal
These relative entropies satisfy the \emph{data-processing inequality}, a fundamental property that implies one can discriminate between two states no better after undergoing the same physical process: For any $\rho,\sigma$ and for any $\Lambda \in {\rm CPTP}(\mH\to \mH')$, it holds that
\bal
\pD_\alpha(\rho\|\sigma)&\geq \pD_\alpha(\Lambda(\rho)\|\Lambda(\sigma))~~\forall \alpha\in[0,2],\\
\sD_\alpha(\rho\|\sigma)&\geq \sD_\alpha(\Lambda(\rho)\|\Lambda(\sigma))~~\forall \alpha\in\left[\frac{1}{2},\infty\right),\\
D(\rho\|\sigma)&\geq D(\Lambda(\rho)\|\Lambda(\sigma)).
\eal

Some limits of the sandwiched Rényi relative entropy and Petz Rényi relative entropy are important and often discussed throughout this paper.
The max relative entropy $D_{\max} (\rho\|\sigma)$ of $\rho\in\mD_{\leq}(\mH)$ with respect to $\sigma\in\mP(\mH)$ is defined as
\bal
D_{\max}(\rho\|\sigma):=\lim_{\alpha\to \infty}\sD_\alpha(\rho\|\sigma)=\log\inf\qty{\lambda~|~\rho\leq \lambda\sigma}.
\eal
The min relative entropy $D_{\min} (\rho\|\sigma)$ of $\rho\in\mD_{\leq}(\mH)$ with respect to $\sigma\in\mP(\mH)$ is defined as
\bal
D_{\min}(\rho\|\sigma)=-\log \Tr[\Pi_{\supp(\rho)}\sigma],
\eal
where $\Pi_{\supp(\rho)}:=\rho^0$ denotes the projector onto the support $\supp(\rho):=(\ker \rho)^\perp$ of $\rho$.

In this paper, we often make use of these relative entropies defined not only for quantum states but also for sets of quantum states. For a relative entropy $\mbD(\cdot\|\cdot)$ and nonempty sets $\mA,\mB\subset\mD(\mH)$ of quantum states, the corresponding quantity is defined as
\bal
\mbD(\mA\|\mB):=\inf_{\substack{\rho\in\mA \\ \sigma\in\mB}}\mbD(\rho\|\sigma).
\eal

\subsubsection{Pinching}
In this paper, we often make use of the properties of a channel called \emph{pinching}, a useful tool in quantum information theory~\cite{Hayashi_2002_optimal_sequence}.
Suppose that we are given a complete set of projectors $\qty{\Pi_i}_i$, satisfying $\sum_i\Pi_i=I$.
The pinching with respect to this set of projectors is defined as 
\bal
\mP_{\qty{\Pi_i}_i}(\cdot):=\sum_i\Pi_i(\cdot)\Pi_i.
\eal
The pinching channel satisfies the pinching inequality 
\bal\label{eq: pinching inequality}
\mP_{\qty{\Pi_i}_i}(\rho)\geq \frac{\rho}{\abs{\qty{\Pi_i}_i}}, ~\forall\rho.
\eal
We especially make use of the pinching channel with respect to the spectral decomposition of a quantum state $\sigma=\sum_i\lambda_i\Pi_i$, where each $\lambda_i$ is a distinct eigenvalue, defined as 
\bal\label{eq: spectral pinching}
\mP_{\sigma}(\cdot):=\mP_{\qty{\Pi_i}_i}(\cdot).
\eal
Due to the data-processing inequality of the sandwiched Rényi relative entropy with respect to pinching and the pinching inequality, we have~\cite[Lemma 3]{Hayashi_2016}
\bal
\sD_\alpha(\mP_\sigma(\rho)\|\sigma)\leq \sD_\alpha(\rho\|\sigma)\leq \sD_\alpha(\mP_\sigma(\rho)\|\sigma)+2\log\abs{\spec(\sigma)},~~\forall\alpha\geq 0.
\eal
Noting that the number $\abs{\spec(\sigma^{\otimes n})}$ of distinct eigenvalues of an i.i.d. state $\sigma^{\otimes n}$ is at most polynomial~\cite{cover_1999_elements}, it holds that~\cite[Proposition 4.12]{Tomamichel_2016}
\bal\label{eq: limit of pinched sandwich}
\lim_{n\to\infty}\frac{1}{n}\sD_\alpha(\mP_{\sigma^{\otimes n}}(\rho^{\otimes n})\|\sigma^{\otimes n})=\sD_\alpha(\rho\|\sigma).
\eal

\subsection{Quantum hypothesis testing}
We now review standard notions in the information-theoretic task of quantum hypothesis testing.
Hypothesis testing is the task of discriminating between two hypotheses, namely \emph{null hypothesis} $\rho\in\mD(\mH)$ and \emph{alternative hypothesis} $\sigma\in\mD(\mH)$. Suppose that we are given either of the hypotheses, $\rho$ and $\sigma$, and our goal is to guess which state is actually given by performing a test described by POVM $\qty{M, I-M}$. Here, $M$ represents the measurement outcome corresponding to the guess that the given state is $\rho$, and $I-M$ represents the guess that the given state is $\sigma$.
For a test $\qty{M, I-M}$, the probability of type I error $\alpha(\rho, M)$ and type II error $\beta(\sigma, M)$ is defined as
\bal
\alpha(\rho,M):=\Tr[\rho (I-M)], ~~\beta(\sigma,M):=\Tr[\sigma M].
\eal
Throughout this manuscript, we consider the asymmetric setting where we would like to minimize the probability of type II error while the probability of type I error is kept no larger than a constant $\ve>0$. The hypothesis testing divergence, the figure of merit of the asymmetric hypothesis testing, is defined as
\bal
D^\ve_H(\rho\|\sigma)=-\log \inf_{\substack{0\leq M\leq I\\ \Tr[\rho(I-M)]\leq \ve }}\Tr[\sigma M].
\eal

When multiple copies of $\rho$ or $\sigma$ are given, the discrimination task becomes easier, and the probability of these errors decays fast.
A series of previous papers has characterized the optimal exponent at which the error vanishes.

We first consider the limit of the hypothesis testing divergence called Stein's exponent, which is known to be characterized by the Umegaki relative entropy as follows~\cite{hiai_1991_proper, Ogawa_2000_strong}:
\bal
\lim_{n\to\infty}\frac{1}{n}D^\ve_H(\rho^{\otimes n}\|\sigma^{\otimes n})=D(\rho\|\sigma),~~\forall\ve\in(0,1).
\eal
We can also consider other scenarios where the probability of type II error decays faster (resp. slower) than Stein's exponent, and study how fast the probability of type I error decays to 0 (resp. accumulates to 1), which are precisely the quantities called the error exponent and the strong converse exponent.

The error exponent and the strong converse exponent of the hypothesis testing between the i.i.d. sequences $\qty{\rho^{\otimes n}}_{n\in\mbN}$ and $\qty{\sigma^{\otimes n}}_{n\in\mbN}$ are defined as follows: 
\bal
B_H(\qty{\rho^{\otimes n}}_{n\in\mbN}\|\qty{\sigma^{\otimes n}}_{n\in\mbN}~;~r)&:=\sup_{\qty{M_n}_n}\qty{\liminf_{n\to\infty}-\frac{1}{n}\log\alpha(\rho^{\otimes n}, M_n)~|~\liminf_{n\to\infty}-\frac{1}{n}\log\beta(\sigma^{\otimes n},M_n)\geq r}\\
B_H^*(\qty{\rho^{\otimes n}}_{n\in\mbN}\|\qty{\sigma^{\otimes n}}_{n\in\mbN}~;~r)&:=\inf_{\qty{M_n}_n}\qty{\limsup_{n\to\infty}-\frac{1}{n}\log(1-\alpha(\rho^{\otimes n}, M_n))~|~\liminf_{n\to\infty}-\frac{1}{n}\log\beta(\sigma^{\otimes n},M_n)\geq r}
\eal
The exact expressions for these quantities, due respectively to~\cite{audenaert_2008,nagaoka_2006_converse_theorem,Hayashi_2007_error_exponent} and~\cite{Mosonyi_Ogawa_2014}, are 
\bal\label{eq:exponents_hyp_testing}
B_H(\qty{\rho^{\otimes n}}_{n\in\mbN}\|\qty{\sigma^{\otimes n}}_{n\in\mbN}~;~r)&=\sup_{0<\alpha<1}\frac{\alpha-1}{\alpha}\qty(r-\pD_\alpha(\rho\|\sigma)),\\
B_H^*(\qty{\rho^{\otimes n}}_{n\in\mbN}\|\qty{\sigma^{\otimes n}}_{n\in\mbN}~;~r)&=\sup_{\alpha>1}\frac{\alpha-1}{\alpha}\qty(r-\sD_\alpha(\rho\|\sigma)).
\eal

We can also consider a generalization of the quantum hypothesis testing known as the composite quantum hypothesis testing, where the null hypothesis and the alternative hypothesis are given as the sequences of sets $\mA=\qty{\mA_n}_{n\in\mbN}, \mB=\qty{\mB_n}_{n\in\mbN}$.
Our goal is to distinguish between the states in either of these sets by binary measurement corresponding to the POVM elements $\qty{M, I-M}$.

The probabilities of type I error $\alpha(\mA_n,M_n)$ and type II error $\beta(\mB_n,M_n)$ are defined as
\bal
\alpha(\mA_n,M_n)&=\sup_{\rho_n\in\mA_n}\Tr[\rho_n(I-M_n)], \\
\beta(\mB_n,M_n) &=\sup_{\sigma_n\in\mB_n}\Tr[\sigma_n M_n].
\eal
The hypothesis testing divergence between these two hypotheses is defined as 
\bal
D^\ve_H(\mA_n\|\mB_n)=-\log\inf_{\substack{0\leq M_n\leq I \\ \alpha(\mA_n,M_n)\leq \ve}} \beta(\mB_n,M_n)
\eal

The error exponent and the strong converse exponent of the composite hypothesis testing are defined as
\bal
B_H(\mA\|\mB ;r)&:=\sup_{\qty{M_n}_n}\qty{\liminf_{n\to\infty}-\frac{1}{n}\log\alpha(\mA_n,M_n)~|~\liminf_{n\to\infty}-\frac{1}{n}\log \beta(\mB_n,M_n)\geq r}\\
&=\sup\qty{\liminf_{n\to\infty}-\frac{1}{n}\log\ve_n~|~\liminf_{n\to\infty}\frac{1}{n}D^{\ve_n}_H(\mA_n\|\mB_n)\geq r}\\
B^*_H(\mA\|\mB ;r)&:=\inf_{\qty{M_n}_n}\qty{\limsup_{n\to\infty}-\frac{1}{n}\log \qty(1-\alpha(\mA_n,M_n))~|~\liminf_{n\to\infty}-\frac{1}{n}\log \beta(\mB_n,M_n)\geq r}\\
&=\inf\qty{\limsup_{n\to\infty}-\frac{1}{n}\log(1-\ve_n)~|~\liminf_{n\to\infty}\frac{1}{n}D^{\ve_n}_H(\mA_n\|\mB_n)\geq r}
\eal

Suppose that the hypotheses $\mA=\qty{\mA_n}_{n\in\mbN}, \mB=\qty{\mB_n}_{n\in\mbN}$ satisfy the following:
\begin{enumerate}
    \item For any $n\in\mbN$, $\mA_n, \mB_n$ are convex and compact.
    \item The sequence $\mA=\qty{\mA_n}_{n\in\mbN}, \mB=\qty{\mB_n}_{n\in\mbN}$ are closed under taking tensor-product, that is, $\mA$ and $\mB$ satisfy $\mA_n\otimes \mA_m\subset\mA_{m+n}$ and $\mB_n\otimes \mB_m\subset\mB_{m+n}$.
\end{enumerate}
In Ref.~\cite{fang_2025_error_exponents}, it is shown that, for the composite hypotheses satisfying the conditions above, it holds that 
\bal\label{eq: composite Hopeffding}
B_H(\mA\|\mB ;r)&=\lim_{n\to\infty}\frac{1}{n}\inf_{\substack{\rho_n\in\mA_n\\ \sigma_n\in\mB_n}} \sup_{\alpha\in (0,1)}\frac{\alpha-1}{\alpha}\qty(nr-\pD_{\alpha}(\rho_n\|\sigma_n)),\\
B^*_H(\mA\|\mB ;r)&\geq\lim_{n\to\infty}\frac{1}{n}\inf_{\substack{\rho_n\in\mA_n\\ \sigma_n\in\mB_n}} \sup_{\alpha>1}\frac{\alpha-1}{\alpha}\qty(nr-\sD_{\alpha}(\rho_n\|\sigma_n)).
\eal
There is however a considerable difficulty in interchanging the limit $n\to\infty$ and the optimizations that appear in these formulas~\cite{berta_2023,fang_2025_error_exponents}, preventing a simplification of the expressions and in many cases hindering their applications.  Remarkably, recent developments in composite hypothesis testing show that the somewhat simpler question of the asymptotics of the hypothesis testing for a constant error $\ve$ can indeed be obtained: under suitable assumptions on $\mA_n$ and $\mB_n$, it is given exactly by the regularized relative entropy between the sets, namely~\cite{hayashi_generalized_2025,Lami_2025_gqsl,lami_2024_asymptotic_quantification,hayashi_2025-5,fang_2025_generalized_quantum_AEP,lami_2025-2}
\bal
\lim_{n\to\infty}\frac{1}{n}D^\ve_H(\mA_n\|\mB_n)= \lim_{n\to\infty} \frac1n \min_{\substack{\rho_n\in\mA_n\\ \sigma_n\in\mB_n}} D(\rho_n\|\sigma_n),~~\forall\ve\in(0,1).
\eal
Whether these results can be strengthened to an asymptotic analysis of error exponents, particularly in a way that would help characterize quantum resource manipulation, is an exciting open question in quantum hypothesis testing.

\subsection{Smoothed max relative entropy}
We review another important quantity employed throughout this paper, namely the smoothed max relative entropy. 
To define this quantity, we need to specify which set of matrices and which distance measure we use for the smoothing.
For this purpose, we define $\ve$-balls around a quantum state $\rho$ with respect to a distance measure $d(\cdot,\cdot)$ for quantum states and subnormalized quantum states as 
\bal
\mB^\ve_{d,=}(\rho)&:=\qty{\tilde{\rho}\in\mD(\mH)~|~d(\rho,\tilde{\rho})\leq \ve},\\
\mB^\ve_{d,\leq}(\rho)&:=\qty{\tilde{\rho}\in\mD_{\leq}(\mH)~|~d(\rho,\tilde{\rho})\leq \ve}.
\eal

Smoothed max relative entropy is defined by taking the optimization over either $\mB^\ve_{d,=}(\rho)$ or $\mB^\ve_{d,\leq }(\rho)$ as 
\bal
D^{\ve, d, (*)}_{\rm max}(\rho\|\sigma):=\min_{\tilde{\rho}\in\mB^\ve_{d,(*)}(\rho)}\log \min\qty{\lambda~|~\tilde{\rho}\leq \lambda\sigma},~~(*)\in\qty{=,~\leq}.
\eal
From the definition, it holds that $D^{\ve, d, =}_{\rm max}(\rho\|\sigma)\geq D^{\ve, d, \leq}_{\rm max}(\rho\|\sigma)$.
In Ref.~\cite{Regula_2026_tight_relation}, it is shown that for a trace distance $T(\rho,\sigma)$ and $P(\rho,\sigma)$, these two definitions are connected as
\bal
D^{\ve, d,=}_{\rm max}(\rho\|\sigma)=\max\qty{ D^{\ve, d, \leq}_{\rm max}(\rho\|\sigma),0}.
\eal
When we take the distance measure $d$ as the purified distance $P(\rho,\sigma)$, it is shown in Ref.~\cite{Tomamichel_2009_Quantum_AEP} that
\bal
\lim_{n\to\infty}\frac{1}{n}D^{\ve,P,\leq}_{\rm max}(\rho^{\otimes n}\|\sigma^{\otimes n})=D(\rho\|\sigma), ~~\forall\ve\in(0,1)
\eal
holds for any $\rho,\sigma\in\mD(\mH)$.

We can consider the error exponent and the strong converse exponent for the smoothed max relative entropy, which is defined for a smoothing with respect to $d, (*)\in\qty{=,~\leq }$ as 
\bal
B^{d,(*)}_M(\qty{\rho^{\otimes n}}_{n\in\mbN}\|\qty{\sigma^{\otimes n}}_{n\in\mbN} |r)&:=\sup_{\qty{\ve_n}}\qty{\liminf_{n\to\infty}-\frac{1}{n}\log\ve_n~|~\liminf_{n\to\infty}\frac{1}{n}D^{\ve_n, d,(*)}_{\rm max}(\rho^{\otimes n}\|\sigma^{\otimes n})\leq r}\\
B^{*,d,(*)}_M(\qty{\rho^{\otimes n}}_{n\in\mbN}\|\qty{\sigma^{\otimes n}}_{n\in\mbN} |r)&:=\inf_{\qty{\ve_n}}\qty{\limsup_{n\to\infty}-\frac{1}{n}\log(1-\ve_n)~|~\liminf_{n\to\infty}\frac{1}{n}D^{\ve_n, d, (*)}_{\rm max}(\rho^{\otimes n}\|\sigma^{\otimes n})\leq r}.
\eal

The previous research has found that the error exponent and the strong converse exponent of the smoothed max relative entropy are fully characterized as~\cite{Li_2023_tight_exponential,li_operational_2024}
\bal
B^{P,\leq }_M(\qty{\rho^{\otimes n}}_{n\in\mbN}\|\qty{\sigma^{\otimes n}}_{n\in\mbN} |r)&=\frac{1}{2}\sup_{\alpha>1}(\alpha-1)(r-\sD_\alpha(\rho\|\sigma)),\\
B^{*,P,\leq }_M(\qty{\rho^{\otimes n}}_{n\in\mbN}\|\qty{\sigma^{\otimes n}}_{n\in\mbN} |r)&=\sup_{\frac{1}{2}<\alpha<1}\frac{\alpha-1}{\alpha}(r-\sD_\alpha(\rho\|\sigma)).
\eal

We can also generalize the smoothed max relative entropy to accept sets of states as arguments, which is defined as
\bal
D^{\ve, d,(\*)}_{\rm max}(\mA\|\mB):=\inf_{\substack{\rho\in\mA\\ \sigma\in\mB}}D^{\ve, d,(\*)}_{\rm max}(\rho\|\sigma).
\eal
The notion of error exponents and the strong converse exponents can be generalized to the smoothed max relative entropy of sets of states as follows.
\bal
B^{d,(*)}_M(\qty{\mA_n}_{n\in\mbN}\|\qty{\mB_n}_{n\in\mbN} |r)&:=\sup_{\qty{\ve_n}}\qty{\liminf_{n\to\infty}-\frac{1}{n}\log\ve_n~|~\liminf_{n\to\infty}\frac{1}{n}D^{\ve_n, d,(*)}_{\rm max}(\mA_n\|\mB_n)\leq r}\\
B^{*,d,(*)}_M(\qty{\mA_n}_{n\in\mbN}\|\qty{\mB_n}_{n\in\mbN} |r)&:=\inf_{\qty{\ve_n}}\qty{\limsup_{n\to\infty}-\frac{1}{n}\log(1-\ve_n)~|~\liminf_{n\to\infty}\frac{1}{n}D^{\ve_n, d, (*)}_{\rm max}(\mA_n\|\mB_n)\leq r}.
\label{eq:smoothed max error exponents def}
\eal

\subsection{Quantum resource theories}
A quantum resource theory is identified by a subset $\mbF(\mH)\subset \mD(\mH)$ of states and a subset $\mbO(\mH_A\to\mH_B)\subset{\rm CPTP(A\to B)}$ of quantum channels such that for any $\Lambda\in\mbO(\mH_A\to\mH_B)$, $\Lambda(\mbF(\mH_A))\subset\mbF(\mH_B)$. 
The subset $\mbF(\mH)$ is called the set of \emph{free states}, the states which can be prepared without any cost in the given scenario.
Similarly, $\mbO(\mH_A\to\mH_B)$ is referred to as the class of \emph{free operations}, which is a collection of the operations applied freely in the given scenario.
The condition mentioned above for inclusion is the minimum condition that guarantees the resource cannot be generated from scratch by applying any free operations to free states.
Throughout this work, we assume that $\mbF(\mH)$ is convex and compact.

When we fix the set $\mbF(\mH)$of free states, the maximum choice of the set of free operations is the set of channels that includes all channels that map the set of free states to itself. This choice of free operations is called the set $\mbO_{\rm RNG}$ of \emph{resource non-generating operations}.
For instance, in the entanglement theory, this class of operations corresponds to the set of separability-preserving operations.
Unless stated otherwise, we consider the set of resource non-generating operations to be the set of free operations.

One of the most fundamental problems in the study of quantum resource theories is to determine whether the conversion from a state $\rho$ to another state $\sigma$ is possible under the restriction of the free operations with the permissible error $\ve>0$. 
This problem can be extended to the asymptotic scenario where we are given many copies of $\rho$, and one aims at obtaining as many copies of the target state $\sigma$ as possible, while the error on the final state must vanish in the asymptotic limit.

Among these conversions, the resource distillation and the resource dilution are the most important conversions.
Suppose that there is a specific family $\qty{\phi_d}_{d\in\mbD}$ of reference resource states, which depends on the settings of the given resource theories.
Here, $\mbD\subset\mbN$ denotes the valid dimensions for the reference resource states.
An example is the family of Bell states $\qty{\ketbra{\Phi_d}{\Phi_d}^{AB}}_{d\in\mbN}$, where $\ket{\Phi_d}^{AB}$ is defined as $\ket{\Phi_d}^{AB}=\frac{1}{\sqrt{d}}\sum_{i=1}^d\ket{i}^A\ket{i}^B$.
In this example, the set of valid dimensions is $\mbD=\qty{d^2~|~d\in\mbN}$.

Finding quantifiers, known as \emph{resource measures}, of the resourcefulness of quantum states is a central problem in quantum resource theory.
One canonical way to define the resource measure is to employ the relative entropies satisfying the data-processing inequality.
Another standard resource measure often employed in the literature is called \ emph {robustness}. Specifically, we make use of two variants of robustness, namely the standard and the generalized robustness defined as~\cite{vidal_1999,Brandao_2010_gqsl}
\bal
R_s(\rho):&=\inf\qty{\lambda\geq 0~|~\exists \tau\in\mbF \mbox{ s.t. } {\rho+\lambda\tau}\in\cone(\mbF)},\\
R_g(\rho):&=\inf\qty{\lambda\geq 0~|~\exists \tau\in\mD(\mH) \mbox{ s.t. }{\rho+\lambda\tau}\in\cone(\mbF)}.
\eal
From the definition we can easily see that $R_g(\rho)\leq R_s(\rho).$
Also, we can define the logarithmic standard robustness and the logarithmic generalized robustness as 
\bal
LR_s(\rho):=\log(1+R_s(\rho)),\\
LR_g(\rho):=\log(1+R_g(\rho)).
\eal
Here, it is easy to see that $LR_g(\rho)$ is equal to $D_{\rm max}(\rho\|\mbF)\coloneqq \min_{\sigma\in\mbF}D_{\rm max}(\rho\|\sigma)$. Similarly to the smoothed max relative entropy, we can also define the smoothed version of the logarithmic standard robustness as 
\bal
LR^{\ve, d, (*)}_s(\rho):=\min_{\tilde{\rho}\in\mB^\ve_{d,(*)}(\rho)}LR_s(\rho).
\eal

 \subsubsection{Resource distillation}

Resource distillation is the task of extracting the resource state with the largest possible dimension from given copies of the initial states.
The optimal performance of the resource distillation under the class $\mbO$ of free operations is represented by the maximum dimension of the reference resource state that can be obtained by applying a free operation to the input.
The one-shot version and the asymptotic version of the distillable resource are defined as follows.
\begin{defnboxed}
    \begin{defn}[Distillable resource]
    Let $\qty{\phi_d}_{d\in\mbD}$ be the family of reference pure resource states of the resource theory under consideration.
    $\ve$-one shot distillable resource from a quantum state $\rho$ under the class $\mbO$ of free operations with respect to a resource measure $\mfR$ is defined as
    \bal
    d^\ve_{\mbO,\mfR}(\rho):=\max\qty{\mfR(\phi_d)~|~\max_{\Lambda\in\mbO}F(\Lambda(\rho), \phi_d)\geq 1-\ve}.
    \eal
        Asymptotic distillable resource from a sequence $\qty{\rho^{\otimes n}}_{n\in\mbN}$ of i.i.d. states under the class $\mbO$ of free operations with respect to a resource measure $\mfR$ is defined as 
    \bal
    d^\infty_{\mbO,\mfR}(\rho):=\lim_{\ve\to +0}\limsup_{n\to\infty}\frac{1}{n}d^\ve_{\mbO,\mfR}(\rho^{\otimes n})
    \eal
    \end{defn}
\end{defnboxed}

On the other hand, resource dilution is the inverse of resource distillation, aiming to obtain as many copies of target states by applying free operations to a few pure reference resource states.
The optimal performance of resource dilution under the class $\mbO$ of free operations is quantified by the minimum dimension of the reference resource state which costs to obtain the target states.
Just as the distillable resource, the one-shot and asymptotic versions of the resource cost are defined as follows.

\begin{defnboxed}
    \begin{defn}[Resource cost]
    Let $\qty{\phi_d}_{d\in\mbD}$ be the family of reference pure resource states of the resource theory under consideration.
    $\ve$-one shot resource cost to obtain a quantum state $\rho$ under the class $\mbO$ of free operations with respect to a resource measure $\mfR$ is defined as 
    \bal
    c^\ve_{\mbO,\mfR}(\rho):=\min\qty{\mfR(\phi_d)~|~\max_{\Lambda\in\mbO}F(\Lambda(\phi_d), \rho)\geq 1-\ve}.
    \eal
        Asymptotic resource cost of a sequence $\qty{\rho^{\otimes n}}_{n\in\mbN}$ of i.i.d. states under the class $\mbO$ of free operations with respect to a resource measure $\mfR$ is defined as 
        \bal
        c^\infty_{\mbO,\mfR}(\rho):=\lim_{\ve\to +0}\limsup_{n\to\infty}\frac{1}{n}c^\ve_{\mbO,\mfR}(\rho^{\otimes n}).
        \eal
    \end{defn}
\end{defnboxed}

The previous studies along this line have revealed that the optimal resource distillation and the resource dilution are quantitatively connected to the information-theoretic quantities such as the hypothesis testing divergence and the robustness of resource~\cite{Liu_one_shot, Regula_Takagi_2021}.

\section{Error exponent and strong converse exponent of general resource distillation}
\subsection{One-shot optimal error of general resource distillation}
We are now in a position to discuss the main results. We study the smallest error with which one can extract the resource state from a fixed reference resource state $\phi_d$ from the initial state $\rho$.
The figure of merit in this scenario is defined as follows.
\begin{defnboxed}
    \begin{defn}
Let $(\mbF,\mbO)$ be the resource theory with the family $\qty{\phi_d}_{d\in\mbD}$ of the reference resource states. The optimal one-shot error of resource distillation from a quantum state $\rho$ with the target state $\phi_d$ is defined as 
\bal
\mE_{\mbO, {\rm distill}}(\rho;d)&:=1-\max_{\Lambda\in\mbO} F(\Lambda(\rho),\phi_d)
\eal
\end{defn}
\end{defnboxed}
Now, we connect these quantities with the information-theoretic quantities. When one takes an appropriate choice of the family of reference resource states, the one-shot optimal error of resource distillation is shown to be characterized through the hypothesis testing divergence.
\begin{proboxed}
    \begin{pro}\label{pro: one shot distillation}
        Let $(\mbF,\mbO={\rm RNG})$ be a quantum resource theory under resource non-generating operations. Assume that the family $\qty{\phi_d}_{d\in\mbD}$ of the reference states satisfy $D_{\rm min}(\phi_d\|\mbF)=LR_s(\phi_d)$.
        Then, the one-shot optimal error of resource distillation for a target state $\phi_d$ is characterized as 
        \bal
        \mE_{ {\rm RNG,distill}}(\rho;d)=\min\qty{\ve\geq 0~|~D_{\rm min}(\phi_d\|\mbF)\leq D^\ve_H(\rho\|\mbF)}.
        \eal
        When the family $\qty{\phi_d}_{d\in\mbD}$ of the reference states satisfy $D_{\rm min}(\phi_d\|\mbF)=D_{\rm max}(\phi_d\|\mbF)$ and $\Tr[\phi_d\sigma]=2^{-D_{\rm min}(\phi_d\|\mbF)}={{\rm const.}}$ for an arbitrary free state $\sigma\in\mbF$, the one-shot optimal error of resource distillation for a target state $\phi_d$ is characterized as 
        \bal
        \mE_{ {\rm RNG,distill}}(\rho;d)=\min\qty{\ve\geq 0~|~D_{\rm min}(\phi_d\|\mbF)\leq D^\ve_H(\rho\|\aff(\mbF))},
        \eal
        where $\aff(\mbF)=\qty{X\in{\mL(\mH)} ~|~ \exists t\in\mbR, \sigma_1,\sigma_2\in\mbF {~\rm{ s.t. }~} X=t\sigma_1+(1-t)\sigma_2}$ denotes the affine hull of $\mbF\in\mD(\mH)$.
    \end{pro}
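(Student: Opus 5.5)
We prove two inequalities between $\mE_{\rm RNG,distill}(\rho;d)$ and $\ve^\star:=\min\{\ve\geq 0~|~D_{\rm min}(\phi_d\|\mbF)\leq D^\ve_H(\rho\|\mbF)\}$, following the standard correspondence between distillation fidelity and hypothesis testing~\cite{Regula_Takagi_2021}.

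\emph{Converse.} Fix $\Lambda\in{\rm RNG}$ with $F(\Lambda(\rho),\phi_d)=1-\ve$. Since $\phi_d$ is pure, $F(\Lambda(\rho),\phi_d)=\Tr[\phi_d\Lambda(\rho)]$. We use the test $M:=\Lambda^\dagger(\phi_d)$, which satisfies $0\leq M\leq I$. Its type I error is $\Tr[\rho(I-M)]=\ve$. For any $\sigma\in\mbF$ we have $\Lambda(\sigma)\in\mbF$, so the type II error obeys
\begin{equation}
\Tr[\sigma M]=\Tr[\phi_d\Lambda(\sigma)]\leq\max_{\sigma'\in\mbF}\Tr[\phi_d\sigma']=2^{-D_{\rm min}(\phi_d\|\mbF)},
\end{equation}
using that $\phi_d$ is pure, so $\Pi_{\supp(\phi_d)}=\phi_d$. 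Taking the supremum over $\sigma\in\mbF$ in the composite definition of $D^\ve_H(\rho\|\mbF)$ gives $D^\ve_H(\rho\|\mbF)\geq D_{\rm min}(\phi_d\|\mbF)$. Hence $\ve^\star\leq\mE_{\rm RNG,distill}(\rho;d)$.

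\emph{Achievability.} Let $M$ be optimal at error $\ve^\star$, so that $\Tr[\rho M]\geq 1-\ve^\star$ and $\sup_{\sigma\in\mbF}\Tr[\sigma M]\leq 2^{-D_{\rm min}(\phi_d\|\mbF)}=:p$. Take $\omega\in\mbF$ attaining the standard robustness of $\phi_d$, i.e.\ $\phi_d+R_s\omega=(1+R_s)\omega'$ with $\omega'\in\mbF$. Define the measure-and-prepare channel
\begin{equation}
\Lambda(X):=\Tr[MX]\,\phi_d+\Tr[(I-M)X]\,\omega.
\end{equation}
Then $F(\Lambda(\rho),\phi_d)\geq\Tr[\rho M]\geq 1-\ve^\star$. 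For free $\sigma$, set $t:=\Tr[\sigma M]\leq p$. The output is $t\phi_d+(1-t)\omega$, and it is free provided $t\leq 1/(1+R_s)=2^{-LR_s(\phi_d)}$. The hypothesis $LR_s(\phi_d)=D_{\rm min}(\phi_d\|\mbF)$ gives exactly $p=1/(1+R_s)$. So when $t=p$ the output equals $\omega'$, and for $t<p$ it is a convex combination of $\omega'$ and $\omega$, which is free by convexity. Hence $\Lambda\in{\rm RNG}$.

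The main obstacle is this freeness check, and it is precisely where the assumption $D_{\rm min}=LR_s$ is needed. Compactness of $\mbF$ and of the test set ensures the minimum defining $\ve^\star$ and the optimal $M$ are attained.

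\emph{Second statement.} The converse is the same argument with $\mbF$ replaced by $\aff(\mbF)$. For $X=t\sigma_1+(1-t)\sigma_2$ we get $\Tr[XM]=t\Tr[\phi_d\Lambda(\sigma_1)]+(1-t)\Tr[\phi_d\Lambda(\sigma_2)]=p$, because $\Tr[\phi_d\sigma]=p$ is constant on $\mbF$ and $\Lambda(\sigma_i)\in\mbF$.

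For achievability, take $M$ feasible for $D^{\ve^\star}_H(\rho\|\aff(\mbF))$. By the affine-hull constraint, $\Tr[\sigma M]$ can be taken constant over all $\sigma\in\mbF$. Should $\Tr[\sigma M]$ vary on $\mbF$, the supremum over $t\in\mbR$ would be unbounded, so $M$ could not be feasible. Pad $M$ with a small multiple of $I$ if needed so that $\Tr[\sigma M]=p$ exactly; this only decreases the type I error. Then use the channel $\Lambda(X)=\Tr[MX]\phi_d+\Tr[(I-M)X]\omega''$, with $\omega''$ chosen so that $p\phi_d+(1-p)\omega''$ is free.

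Such an $\omega''$ should exist from $D_{\rm max}(\phi_d\|\mbF)=D_{\rm min}(\phi_d\|\mbF)$. Indeed, there is $\sigma\in\mbF$ with $\phi_d\leq p^{-1}\sigma$, so $\omega'':=(\sigma-p\phi_d)/(1-p)$ is a state. It need not itself be free, but the output $p\phi_d+(1-p)\omega''=\sigma$ is free. Since every free input yields exactly this output, $\Lambda$ is resource non-generating.

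The delicate point is the constant-overlap reduction: the optimal test for the affine hull must have constant $\Tr[\sigma M]$, equal to $p$ after padding. I would verify that carefully.
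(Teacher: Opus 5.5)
Your proof is correct, and it takes a more self-contained route than the paper. The paper invokes Corollaries~6 and~7 of Ref.~\cite{Regula_benchmarking}. These express $1-\mE_{\rm RNG,distill}(\rho;d)$ as $\sup\{\Tr[\rho W] : 0\le W\le I,\ W\in 2^{-D_{\rm min}(\phi_d\|\mbF)}\mbF^\circ\}$ (respectively with the set $\mbF^\bullet$), and the paper then rewrites these as hypothesis-testing constraints. You instead prove both directions directly:
\begin{itemize}
\item the converse, by using $\Lambda^\dagger(\phi_d)$ as a test;
\item achievability, by a measure-and-prepare channel whose freeness is exactly where $LR_s(\phi_d)=D_{\rm min}(\phi_d\|\mbF)$ (first part) or $D_{\rm max}=D_{\rm min}$ (second part) enters.
\end{itemize}
This essentially unpacks the cited corollaries. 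What it buys is transparency about which hypothesis is used where: the constant-overlap condition appears only in the converse of the second part, and $D_{\rm max}=D_{\rm min}$ only in its achievability. Your second construction, with a possibly non-free $\omega''=(\sigma-p\phi_d)/(1-p)$, is the same one the paper later uses for the dilution result.

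Your explicit padding step also fills in something the paper leaves tacit. The $\mbF^\bullet$ formulation demands $\Tr[\sigma W]=2^{-D_{\rm min}(\phi_d\|\mbF)}$ exactly for all free $\sigma$. Feasibility for $D^\ve_H(\rho\|\aff(\mbF))$ only yields a constant value $c\le 2^{-D_{\rm min}(\phi_d\|\mbF)}$, via your correct unboundedness-in-$t$ argument. One small fix is needed there. Adding a multiple of $I$ to $M$ can violate $M\le I$ when $M$ has eigenvalue $1$. Use instead $M'=(1-\lambda)M+\lambda I$ with $\lambda\in[0,1]$. This keeps $0\le M'\le I$ and moves the constant overlap from $c$ to $c+\lambda(1-c)$, so every value in $[c,1]$ is reachable. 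It also multiplies the type I error by $1-\lambda$, so the error never increases.
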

\end{proboxed}
\begin{proof}
The proof follows directly from Ref.~\cite[Corollary 6,7]{Regula_benchmarking} as follows.
We first discuss the first part.
In Ref.~\cite[Corollary 6]{Regula_benchmarking}, it is shown that
\bal
1-\mE_{{\rm RNG}, {\rm distill}}(\rho;d)=G\qty(\rho, 2^{D_{\rm min}(\phi_d\|\mbF)}),
\eal
where $G(\rho,k)$ is defined for $k\geq 1$ as 
\bal
G(\rho,k):=\sup\qty{\Tr[\rho W]~|~0\leq W\leq I, ~W\in\frac{1}{k}\mbF^\circ},
\eal
and $\mbF^\circ$ is a polar set of $\mbF$ defined as 
\bal
\mbF^\circ:=\qty{X\in\mL(\mH)~|~\Tr[X\sigma]\leq 1,~\forall\sigma\in\mbF}.
\eal
From this, we can see that
\bal
1-\mE_{{\rm RNG}, {\rm distill}}(\rho;d)&=\sup\qty{\Tr[\rho W]~|~0\leq W\leq I,~\Tr[\sigma W]\leq 2^{-D_{\rm min}(\phi_d\|\mbF)},~\forall\sigma\in\mbF}\\
&=\sup\qty{1-\ve~|~D_{\rm min}(\phi_d\|\mbF)\leq D^\ve_H(\rho\|\mbF)}
\eal
From this, we reach the first claim.

Now, we discuss the second part.
Due to Ref.~\cite[Corollary 7]{Regula_benchmarking}, the one-shot optimal error of resource distillation is written as 
\bal
1-\mE_{{\rm RNG}, {\rm distill}}(\rho;d)=G^\bullet\qty(\rho, 2^{D_{\rm min}(\phi_d\|\mbF)}),
\eal
where $G^\bullet\qty(\rho, k)$ is defined for $k\geq 1$ as 
\bal
G^\bullet\qty(\rho, k):=\sup\qty{\Tr[\rho W]~|~0\leq W\leq I, W\in\frac{1}{k}\mbF^\bullet}
\eal
and $\mbF^\bullet$ is defined as
\bal
\mbF^\bullet:=\qty{X\in\mL(\mH)~|~\Tr[W\sigma]=1,~\forall\sigma\in\mbF}.
\eal
From this, we have
\bal
1-\mE_{{\rm RNG}, {\rm distill}}(\rho;d)&=\sup\qty{\Tr[\rho W]~|~0\leq W\leq I,~\Tr[\sigma W]= 2^{-D_{\rm min}(\phi_d\|\mbF)},~\forall\sigma\in\mbF}\\
&=\sup\qty{\Tr[\rho W]~|~0\leq W\leq I,~\Tr[\sigma W] =2^{-D_{\rm min}(\phi_d\|\mbF)},~\forall\sigma\in\aff(\mbF)}\\
&=\sup\qty{1-\ve~|~D_{\rm min}(\phi_d\|\mbF)\leq D^\ve_H(\rho\|\aff(\mbF))}.
\eal
From this, we reach the claim.
\end{proof}
The fundamental connection between the one-shot yield of resource distillation and the hypothesis testing divergence is established in Ref.~\cite{Liu_one_shot, Regula_Takagi_2021, lami_2024_asymptotic_quantification}. Even though the proof of Proposition~\ref{pro: one shot distillation} is almost parallel to that in these previous papers, this result extends the previous observation about this connection from a refined viewpoint.

Let us remark on the condition made above.
The former conditions correspond to the case where the set $\mbF$ of free states is full-dimensional, i.e., it holds that $\aff(\mbF)=\mL(\mH)$. In this case, we have $LR_s(\rho)<\infty$. 
Here, it is not clear at first sight whether the given resource theory contains a family $\qty{\phi_d}_{d\in\mbD}$ of reference resource states that satisfy $D_{\rm min}(\phi_d\|\mbF)=LR_s(\phi_d)$.
In Ref.~\cite[Lemma 9]{Takagi_One_shot}, they discuss a general sufficient condition that ensures the existence of such a family of resource states.
Specifically, the condition holds for the Bell state in the bipartite entanglement theory and for the three-qubit Hoggar state in the resource theory of nonstabilizerness~\cite{Takagi_One_shot}.

On the other hand, the latter condition corresponds to the case where the set $\mbF$ of free states is reduced-dimensional, where $\aff(\mbF)\neq\mL(\mH)$ holds. In Ref.~\cite[Theorem 4]{Regula_benchmarking}, it is shown that, in any convex resource theory, a pure state $\phi$ maximizes $D_{\rm max}(\phi\|\mbF)$ if and only if it also maximizes $D_{\rm min}(\phi\|\mbF)$.
The condition $\Tr[\phi\sigma]={\rm const.}$ is an independent condition from the condition above, and is satisfied in, for instance, the resource theory of speakable coherence and the resource theory of thermodynamics.

\subsection{Error exponent and strong converse exponent of general resource distillation}

Now, we will discuss the asymptotic behavior of the optimal error for resource distillation in general resource theories. We would like to investigate how the error in resource distillation behaves when one attempts to distill a resource at a fixed rate $r$. If the target rate $r$ is suboptimal in the sense that $r$ is smaller than the optimal distillable resource, we expect that the fidelity error $\ve_n$ decays to 0 in the asymptotic limit. In this regime, we especially focus on the exponent $c$, called \emph{error exponent}, with which the optimal one-shot error of resource distillation decays $\ve_n\sim 2^{-cn}$. On the other hand, we can also consider the opposite situation in which the target distillation rate $r$ exceeds the optimal distillable resource. In this regime, the error $\ve_n$ does not vanish in the asymptotic limit. Rather, in many cases of resource distillation, the error accumulates to $1$ in the asymptotic limit, which is the direct consequence of the generalized quantum Stein's lemma in Ref.~\cite{Brandao_2010_gqsl, Brand_o_2010, hayashi_generalized_2025, Lami_2025_gqsl}.
Here, we study the smallest exponent $c$, called the \emph{strong converse exponent}, with which the error of the resource distillation accumulates as $\ve_n\sim 1-2^{-cn}$.

The error exponent and the strong converse exponent of the resource distillation are defined below.

\begin{defnboxed}
     \begin{defn}
    Let $(\mbF,\mbO)$ be the resource theory with the family $\qty{\phi_d}_{d\in\mbD}$ of the reference resource states. Suppose that we are given multiple copies of the input state $\rho\in\mD(\mH)$.
    Here, the error exponent $B_{\mbO, {\rm distill},\mfR}(\qty{\rho^{\otimes n}}_{n\in\mbN};r)$ and the strong converse exponent $B^*_{\mbO, {\rm distill},\mfR}(\qty{\rho^{\otimes n}}_{n\in\mbN};r)$ of the resource distillation under the class $\mbO$ of free operations with target distillation rate $r$ with respect to a resource measure $\mfR$ are defined, respectively, as 
    \bal
    B_{\mbO, {\rm distill}, \mfR}(\qty{\rho^{\otimes n}}_{n\in\mbN};r)&:=\sup\qty{\liminf_{n\to\infty}-\frac{1}{n}\log \mE_{\mbO, {\rm distill}}({\rho^{\otimes n}};d_n)~|~\exists \qty{d_n}\in \mbD^\mbN~\mbox{s.t.}~\liminf_{n\to\infty}\frac{1}{n}\mfR(\phi_{d_n})\geq r  },\\
    B^*_{\mbO, {\rm distill},\mfR}(\qty{\rho^{\otimes n}}_{n\in\mbN};r)&:=\inf \qty{\limsup_{n\to\infty}-\frac{1}{n}\log(1-\mE_{\mbO, {\rm distill}}({\rho^{\otimes n}};d_n))~|~\exists \qty{d_n}\in \mbD^\mbN~\mbox{s.t.}~\liminf_{n\to\infty}\frac{1}{n} \mfR(\phi_{d_n})\geq r }.
    \eal
\end{defn}
\end{defnboxed}

Here, note that due to Proposition~\ref{pro: one shot distillation}, we can connect these quantities with the error exponent and the strong converse exponent of the composite hypothesis testing with the i.i.d. sequence $\qty{\rho^{\otimes n}}_{n\in\mbN}$ as the null hypothesis and the sequence $\qty{\mbF_n}_{n\in\mbN}$ as the alternative hypothesis.
Employing the results in Ref.~\cite{fang_2025_error_exponents} exhibited in Eq.~\eqref{eq: composite Hopeffding} combined with Proposition~\ref{pro: one shot distillation}, we have the following.

\begin{proboxed}
\begin{pro}\label{pro: exponent of resource distillation}
    Let $(\mbF,{\rm RNG})$ be the resource theory with a family $\qty{\phi_d}_{d\in\mbD}$ of the reference resource states satisfying $D_{\rm min}(\phi_d\|\mbF)=LR_s(\phi_d)$. Then, the error exponent $B_{{\rm RNG}, {\rm distill}}(\qty{\rho^{\otimes n}}_{n\in\mbN};r)$ and the strong converse exponent $B^*_{{\rm RNG}, {\rm distill}}(\qty{\rho^{\otimes n}}_{n\in\mbN};r)$ of resource distillation under the resource non-generating operations with the resource distillation rate $r$ with respect to the resource measure $\mfR(\cdot)=D_{\rm min}(\cdot\|\mbF)$ is 
    \bal
    B_{{\rm RNG}, {\rm distill},D_{\rm min}}(\qty{\rho^{\otimes n}}_{n\in\mbN};r)&=H_r^{\infty}(\qty{\rho^{\otimes n}}_{n\in\mbN}\|\qty{\mbF_n}_{n\in\mbN})=\lim_{n\to\infty}\frac{1}{n}\inf_{\sigma_n\in\mbF_n} \sup_{\alpha\in (0,1)}\frac{\alpha-1}{\alpha}\qty(nr-\pD_{\alpha}(\rho^{\otimes n}\|\sigma_n)),\\
    B^*_{{\rm RNG}, {\rm distill}, D_{\rm min}}(\qty{\rho^{\otimes n}}_{n\in\mbN};r)&\geq H_r^{*,\infty}(\qty{\rho^{\otimes n}}_{n\in\mbN}\|\qty{\mbF_n}_{n\in\mbN})=\liminf_{n\to\infty}\frac{1}{n}\inf_{\sigma_n\in\mbF_n} \sup_{\alpha>1}\frac{\alpha-1}{\alpha}\qty(nr-\sD_{\alpha}(\rho^{\otimes n}\|\sigma_n)).
    \eal
\end{pro}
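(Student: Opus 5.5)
The plan is to reduce both exponents to the composite hypothesis-testing exponents via Proposition~\ref{pro: one shot distillation}, and then invoke Eq.~\eqref{eq: composite Hopeffding}. Under the assumption $D_{\rm min}(\phi_d\|\mbF)=LR_s(\phi_d)$, the first part of Proposition~\ref{pro: one shot distillation} gives
\[
\mE_{{\rm RNG},{\rm distill}}(\rho^{\otimes n};d_n)=\min\qty{\ve\geq 0~|~D_{\rm min}(\phi_{d_n}\|\mbF)\leq D^\ve_H(\rho^{\otimes n}\|\mbF_n)}.
\]
With $\mfR=D_{\rm min}(\cdot\|\mbF)$, the rate constraint $\liminf_n\frac1n\mfR(\phi_{d_n})\geq r$ therefore becomes a constraint on the hypothesis testing divergence at the error level $\ve_n=\mE_{{\rm RNG},{\rm distill}}(\rho^{\otimes n};d_n)$. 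Specifically, $\liminf_n\frac1n D^{\ve_n}_H(\rho^{\otimes n}\|\mbF_n)\geq r$.

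\emph{Error exponent.} First I would show $B_{\rm distill}\leq B_H(\{\rho^{\otimes n}\}\|\{\mbF_n\};r)$. Any admissible $\{d_n\}$ yields errors $\ve_n$ that are feasible for the second expression of $B_H$, since the minimum is attained by compactness. For the reverse inequality, take any feasible $\{\ve_n\}$ with $\liminf\frac1n D^{\ve_n}_H\geq r$. I would choose $d_n$ so that $D_{\rm min}(\phi_{d_n}\|\mbF)$ is the largest value not exceeding $D^{\ve_n}_H(\rho^{\otimes n}\|\mbF_n)$. Then the optimal distillation error at $d_n$ is at most $\ve_n$, and the rate condition survives.

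This step is the main obstacle. It requires that the achievable values $\{D_{\rm min}(\phi_d\|\mbF)\}_{d\in\mbD}$ be dense enough on the scale $n$, i.e.\ that consecutive values have gaps $o(n)$. This holds for all the standard families, such as maximally entangled states with $\log d$. If the gaps are not small, I would first aim at a slightly smaller rate $r-\delta$ and then pass to $\delta\to0$ using continuity of the Hoeffding-type expression in $r$. At $r$ this needs right-continuity, which follows from convexity and monotonicity of the expression in $r$ on the interior of its finite domain. I would flag boundary points as requiring care.

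Next I would apply Eq.~\eqref{eq: composite Hopeffding} with $\mA_n=\{\rho^{\otimes n}\}$ and $\mB_n=\mbF_n$. The hypotheses hold: a singleton is convex and compact and is closed under tensor products, and the free sets are assumed convex, compact, and tensor-closed. This gives the stated formula, where the infimum over $\rho_n$ is trivial.

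\emph{Strong converse exponent.} The argument is identical, with $1-\ve_n$ in place of $\ve_n$, and gives $B^*_{\rm distill}=B^*_H$. The direction actually needed is $B^*_{\rm distill}\geq B^*_H$, and it requires only the easy embedding: every distillation sequence yields a feasible testing sequence. Combining this with the inequality in Eq.~\eqref{eq: composite Hopeffding} gives the claimed lower bound by $H^{*,\infty}_r$, stated with a $\liminf$.
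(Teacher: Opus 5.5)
Your proposal is correct and follows the paper's route: the paper simply combines the one-shot characterization of Proposition~\ref{pro: one shot distillation} with the composite Hoeffding and strong-converse bounds of Eq.~\eqref{eq: composite Hopeffding}, taking $\mA_n=\{\rho^{\otimes n}\}$ and $\mB_n=\mbF_n$, and it does not write out the embedding arguments that you give. You are right that the equality for the error exponent needs the values $D_{\rm min}(\phi_d\|\mbF)$ to have $o(n)$ gaps at scale $n$, an assumption the paper leaves implicit; however, your continuity fallback does not remove this assumption, since gaps of order $n$ cannot be repaired that way and $o(n)$ gaps already make the direct rounding argument work, whereas the strong-converse inequality needs no such assumption at all.
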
  
\end{proboxed}

Proposition~\ref{pro: exponent of resource distillation} clarifies the fundamental bound on the precision of resource distillation in the fixed-rate scenario, extending the results in Ref.~\cite{Hayashi_2002_error_exponents, Hayashi_2006_general_formula, lin_2026_exponential_analysis} to the general resource theory.

\section{Strong converse exponent of general resource dilution}

\subsection{One-shot optimal error of general resource dilution}
We consider the fundamental limitation on the precision of the resource dilution in the framework of general resource theories.
First, we consider the one-shot optimal error, where one aims to obtain a target resource state $\rho$ from a fixed reference resource state $\phi_d$.
\begin{defnboxed}
    \begin{defn}
Let $(\mbF,\mbO)$ be the resource theory with the family $\qty{\phi_d}_{d\in\mbD}$ of the reference resource states. The optimal one-shot error of resource dilution from a quantum state $\phi_d$ with the target state $\rho$ is defined as 
\bal
\mE_{\mbO, {\rm dilute}}(\rho;d)&:=1-\max_{\Lambda\in\mbO} F(\Lambda(\phi_d),\rho)
\eal

\end{defn}
\end{defnboxed}

We show that, under an assumption about the family of reference resource states, the one-shot optimal error of general resource dilution can fully be characterized by the logarithmic standard robustness as follows.
\begin{proboxed}
\begin{pro}\label{pro: one-shot dilution}
When the family $\qty{\phi_d}_{d\in\mbD}$ of reference resource states satisfy $D_{\rm min}(\phi_d\|\mbF)=LR_s(\phi_d)$, we have
\bal\label{eq: full dim one shot dilute}
\mE_{{\rm RNG}, {\rm dilute}}(\rho;d)&=\min\qty{\ve~|~D_{\rm min} (\phi_d\|\mbF)\geq LR^{\ve, I, =}_s(\rho) }.
\eal

Furthermore, when the family $\qty{\phi_d}_{d\in\mbD}$ of reference resource states satisfy $D_{\rm min}(\phi_d\|\mbF)=D_{\rm max}(\phi_d\|\mbF)$ and $\Tr[\phi_d\sigma]={\rm const.}$ for any $d\in\mbD$ and any $\sigma\in\mbF$, we have
\bal\label{eq: red dim one shot dilute}
\mE_{{\rm RNG}, {\rm dilute}}(\rho;d)&=\min\qty{\ve~|~D_{\rm min} (\phi_d\|\mbF)\geq D^{\ve, I, =}_{\rm max}(\rho\|\mbF) }.
\eal
\end{pro}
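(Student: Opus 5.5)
The plan is to reduce both claims to a characterization of exact one-shot dilution under RNG operations, and then absorb the smoothing into the error. The key fact, in the spirit of the dilution results of Ref.~\cite{Regula_benchmarking,Liu_one_shot}, is that a state $\omega$ can be produced \emph{exactly} from $\phi_d$ by some $\Lambda\in{\rm RNG}$ if and only if $LR_s(\omega)\leq D_{\rm min}(\phi_d\|\mbF)$ under the first assumption, and if and only if $D_{\rm max}(\omega\|\mbF)\leq D_{\rm min}(\phi_d\|\mbF)$ under the second. Given this fact, fidelity is maximized over all exactly reachable $\omega$:
\begin{equation*}
\max_{\Lambda\in{\rm RNG}}F(\Lambda(\phi_d),\rho)=\max\qty{F(\omega,\rho)~|~\omega\in\mD(\mH),~LR_s(\omega)\leq D_{\rm min}(\phi_d\|\mbF)}.
\end{equation*}
The right-hand side is $1-\ve$ for the smallest $\ve$ such that $LR^{\ve,I,=}_s(\rho)\leq D_{\rm min}(\phi_d\|\mbF)$. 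That is exactly Eq.~\eqref{eq: full dim one shot dilute}. Replacing $LR_s$ by $D_{\rm max}(\cdot\|\mbF)$ gives Eq.~\eqref{eq: red dim one shot dilute}. The minimum is attained by compactness of the ball and lower semicontinuity of the robustness.

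For \emph{achievability}, I will use a measure-and-prepare channel. Write $k=2^{D_{\rm min}(\phi_d\|\mbF)}$, so $\Tr[\phi_d\sigma]\leq 1/k$ for all $\sigma\in\mbF$.
\begin{itemize}
\item \emph{First case.} If $1+R_s(\omega)\leq k$, decompose $\omega=(1+s)\sigma_1-s\sigma_2$ with $\sigma_1,\sigma_2\in\mbF$ and $1+s= k$ (increasing $s$ is allowed, since $\mbF$ is convex). Set $\Lambda(X)=\Tr[\phi_dX]\,\omega+\Tr[(I-\phi_d)X]\,\sigma_2$. Then $\Lambda(\phi_d)=\omega$. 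For $\sigma\in\mbF$ with $p=\Tr[\phi_d\sigma]\leq 1/k$, the output $p\omega+(1-p)\sigma_2=pk\sigma_1+(1-pk)\sigma_2$ lies in $\mbF$.
\item \emph{Second case.} Under the second assumption $p=1/k$ is constant. If $\omega\leq k\sigma'$ with $\sigma'\in\mbF$, write $\sigma'=\frac1k\omega+(1-\frac1k)\xi$ with $\xi$ a state, and use $\Lambda(X)=\Tr[\phi_dX]\omega+\Tr[(I-\phi_d)X]\xi$. Each free input is then mapped to $\sigma'\in\mbF$. No positivity of the complementary weight beyond $p\le1/k$ is needed here, since $\xi\geq0$ holds automatically.
\end{itemize}

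For the \emph{converse}, let $\Lambda\in{\rm RNG}$ and $\omega=\Lambda(\phi_d)$.
\begin{itemize}
\item \emph{Reduced-dimensional case.} By monotonicity of $D_{\rm max}(\cdot\|\mbF)$ under RNG operations and the assumption $D_{\rm max}(\phi_d\|\mbF)=D_{\rm min}(\phi_d\|\mbF)$, we get $D_{\rm max}(\omega\|\mbF)\leq D_{\rm min}(\phi_d\|\mbF)$.
\item \emph{Full-dimensional case.} $LR_s$ is also monotone under RNG: if $\phi_d+R\tau\in(1+R)\mbF$ with $\tau\in\mbF$, applying $\Lambda$ preserves this form. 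Combined with $LR_s(\phi_d)=D_{\rm min}(\phi_d\|\mbF)$, this gives $LR_s(\omega)\leq D_{\rm min}(\phi_d\|\mbF)$.
\end{itemize}
Together with achievability this proves the key fact. Substituting into the fidelity maximization, the constraint $F(\omega,\rho)\geq1-\ve$ becomes $I(\omega,\rho)\leq\ve$, i.e.\ $\omega\in\mB^\ve_{I,=}(\rho)$, which yields both formulas. Normalized states appear because channels output normalized states, hence the ``$=$'' ball.

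I expect the main obstacle to be the first-case achievability. Its decomposition needs $1+s$ to equal $k$ exactly while $\sigma_1,\sigma_2$ remain free. It also needs the output $pk\sigma_1+(1-pk)\sigma_2$ to be a valid convex combination, which relies on $pk\leq1$. This is precisely where the definition of $D_{\rm min}$ enters, since it guarantees $\Tr[\phi_d\sigma]\leq 2^{-D_{\rm min}(\phi_d\|\mbF)}$ for every free $\sigma$. The step that inflates $s$ up to $k-1$ should be checked carefully: it replaces $\sigma_1$ by a convex mixture of $\sigma_1$ and $\sigma_2$, which stays in $\mbF$.
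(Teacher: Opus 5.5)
Your proposal is correct and follows essentially the same route as the paper. Achievability uses the measure-and-prepare channels $X\mapsto \Tr[\phi_d X]\,\tilde\rho+\Tr[(I-\phi_d)X]\,\tau$ (with $\tau$ replaced by $\eta$ in the second case), and the converse uses monotonicity of $LR_s$ and $D_{\rm max}(\cdot\|\mbF)$ under RNG together with the assumed equalities for $\phi_d$. One small difference: the paper never inflates $s$ to exactly $k-1$. It simply uses $\Tr[\phi_d\sigma]\,(1+R_s(\tilde\rho))\leq 1$ for all $\sigma\in\mbF$, so the step you flagged as the main obstacle is correct but unnecessary.
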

\end{proboxed}

\begin{proof}
We first show Eq.\eqref{eq: full dim one shot dilute}.
Since $\mbF$ is compact, we can take the minimizer $\tilde{\rho},\tau$ for $LR^{\ve,I,=}(\rho)$.
We construct the following CPTP map.
\bal
\Lambda(\omega)=\Tr[\omega \phi_d]\tilde{\rho}+\Tr[\omega(I-\phi_d)]\tau.
\eal
The necessary and sufficient condition for $\Lambda$ to be resource non-generating is formulated as
\bal
\forall \eta\in\mbF,~\frac{1}{1+R^{\ve,I,=}_s(\rho)}&\geq \Tr[\eta\phi_d],\\
\frac{1}{1+R^{\ve,I,=}_s(\rho)}&\geq \max_{\eta\in\mbF}\Tr[\eta\phi_d],\\
2^{-LR^{\ve,I,=}_s(\rho)}&\geq 2^{-D_{\rm min}(\phi_d\|\mbF)},\\
LR^{\ve,I,=}_s(\rho)&\leq D_{\rm min}(\phi_d\|\mbF).
\eal
Also, it is obvious that $\Lambda(\phi_d)=\tilde{\rho}$ holds.
From this, it follows that
\bal
\mE_{{\rm RNG}, {\rm dilute}}(\rho;d)\leq \min\qty{\ve~|~D_{\rm min} (\phi_d\|\mbF)\geq LR^{\ve,I,=}_s(\rho) }.
\eal

Now, we prove the converse part $(\geq)$.
Let $\tilde{\Lambda}\in\mbO$ denote the optimal protocol which achieves the optimization of $\max_{\Lambda\in\mbO}F(\Lambda(\phi_d),\rho)$.
The converse inequality is proven as follows.
\bal
D_{\rm min}(\phi_d\|\mbF)=LR_s(\phi_d)&\geq LR_s(\tilde{\Lambda}(\phi_d))\\
&\geq \min_{\tilde{\rho}\in\mB^{\mE_{{\rm RNG}, {\rm dilute}}(\rho;d)}(\rho),I,=}LR_s(\tilde{\rho})\\
&= LR_s^{\mE_{{\rm RNG}, {\rm dilute}}(\rho;d),I,=}(\rho).
\eal
From this, we can see that 
\bal
\mE_{{\rm RNG}, {\rm dilute}}(\rho;d)\geq \min\qty{\ve~|~D_{\rm min} (\phi_d\|\mbF)\geq LR^{\ve,I,=}_s(\rho) },
\eal
which concludes the proof of Eq.~\eqref{eq: full dim one shot dilute}. 

We next prove Eq.~\eqref{eq: red dim one shot dilute}.
Let us take the minimizer $\ve^*$ of $\min\qty{\ve~|~D_{\rm min} (\phi_d\|\mbF)\geq D_{\rm min}^{\ve^*,I,=}(\rho\|\mbF) }$.
Recall that $D^{\ve^*, I,=}_{\rm max}(\rho\|\mbF)$ is equivalent to the generalized robustness as
\bal
D^{\ve^*,=}_{\rm max}(\rho\|\mbF)=LR^{\ve^*,=}_g(\rho)=\log\min_{\tilde{\rho}\in\mB^{\ve^*,=}(\rho)}\qty{\lambda~|~\frac{\tilde{\rho}+(\lambda-1)\eta}{\lambda}\in\mbF, ~\eta\in\mD(\mH)}.
\eal
From this, there exists $\tilde{\rho}\in\mB^{\ve^*}_{I,=}(\rho)$, $\eta\in\mD(\mH)$ such that 
\bal
\frac{\tilde{\rho}+(2^{D_{\rm min} (\phi_d\|\mbF)}-1)\eta}{2^{D_{\rm min} (\phi_d\|\mbF)}}\in\mbF.
\eal

We define a CPTP map
\bal
\Lambda(\omega):=\Tr[\omega\phi_d]\tilde{\rho}+\Tr[\omega(I-\phi_d)]\eta.
\eal
$\Lambda$ is in fact a free operation, since it holds that for any $\sigma\in\mbF$
\bal
\Lambda(\sigma)&=\Tr[\sigma\phi_d]\tilde{\rho}+\Tr[\sigma(I-\phi_d)]\eta\\
&=2^{-D_{\rm min}(\phi_d\|\mbF)}\tilde{\rho}+(1-2^{-D_{\rm min}(\phi_d\|\mbF)})\eta\in\mbF.
\eal
Moreover, if we feed $\phi_d$, it holds that 
\bal
\Lambda(\phi_d)=\tilde{\rho}\in\mB^{\ve^*}_{I, =}(\rho),
\eal
which implies 
\bal
\mE_{{\rm RNG}, {\rm dilute}}(\rho;d)&\leq \min\qty{\ve~|~D_{\rm min} (\phi_d\|\mbF)\geq D^{\ve, I, =}_{\rm max}(\rho\|\mbF) }.
\eal

We prove the converse part $(\geq )$.
Let $\tilde{\Lambda}\in\mbO$ denote the optimal protocol which achieves the optimization of $\max_{\Lambda\in\mbO}F(\Lambda(\phi_d),\rho)$.
The converse inequality is proven as follows.
\bal
D_{\rm min}(\phi_d\|\mbF)=D_{\rm max}(\phi_d\|\mbF)&\geq D_{\rm max}(\tilde{\Lambda}(\phi_d)\|\mbF)\\
&\geq \min_{\tilde{\rho}\in\mB^{\mE_{{\rm RNG}, {\rm dilute}}(\rho;d)}(\rho),I,=}D_{\rm max}(\rho\|\mbF)\\
&=D^{\mE_{{\rm RNG}, {\rm dilute}}(\rho;d), I,=}_{\rm max}(\rho\|\mbF).
\eal

From this, we have
\bal
\mE_{{\rm RNG}, {\rm dilute}}(\rho;d)&\geq \min\qty{\ve~|~D_{\rm min} (\phi_d\|\mbF)\geq D^{\ve, I, =}_{\rm max}(\rho\|\mbF) },
\eal
which concludes the proof of Eq.~\eqref{eq: red dim one shot dilute}
\end{proof}

From this, we could characterize the error exponent and the strong converse exponent of the resource dilution, but it is, in general, extremely hard to consider the limit of the standard robustness. 
Instead, we consider a relaxed scenario where the available operations are the asymptotic resource non-generating operations, which may create resources from scratch, but the increase in resources can be negligible in the asymptotic limit.
Before going into the details, we first begin by defining the class of available operations in this setting.
\begin{defnboxed}
    \begin{defn}
    Let $\mbF\subset\mD(\mH), \mbF'\subset\mD(\mH')$ be the sets of free states of the input and the output systems, and $\delta>0$ be a positive number. Here, $\delta$-resource-non-generating operation is defined as
    \bal
    {\rm RNG}_{\delta}=\qty{\Lambda:\mD(\mH)\to\mD(\mH')~|~\forall\sigma\in\mbF, ~R_g(\Lambda(\sigma))\leq \delta}.
    \eal
    Furthermore, the sequence of CPTP maps $\qty{\Lambda_n:\mD(\mH_n)\to\mD(\mH'_n)}_{n\in\mbN}$ is called asymptotically resource-non-generating operations if each $\Lambda_n$ is $\delta_n$-resource-non-generating for a sequence $\qty{\delta_n}_{n\in\mbN}$ with $\delta_n\to 0$.
\end{defn}
\end{defnboxed}

For a while, we consider the one-shot resource dilution with $\delta$-resource non-generating operations.
We define the optimal error of the resource dilution $\phi_d\mapsto\rho$ as 
\bal
\mE_{{\rm RNG}_\delta, {\rm dilute}}(\rho;d):=1-\max_{\Lambda\in{\rm RNG}_\delta}F(\Lambda(\phi_d),\rho)
\eal

Note that the class of ${\rm RNG}$ is included in $ {\rm RNG}_\delta$.

The optimal error of the resource dilution under $\delta$-resource non-generating operations is characterized by the generalized robustness, equivalently the smoothed max relative entropy as follows.
\begin{proboxed}
    \begin{pro}\label{pro: converse dilution ARNG}
    Suppose that the family of the reference resource states satisfy $D_{\min}(\phi_d\|\mbF)=D_{\max}(\phi_d\|\mbF)$. Then, the following holds.
    \bal
    \mE_{{\rm RNG}_\delta, {\rm dilute}}(\rho;d)\geq\min\qty{\ve~|~\log(1+\delta)+D(\phi_d\|\mbF)\geq LR_g^{\ve,I,=}(\rho)}
    \eal
\end{pro}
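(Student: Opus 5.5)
This is the converse direction, run in parallel with the converse part of Proposition~\ref{pro: one-shot dilution}, Eq.~\eqref{eq: red dim one shot dilute}. We read $D(\phi_d\|\mbF)$ in the statement as the common value $D_{\min}(\phi_d\|\mbF)=D_{\max}(\phi_d\|\mbF)$.

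Let $\tilde\Lambda\in{\rm RNG}_\delta$ be an optimal protocol, so that $\ve^*:=\mE_{{\rm RNG}_\delta,{\rm dilute}}(\rho;d)$ satisfies $\tilde\rho:=\tilde\Lambda(\phi_d)\in\mB^{\ve^*}_{I,=}(\rho)$. Optimality is attained because the channels satisfying the constraint form a compact set. Since $LR_g^{\ve,I,=}(\rho)$ is nonincreasing in $\ve$, it suffices to show
\[
LR_g(\tilde\rho)\leq \log(1+\delta)+D_{\max}(\phi_d\|\mbF),
\]
because then $LR_g^{\ve^*,I,=}(\rho)\le LR_g(\tilde\rho)$ gives the claim.

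\textbf{Step 1.} Write $\lambda:=2^{D_{\max}(\phi_d\|\mbF)}$ and pick $\sigma\in\mbF$ with $\phi_d\le\lambda\sigma$. The minimum exists by compactness of $\mbF$. Applying the positive map $\tilde\Lambda$ gives $\tilde\rho\le\lambda\tilde\Lambda(\sigma)$.

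\textbf{Step 2.} Since $\tilde\Lambda$ is $\delta$-resource non-generating, $R_g(\tilde\Lambda(\sigma))\le\delta$. Because $LR_g=D_{\max}(\cdot\|\mbF)$, there is $\sigma'\in\mbF$ with $\tilde\Lambda(\sigma)\le(1+\delta)\sigma'$. Combining with Step 1,
\[
\tilde\rho\le\lambda(1+\delta)\sigma',
\]
which means $D_{\max}(\tilde\rho\|\mbF)\le\log(1+\delta)+D_{\max}(\phi_d\|\mbF)$. This is exactly the bound needed, and the conclusion follows from the monotonicity in $\ve$ noted above.

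\textbf{Main obstacle.} The argument is essentially one-line; the only subtle point is the identification $LR_g=D_{\max}(\cdot\|\mbF)$ in the smoothed version with normalized states. The identification holds since $\rho+R\eta\in(1+R)\mbF$ is equivalent to $\rho\le(1+R)\sigma'$ for some $\sigma'\in\mbF$.

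We also check where the hypothesis $D_{\min}=D_{\max}$ is used. The bound in Step 2 only needs $D_{\max}(\phi_d\|\mbF)$. The hypothesis is needed only to match the notation $D(\phi_d\|\mbF)$ in the statement with the quantity $D_{\min}(\phi_d\|\mbF)$ used for the achievability direction.
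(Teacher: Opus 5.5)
Your proposal is correct and follows the paper's proof. The paper cites Lemma~IV.1 of Ref.~\cite{Brand_o_2010} for $D_{\max}(\tilde\Lambda(\phi_d)\|\mbF)\le D_{\max}(\phi_d\|\mbF)+\log(1+\delta)$, which your Steps~1--2 re-derive directly, and then concludes via $LR_g^{\ve^*,I,=}(\rho)\le LR_g(\tilde\Lambda(\phi_d))$ exactly as you do. Your reading of $D(\phi_d\|\mbF)$ is also literally justified, not just notational: $D_{\min}(\phi_d\|\mbF)\le D(\phi_d\|\mbF)\le D_{\max}(\phi_d\|\mbF)$ holds in general, so the hypothesis forces all three to be equal.
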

\end{proboxed}

\begin{proof}
    It is shown in \cite[Lemma IV.1]{Brand_o_2010} that for any $\Lambda\in{\rm RNG}_\delta$ and $\rho\in\mD(\mH)$ it holds that
    \bal
    D_{\rm max}(\Lambda(\rho)\|\mbF)\leq  D_{\rm max}(\rho\|\mbF)+\log(1+\delta).
    \eal
    Let $\tilde{\Lambda}$ be the optimal channel which achieves $\ve=\mE_{{\rm RNG}_\delta, {\rm dilute}}(\rho;d)$. Then,
    \bal
    D_{\min}(\phi_d\|\mbF)+\log(1+\delta)=D_{\rm max}(\phi_d\|\mbF)+\log(1+\delta)&\geq D_{\rm max}(\Lambda(\phi_d))\\
    &\geq \min_{\tilde{\rho}\in\mB^{\ve,I,=}(\rho)}LR_g(\tilde{\rho})\\
    &\geq LR_g^{\ve,I,=}(\rho)
    \eal
    holds, which concludes the proof.
\end{proof}

\subsection{Strong converse exponent of the smoothed max relative entropy}
As we will see later, the performance of the resource dilution is tightly connected to quantities such as the standard robustness and the generalized robustness, which is known to be equivalent to the max relative entropy. 
Before discussing the fundamental limitation on the precision of resource dilution, we consider the error exponent and the strong converse exponent of the smoothed max relative entropy.

First, we define the optimal parameter of the smoothing as follows.
\begin{defnboxed}
    \begin{defn}
        Let $d$ be the distance measure, and $(*)$ be either $=$ or $\leq$. The optimal parameter $\ve^{d, (*)}(\rho\|\sigma;\lambda)$ of the smoothed max relative entropy $D^{\ve,d ,(*)}_{\rm max}(\rho\|\sigma)$ with respect to two quantum states $\rho,\sigma\in\mD(\mH)$ is defined as 
        \bal
        \ve^{d, (*)}(\rho\|\sigma;\lambda):=\min\qty{\ve\geq 0~|~D^{\ve, d,(*)}_{\rm max}(\rho\|\sigma)\leq\lambda}.
        \eal
        Similarly, the optimal parameter $\ve^{d, (*)}(\mA\|\mB;\lambda)$ of the smoothed max relative entropy $D^{\ve,d ,(*)}_{\rm max}(\mA\|\mB)$ with respect to two sets $\mA,\mB\subset\mD(\mH)$ of quantum states is defined as
        \bal
        \ve^{d, (*)}(\mA\|\mB;\lambda):=\min\qty{\ve\geq 0~|~D^{\ve, d,(*)}_{\rm max}(\mA\|\mB)\leq\lambda}.
        \eal
    \end{defn}
\end{defnboxed}
From the definition, the optimal parameter $\ve^{d, (*)}(\mA\|\mB;\lambda)$ with respect to two sets $\mA,\mB\subset\mD(\mH)$ of quantum states is reduced to that with respect to the quantum states minimized over $\mA$ and $\mB$, which can be checked as 
\bal
\ve^{d, (*)}(\mA\|\mB;\lambda):&=\min\qty{\ve\geq 0~|~D^{\ve, d,(*)}_{\rm max}(\mA\|\mB)\leq\lambda}\\
&=\min\qty{\ve\geq 0~|~\min_{\substack{\rho\in\mA\\ \sigma\in \mB}}D^{\ve, d,(*)}_{\rm max}(\rho\|\sigma)\leq\lambda}\\
&=\min_{\substack{\rho\in\mA\\ \sigma\in \mB}}\qty{\ve\geq 0~|~D^{\ve, d,(*)}_{\rm max}(\rho\|\sigma)\leq\lambda}\\
&=\min_{\substack{\rho\in\mA\\ \sigma\in \mB}}\ve^{d, (*)}(\rho\|\sigma;\lambda).
\eal

In the following, we discuss the strong converse exponent of the smoothed max relative entropy having two sets of states as the arguments.

\begin{proboxed}
 \begin{pro}\label{pro: lower bound of SC of = smoothed max relative entropy}
    Let $\mA=\qty{\mA_n}_n, \mB=\qty{\mB_n}_n$ be two sequences of sets such that each $\mA_n,\mB_n$ is convex and compact, and both sequences satisfy $\mA_n\otimes \mA_m\subset\mA_{m+n}$ and $\mB_n\otimes \mB_m\subset\mB_{m+n}$.
    Then, the following holds.
    \bal
    \liminf_{n\to\infty}-\frac{1}{n}\log(1-\ve^{I,=}(\mA_n\|\mB_n, nr))\geq \lim_{n\to\infty}\frac{1}{n} \min_{\substack{\rho_n\in\mA_n \\ \sigma_n\in\mB_n}}\sup_{\frac{1}{2}\leq \alpha\leq 1}\frac{\alpha-1}{\alpha}(nr-\sD_\alpha(\rho_n\|\sigma_n)),
    \eal
    which immediately provides a bound for $B_M^{*,I,=}$ defined in \eqref{eq:smoothed max error exponents def} as
    \bal
    B_M^{*,I,=}(\mA\|\mB |r)\geq \lim_{n\to\infty}\frac{1}{n} \min_{\substack{\rho_n\in\mA_n \\ \sigma_n\in\mB_n}}\sup_{\frac{1}{2}\leq \alpha\leq 1}\frac{\alpha-1}{\alpha}(nr-\sD_\alpha(\rho_n\|\sigma_n)).
    \eal
\end{pro}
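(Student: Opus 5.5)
The plan is to reduce the set-valued quantity to a pair-wise one-shot bound, and then pass to the limit using the tensor-product stability of $\mA$ and $\mB$. By the identity established just before the statement, $\ve^{I,=}(\mA_n\|\mB_n;nr)=\min_{\rho_n\in\mA_n,\sigma_n\in\mB_n}\ve^{I,=}(\rho_n\|\sigma_n;nr)$. Hence it suffices to give, for every pair $(\rho_n,\sigma_n)$, a one-shot upper bound on $1-\ve^{I,=}(\rho_n\|\sigma_n;nr)$ in terms of sandwiched Rényi divergences with $\alpha\in[\tfrac12,1)$.

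The core one-shot inequality I would aim for is the following. Suppose $\tilde\rho\in\mB^{\ve}_{I,=}(\rho)$ satisfies $\tilde\rho\le 2^{\lambda}\sigma$. Then for every $\alpha\in[\tfrac12,1)$,
\[
\log(1-\ve)=\log F(\rho,\tilde\rho)\le \frac{1-\alpha}{\alpha}\bigl(\lambda-\sD_\alpha(\rho\|\sigma)\bigr).
\]
To prove it I would use two facts. First, data processing for $\sD_\alpha$ gives monotonicity in the second argument for $\alpha\ge\tfrac12$: from $\tilde\rho\le 2^\lambda\sigma$ we get $\sD_\alpha(\rho\|\tilde\rho)\ge \sD_\alpha(\rho\|\sigma)-\lambda$. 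Second, we need $\sD_\alpha(\rho\|\tilde\rho)\le \frac{\alpha}{1-\alpha}(-\log F(\rho,\tilde\rho))$ for $\alpha\in[\tfrac12,1)$. Since $\sD_{1/2}=-\log F$, this amounts to an interpolation between $\alpha=1/2$ and $\alpha\to1$. The standard route is that $\frac{1-\alpha}{\alpha}\sD_\alpha$ is nonincreasing in $\alpha$ on $[\tfrac12,1)$, which follows from log-convexity of the Schatten-norm expression $\alpha\mapsto \|\sigma^{(1-\alpha)/2\alpha}\rho\,\sigma^{(1-\alpha)/2\alpha}\|_\alpha$ (Riesz–Thorin / Hadamard three-lines in the variable $1/\alpha$), as in the strong-converse analysis of Li et al. for smoothed max relative entropy. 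Combining the two facts and optimizing over $\alpha$ gives
\[
-\log(1-\ve^{I,=}(\rho_n\|\sigma_n;nr))\ge \sup_{\frac12\le\alpha<1}\frac{\alpha-1}{\alpha}\bigl(nr-\sD_\alpha(\rho_n\|\sigma_n)\bigr).
\]
The endpoint $\alpha=1$ contributes $0$ and is harmless.

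Next I minimize over pairs. Taking $\min_{\rho_n,\sigma_n}$ on both sides yields, for each $n$,
\[
-\frac1n\log\bigl(1-\ve^{I,=}(\mA_n\|\mB_n;nr)\bigr)\ge \frac1n\min_{\rho_n,\sigma_n}\sup_{\alpha}\frac{\alpha-1}{\alpha}\bigl(nr-\sD_\alpha(\rho_n\|\sigma_n)\bigr).
\]
Taking $\liminf$ then gives the claim, provided the right-hand limit exists. For existence, I would show that $a_n:=\min_{\rho_n,\sigma_n}\sup_\alpha(\cdots)$ is subadditive. If $(\rho_n,\sigma_n)$ and $(\rho_m,\sigma_m)$ are optimal, their tensor products lie in $\mA_{n+m}\times\mB_{n+m}$. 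Additivity of $\sD_\alpha$ then gives $\sup_\alpha[f_n(\alpha)+f_m(\alpha)]\le\sup f_n+\sup f_m$, so $a_{n+m}\le a_n+a_m$. Since $a_n\ge 0$, Fekete's lemma gives $\lim a_n/n=\inf a_n/n$. The bound on $B_M^{*,I,=}$ follows immediately: any admissible $\ve_n$ with $\frac1n D_{\max}^{\ve_n}\le r$ asymptotically dominates the optimal parameter, up to a vanishing perturbation of $r$ absorbed by continuity of the right-hand side in $r$.

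The step I expect to be the main obstacle is the monotonicity of $\frac{1-\alpha}{\alpha}\sD_\alpha$, i.e. the fidelity-to-$\sD_\alpha$ interpolation. If it proves delicate, I would fall back on the known single-state result for $B^{*,P,\le}_M$ stated earlier and convert purified distance to infidelity via $1-\ve=F$. Minor care is also needed with the supports, where $\sD_\alpha=\infty$ cases are trivial.
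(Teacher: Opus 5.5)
Your proposal is correct and is essentially the paper's proof. The paper imports your one-shot inequality directly as Eq.~(L9) of Ref.~\cite{Wang_2019}, substitutes $\ve=\ve^{I,=}(\rho_n\|\sigma_n,nr)$, takes the supremum over $\alpha\in[\tfrac12,1)$ and the minimum over pairs, and passes to the limit; it proves existence of that limit by subadditivity and Fekete's lemma, exactly as you argue.

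The only extra step on your side is deriving (L9) from anti-monotonicity in the second argument plus the bound $\frac{1-\alpha}{\alpha}\sD_\alpha(\rho\|\tilde\rho)\le-\log F(\rho,\tilde\rho)$. Log-convexity in $1/\alpha$ alone does not give this. It only bounds the intermediate Schatten norm from above, whereas you need monotonicity in $\alpha$. That monotonicity follows immediately from the pinching representation $\sD_\alpha(\rho\|\tilde\rho)=\lim_n\frac1n D_\alpha(\mP_{\tilde\rho^{\otimes n}}(\rho^{\otimes n})\|\tilde\rho^{\otimes n})$ combined with the classical power-mean inequality.
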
   
\end{proboxed}
Before the proof of this proposition, we remark that the limit
\bal
\lim_{n\to\infty}\frac{1}{n} \min_{\substack{\rho_n\in\mA_n \\ \sigma_n\in\mB_n}}\sup_{\frac{1}{2}\leq \alpha\leq 1}\frac{\alpha-1}{\alpha}(nr-\sD_\alpha(\rho_n\|\sigma_n))
\eal
appearing in the right-hand side indeed exists. To show this, it suffices to show that the sequence
\bal
a_n:=\min_{\substack{\rho_n\in\mA_n \\ \sigma_n\in\mB_n}}\sup_{\frac{1}{2}\leq \alpha\leq 1}\frac{\alpha-1}{\alpha}(nr-\sD_\alpha(\rho_n\|\sigma_n))
\eal
is subadditive, i.e., $a_n+a_m\geq a_{n+m}$ holds for any $n,m\in\mbN$. To this end, we fix arbitrary $\rho_n\in\mA_n, \rho_m\in\mA_m, \sigma_n\in\mB_n,\sigma_m\in\mB_m $.
It holds that
\bal
&\sup_{\frac{1}{2}<\alpha<1}\frac{\alpha-1}{\alpha}\qty((n+m)r-\sD_\alpha(\rho_n\otimes \rho_m\|\sigma_n\otimes \sigma_m))\\
&\quad=\sup_{\frac{1}{2}<\alpha<1}\frac{\alpha-1}{\alpha}\qty(nr-\sD_\alpha(\rho_n\|\sigma_n)+mr-\sD_\alpha(\rho_m\|\sigma_m))\\
&\quad\leq \sup_{\frac{1}{2}<\alpha<1}\frac{\alpha-1}{\alpha}\qty(nr-\sD_\alpha(\rho_n\|\sigma_n))+\sup_{\frac{1}{2}<\alpha<1}\frac{\alpha-1}{\alpha}\qty(mr-\sD_\alpha(\rho_m\|\sigma_m))
\eal
Taking the minimization over $\rho_n\in\mA_n, \rho_m\in\mA_m, \sigma_n\in\mB_n,\sigma_m\in\mB_m $, we have
\bal
a_n+a_m&\geq \min_{\substack{\rho_n\in\mA_n, \rho_m\in\mA_m\\ \sigma_n\in\mB_n,\sigma_m\in\mB_m}}\sup_{\frac{1}{2}<\alpha<1}\frac{\alpha-1}{\alpha}\qty((n+m)r-\sD_\alpha(\rho_n\otimes \rho_m\|\sigma_n\otimes \sigma_m))\\
&\geq \min_{\substack{\rho_{n+m}\in\mA_{n+m}\\ \sigma_{n+m}\in\mB_{n+m}}}\sup_{\frac{1}{2}<\alpha<1}\frac{\alpha-1}{\alpha}\qty((n+m)r-\sD_\alpha(\rho_{n+m}\|\sigma_{n+m}))=a_{n+m}.
\eal
Here, in the last inequality, we used the condition $\mA_n\otimes \mA_m\subset\mA_{m+n}, \mB_n\otimes \mB_m\subset\mB_{m+n}$. 
From Fekete's lemma~\cite{fekete_uber_1923}, we reach the claim.

\begin{proof}[Proof of Proposition~\ref{pro: lower bound of SC of = smoothed max relative entropy} ]

    The key ingredient is Eq.~(L9) in Ref.~\cite{Wang_2019}
    \bal\label{Eq: lower bound of max relative entropy}
    D_{\max}^{\ve ,I,=}(\rho\|\sigma)\geq \sD_\alpha(\rho\|\sigma)+\frac{\alpha}{\alpha-1}\log_2\qty(\frac{1}{1-\ve}), ~\alpha\in[\frac{1}{2},1).
    \eal
Let us fix an arbitrary $\rho_n\in\mA_n$ and $\sigma_n\in\mB_n$.
Recall that $\ve^{I,=}(\rho_n\|\sigma_n,nr)$ is defined as
\bal
\ve^{I,=}(\rho_n\|\sigma_n,nr)=\min\qty{\ve~|~nr\geq D^{\ve, I,=}_{\rm max} (\rho_n\|\sigma_n)}.
\eal
Substituting this $\ve^{I,=}$, for any $\alpha\in[\frac{1}{2},1)$, it holds that
\bal
nr\geq D^{\ve^{I,=}(\rho_n\|\sigma_n, nr), I,=}_{\rm max}(\rho_n\|\sigma_n)&\geq \sD_\alpha (\rho_n\|\sigma_n)+\frac{\alpha}{\alpha-1}\log \qty(\frac{1}{1-\ve^{I,=}(\rho_n\|\sigma_n, nr)}),
\eal
which leads to 
\bal
-\log(1-\ve^{I,=}(\rho_n\|\sigma_n, nr))&\geq \frac{\alpha-1}{\alpha}(rn-\sD(\rho_n\|\sigma_n)).
\eal
Taking supremum over $\alpha$ and taking minimum over $\rho_n\in\mA_n$ and $\sigma_n\in\mB_n$, it holds that 
\bal
\min_{\substack{\rho_n\in\mA_n \\ \sigma_n\in\mB_n }}-\log(1-\ve^{I,=}(\rho_n\|\sigma_n, nr))=-\log(1-\ve^{I,=}(\mA_n\|\mB_n, nr))\geq \min_{\substack{\rho_n\in\mA_n \\ \sigma_n\in\mB_n }}\sup_{\alpha\in[1/2,1)}\frac{\alpha-1}{\alpha}(rn-\sD_\alpha(\rho_n\|\sigma_n))
\eal
Dividing by $n$ and taking the limit ${n\to\infty}$, it holds that 
\bal
\liminf_{n\to\infty}-\frac{1}{n}\log(1-\ve^{I,=}(\mA_n\|\mB_n, nr))\geq \lim_{n\to\infty}\frac{1}{n}\min_{\substack{\rho_n\in\mA_n \\ \sigma_n\in\mB_n }}\sup_{\alpha\in[1/2,1)}\frac{\alpha-1}{\alpha}(rn-\sD_\alpha(\rho_n\|\sigma_n)).
\eal

\end{proof}
Proposition~\ref{pro: lower bound of SC of = smoothed max relative entropy} extends the result in Ref.~\cite{li_operational_2024, Wang_2019} to the composite case.

\subsection{Strong converse exponent of general resource dilution}

We consider the strong converse exponent of resource dilution, the exponent with which the error of the resource dilution converges to $1$. Such a situation occurs when one attempts to obtain the target state from the initial reference state that is smaller than the optimal dimension required for noiseless dilution.
The strong converse exponent of resource dilution is defined as follows.

\begin{defnboxed}
    \begin{defn}
    Let $(\mbF,\mbO)$ be the resource theory with the family $\qty{\phi_d}_{d\in\mbD}$ of the reference resource states. Suppose that we aim to obtain multiple copies of the target state $\rho\in\mD(\mH)$.
    Here, the strong converse exponent of the resource dilution with the rate $r$ with respect to a resource measure $\mfR$ is defined, respectively, as 
    \bal
    B^*_{\mbO, {\rm dilute},\mfR}(\qty{\rho^{\otimes n}}_{n\in\mbN};r)&:=\inf\qty{\limsup_{n\to\infty}-\frac{1}{n}\log(1-\mE_{\mbO, {\rm dilute}}(\rho^{\otimes n};d_n))~|~\exists \qty{d_n}\in \mbD^\mbN~\mbox{s.t.}~\limsup_{n\to\infty}\frac{1}{n}\mfR(\phi_{d_n})\leq r }
    \eal
\end{defn}
\end{defnboxed}

The following proposition gives us a lower bound on the resource dilution under the resource-nongenerating operations.

\begin{proboxed}
    \begin{pro}
    Suppose that the family of the reference resource states satisfies $D_{\min}(\phi_d\|\mbF)=LR_s(\phi_d)$.
    Then, the strong converse exponent of the resource dilution to the target state $\rho$ with the resource-nongenerating operation with rate $r$ with respect to $\mfR(\cdot)=D_{\rm min}(\cdot\|\mbF)$ is bounded as 
    \bal\label{eq: full dim asymptotic dilute}
    B^*_{ {\rm RNG, dilute},D_{\rm min}}(\qty{\rho^{\otimes n}}_{n\in\mbN};r)\geq \limsup_{n\to\infty}\frac{1}{n} \min_{\substack{\rho_n\in\mA_n \\ \sigma_n\in\mB_n}}\sup_{\frac{1}{2}\leq \alpha\leq 1}\frac{\alpha-1}{\alpha}(nr-\sD_\alpha(\rho_n\|\sigma_n)).
    \eal
    Moreover, suppose that the family of the reference resource states satisfies $D_{\min}(\phi_d\|\mbF)=D_{\max}(\phi_d\|\mbF)$ and $Tr[\phi_d\sigma]={\rm const.}$ for any $\sigma\in\mbF$.
    Then, the strong converse exponent of the resource dilution to the target state $\rho$ with the resource-nongenerating operation with rate $r$ with respect to $\mfR(\cdot)=D_{\rm min}(\cdot\|\mbF)$ is characterized as 
    \bal\label{eq: red dim asymptotic dilute}
    B^*_{ {\rm RNG, dilute},D_{\rm min}}(\qty{\rho^{\otimes n}}_{n\in\mbN};r)\geq\limsup_{n\to\infty}\frac{1}{n} \min_{\substack{\rho_n\in\mA_n \\ \sigma_n\in\mB_n}}\sup_{\frac{1}{2}\leq \alpha\leq 1}\frac{\alpha-1}{\alpha}(nr-\sD_\alpha(\rho_n\|\sigma_n)).
    \eal
\end{pro}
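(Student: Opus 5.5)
The plan is to reduce both claimed bounds to Proposition~\ref{pro: lower bound of SC of = smoothed max relative entropy}, using the one-shot characterizations of Proposition~\ref{pro: one-shot dilution}. The right-hand sides should be read with $\mA_n=\qty{\rho^{\otimes n}}$ and $\mB_n=\mbF_n$. Both sequences are convex and compact and closed under tensor products, assuming as usual that $\mbF_n\otimes\mbF_m\subset\mbF_{n+m}$. So the limit on the right exists by the Fekete argument given after that proposition, and the $\limsup$ coincides with it.

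\textbf{Reduced-dimensional case, Eq.~\eqref{eq: red dim asymptotic dilute}.}
1. Fix any admissible sequence $\qty{d_n}$ with $\limsup_n\frac1n D_{\min}(\phi_{d_n}\|\mbF_n)\le r$.
2. For any $\delta>0$, for all large $n$ we have $D_{\min}(\phi_{d_n}\|\mbF_n)\le n(r+\delta)$.
3. By Eq.~\eqref{eq: red dim one shot dilute}, $\ve_n:=\mE_{\rm RNG,dilute}(\rho^{\otimes n};d_n)$ is the smallest $\ve$ with $D^{\ve,I,=}_{\max}(\rho^{\otimes n}\|\mbF_n)\le D_{\min}(\phi_{d_n}\|\mbF_n)$.
4. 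Since $D^{\ve,I,=}_{\max}$ is nonincreasing in $\ve$, step 2 gives $\ve_n\ge\ve^{I,=}(\rho^{\otimes n}\|\mbF_n;n(r+\delta))$.
5. Hence $-\frac1n\log(1-\ve_n)\ge-\frac1n\log(1-\ve^{I,=}(\mA_n\|\mB_n;n(r+\delta)))$. Taking $\limsup$ on the left and $\liminf$ on the right, Proposition~\ref{pro: lower bound of SC of = smoothed max relative entropy} bounds it below by the regularized expression at rate $r+\delta$.
6. Take the infimum over admissible $\qty{d_n}$, then let $\delta\to0$.

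The limit $\delta\to0$ needs continuity of the regularized quantity in $r$. The $n$-th term is a supremum over $\alpha\in[1/2,1)$ of functions affine in $r$ with slopes $\frac{\alpha-1}{\alpha}\in[-1,0]$. After dividing by $n$, the $n$-th term is therefore $1$-Lipschitz in $r$, uniformly in $n$, and so is the limit. This removes $\delta$.

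\textbf{Full-dimensional case, Eq.~\eqref{eq: full dim asymptotic dilute}.}
Here Eq.~\eqref{eq: full dim one shot dilute} involves $LR^{\ve,I,=}_s$ instead of $D^{\ve,I,=}_{\max}$. The extra step is the inequality $LR_s(\omega)\ge LR_g(\omega)=D_{\max}(\omega\|\mbF)$, which follows from $R_g\le R_s$. Minimizing over the $\ve$-ball gives $LR^{\ve,I,=}_s(\rho)\ge D^{\ve,I,=}_{\max}(\rho\|\mbF)$. Consequently the minimal $\ve$ with $LR^{\ve,I,=}_s\le D_{\min}(\phi_{d_n}\|\mbF_n)$ is at least the minimal $\ve$ with $D^{\ve,I,=}_{\max}\le D_{\min}(\phi_{d_n}\|\mbF_n)$. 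From there the argument is identical to steps 2--6 above.

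\textbf{Expected obstacle.} The likely sticking point is the bookkeeping of the rate condition. The definition uses a $\limsup$ over $n$ for admissible sequences, while Proposition~\ref{pro: lower bound of SC of = smoothed max relative entropy} is stated at an exact rate $nr$. This is handled by the $\delta$-slack and the uniform Lipschitz continuity in $r$, and that continuity should be checked carefully. A second point to verify is that the minimum in $\ve^{I,=}(\mA_n\|\mB_n;\cdot)$ is attained over $\mbF_n$, which follows from compactness.
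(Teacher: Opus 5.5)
Your proposal is correct. For the reduced-dimensional bound it follows the paper's argument: the converse half of Eq.~\eqref{eq: red dim one shot dilute}, then passage to $\ve^{I,=}(\rho^{\otimes n}\|\mbF_n;\cdot)$ by inclusion of feasible sets, then Proposition~\ref{pro: lower bound of SC of = smoothed max relative entropy}.

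For the full-dimensional bound you take a different route. The paper does not use Eq.~\eqref{eq: full dim one shot dilute} there. Instead it enlarges RNG to the asymptotically resource non-generating class and applies Proposition~\ref{pro: converse dilution ARNG}. That proposition combines the Brand\~ao--Plenio estimate $D_{\max}(\Lambda(\omega)\|\mbF)\le D_{\max}(\omega\|\mbF)+\log(1+\delta_n)$ for $\delta_n$-RNG maps with $D_{\min}(\phi_d\|\mbF)=D_{\max}(\phi_d\|\mbF)$, and the offset $\log(1+\delta_n)$ disappears after dividing by $n$. This step tacitly needs $D_{\min}(\phi_d\|\mbF)=D_{\max}(\phi_d\|\mbF)$. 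That follows from the hypothesis via $D_{\min}(\phi_d\|\mbF)\le D_{\max}(\phi_d\|\mbF)=LR_g(\phi_d)\le LR_s(\phi_d)$, which is the same inequality $LR_g\le LR_s$ that drives your reduction.

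Your route uses the exact one-shot $LR_s$ characterization and the pointwise bound $LR^{\ve,I,=}_s(\rho)\ge D^{\ve,I,=}_{\max}(\rho\|\mbF)$. It is shorter and needs only exact RNG monotonicity of $LR_s$. The paper's detour buys a stronger conclusion: the same lower bound holds even under asymptotically resource non-generating operations.

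Your $\delta$-slack combined with the uniform $1$-Lipschitz dependence on $r$ also adds rigor. The paper's chain imposes $D_{\min}(\phi_{d_n}\|\mbF)\le rn$ for every $n$, rather than the $\limsup$ condition in the definition of $B^*_{\mbO,{\rm dilute},\mfR}$. It therefore skips exactly the bookkeeping you flagged as the main obstacle.
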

\end{proboxed}

\begin{proof}
We first consider the strong converse exponent of resource dilution under the resource non-generating operations.
Since the sequence of the resource non-generating operations is included in the class of asymptotically resource non-generating operations, we have
\bal
B^*_{ {\rm RNG, dilute},D_{\rm min}}(\qty{\rho^{\otimes n}}_{n\in\mbN};r)\geq B^*_{ {\rm ARNG, dilute}, D_{\rm min}}(\qty{\rho^{\otimes n}}_{n\in\mbN};r)
\eal
From this, it suffices to obtain the lower bound of $B^*_{ {\rm ARNG, dilute}, D_{\rm min}}(\qty{\rho^{\otimes n}}_{n\in\mbN};r)$.
To this end, fix an arbitrary sequence $\qty{\delta_n}_n\subset\mbR_{>0}$ of positive numbers which converges to $0$ in the limit $n\to\infty$.
For this sequence, the following holds.
\begin{equation}
    \begin{aligned}
        &\inf_{ \qty{d_n}_n}\qty{\limsup_{n\to\infty}-\frac{1}{n}\log(1-\ve_n)~|~D_{\min}(\phi_{d_n}\|\mbF)\leq rn, \max_{\Lambda\in{\rm RNG}_{\delta_n}}F(\Lambda(\phi_{d_n}),\rho^{\otimes n})\geq 1-\ve_n}\\
=&\inf_{ \qty{d_n}_n}\qty{\limsup_{n\to\infty}-\frac{1}{n}\log(1-\mE_{{\rm RNG}_{\delta_n}, {\rm dilute}}(\rho^{\otimes n};d))~|~ D_{\min}(\phi_{d_n}\|\mbF_{d_n})\leq rn}\\
\geq&\inf_{ \qty{d_n}_n}\qty{\limsup_{n\to\infty}-\frac{1}{n}\log\qty(1-\min\qty{\ve~|~D^{\ve, I,=}_{\max}(\rho^{\otimes n}\|\mbF_n)\leq \log(1+\delta_n)+D_{\min}(\phi_{d_n}\|\mbF_{d_n}),~D_{\min} (\phi_{d_n}\|\mbF_{d_n})\leq rn})}\\
\geq& \limsup_{n\to\infty}-\frac{1}{n}\log\qty(1-\min\qty{\ve~|~D^{\ve, I,=}_{\max}(\rho^{\otimes n}\|\mbF_n)\leq \log(1+\delta_n)+rn})\\
=&\limsup_{n\to\infty}-\frac{1}{n}\log (1-\ve^{I,=}(\rho^{\otimes n}\|\mbF_n, nr+\log(1+\delta_n)))\geq \lim_{n\to\infty}\frac{1}{n} \min_{\substack{\rho_n\in\mA_n \\ \sigma_n\in\mB_n}}\sup_{\frac{1}{2}\leq \alpha\leq 1}\frac{\alpha-1}{\alpha}(nr-\sD_\alpha(\rho_n\|\sigma_n))
    \end{aligned}
\end{equation}
Here, the first inequality follows from Proposition~\ref{pro: converse dilution ARNG}, and the second inequality holds because $D^{\ve, I, =}_{\max }(\rho^{\otimes n}\|\mbF_n)$ is monotonically decreasing in $\ve$.
The last inequality is due to Proposition~\ref{pro: lower bound of SC of = smoothed max relative entropy}.

Now, we prove Eq.~\eqref{eq: red dim asymptotic dilute}. We have
\begin{equation}
    \begin{aligned}
        &\inf_{ \qty{d_n}_n}\qty{\limsup_{n\to\infty}-\frac{1}{n}\log(1-\ve_n)~|~D_{\min}(\phi_{d_n}\|\mbF)\leq rn, \max_{\Lambda\in{\rm RNG}}F(\Lambda(\phi_{d_n}),\rho^{\otimes n})\geq 1-\ve_n}\\
=&\inf_{ \qty{d_n}_n}\qty{\limsup_{n\to\infty}-\frac{1}{n}\log(1-\mE_{{\rm RNG}, {\rm dilute}}(\rho^{\otimes n};d))~|~ D_{\min}(\phi_{d_n}\|\mbF_{d_n})\leq rn}\\
\geq&\inf_{ \qty{d_n}_n}\qty{\limsup_{n\to\infty}-\frac{1}{n}\log\qty(1-\min\qty{\ve~|~D^{\ve, I,=}_{\max}(\rho^{\otimes n}\|\mbF_n)\leq D_{\min}(\phi_{d_n}\|\mbF_{d_n}),~D_{\min} (\phi_{d_n}\|\mbF_{d_n})\leq rn})}\\
\geq& \limsup_{n\to\infty}-\frac{1}{n}\log\qty(1-\min\qty{\ve~|~D^{\ve, I,=}_{\max}(\rho^{\otimes n}\|\mbF_n)\leq rn})\\
=&\limsup_{n\to\infty}-\frac{1}{n}\log (1-\ve^{I,=}(\rho^{\otimes n}\|\mbF_n, nr))\geq \lim_{n\to\infty}\frac{1}{n} \min_{\substack{\rho_n\in\mA_n \\ \sigma_n\in\mB_n}}\sup_{\frac{1}{2}\leq \alpha\leq 1}\frac{\alpha-1}{\alpha}(nr-\sD_\alpha(\rho_n\|\sigma_n)).
    \end{aligned}
\end{equation}
Here, the third line is due to Eq.~\eqref{eq: red dim one shot dilute} in Proposition~\ref{pro: one-shot dilution}, and the last inequality is due to Proposition~\ref{pro: lower bound of SC of = smoothed max relative entropy}.

\end{proof}

\section{Quantum thermodynamics}
So far, we have discussed the fundamental bound on the precision of resource manipulation within the framework of general resource theory. On the other hand, analyses within general resource theory sometimes ignore the specific mathematical structures of individual resource theories. The main reason lies in the assumption that the class of free operations is taken as the resource non-generating operations, which are sometimes much stronger than physically implementable operations.

One such example is the resource theory of thermodynamics, where nonequilibriumness serves as an important resource.
As we discuss below, energy conservation in operations is vital in the thermodynamic setting, a major reason being that it prevents us from creating and detecting energetic coherence, which crucially restricts the power of the available operations. 
On the other hand, the class of operations called the Gibbs-preserving operations is the class of resource non-generating operations in this resource theory, and led to many important results~\cite{Faist_2018_fundamental_work,Wang_2019, Shiraishi_quantum_thermodynamics}. However, it is shown that the Gibbs-preserving operations are strictly stronger than the operationally implementable class called the thermal operations due to the coherence-generating power and the coherence-detecting power of the Gibbs-preserving operations~\cite{Faist2015Gibbs-preserving, Tajima_2024_Gibbs-preserving}.
To characterize the optimal precision of the work extraction, we need to take a close look at the structure of the resource theory of thermodynamics.

\subsection{Preliminaries for quantum thermodynamics}
\subsubsection{Free operations and free state}
We consider a quantum system associated with a Hilbert space $\mH$ and Hamiltonian $H$, in contact with a thermal bath of the inverse temperature $\beta$.

To analyze the amount and precision of work extraction, we employ a resource-theoretic approach, which has proven successful in characterizing the fundamental limitations of various quantum resource-manipulation tasks under certain operational constraints~\cite{Chitamber_Gour, Gour_2025_book}.
In general, a quantum resource theory is defined by identifying the class of quantum channels that can be applied at no cost (called \emph{free operations}) and the sets of states that can be obtained freely (called \emph{free states}). In quantum thermodynamics, the set of free states is a singleton $\qty{\tau}$ containing the Gibbs thermal state $\tau=e^{-\beta H}/\Tr[e^{-\beta H}]$, because we can always obtain the thermal state by leaving the quantum state in contact with the thermal bath and waiting until it gets thermalized.

One standard choice in free operation is the \emph{thermal operation}, which corresponds to an isothermal process for the quantum states. 
A quantum channel $\Lambda:\mD(\mH_A)\to\mD(\mH_B)$ is called a thermal operation if and only if $\Lambda$ is decomposed as (1) appending the thermal state $\tau_E$ of the ancillary system, (2) applying the energy-conserving unitary $U_{AE\to BE'}$, and (3) discarding the subsystem $E'$. Mathematically, a thermal operation is decomposed as
\bal\label{eq: thermal operation}
\Lambda(\rho_A)=\Tr_{E'}[U_{AE\to BE'}(\rho_A\otimes \tau_E)U_{AE\to BE'}^{\dagger}],~~[U_{AE\to BE'},H_A\otimes I_E +I_A\otimes H_E ]=0.
\eal
We define the set of thermal operations as the closure of the set of channels with the Stinespring dilation in Eq.~\eqref{eq: thermal operation}.
Thermal operations satisfy the following important properties. One is that any thermal operation has the thermal state as a fixed point, that is, $\Lambda(\tau_A) = \tau_B$ holds. Another is that the thermal operations are time-translation covariant, i.e., it holds that
\bal
\Lambda(e^{-iH_At}\rho_A e^{iH_A t})=e^{-iH_Bt}\Lambda(\rho_A)e^{-iH_Bt}.
\eal

Since the class of thermal operations is often mathematically involved to deal with, we often consider a larger class of operations called the Gibbs-preserving operations~\cite{Janzing_2000_thermodynamic, Faist_2018_fundamental_work}. Here, the class of Gibbs-preserving operations is the class that contains all the quantum channels satisfying $\Lambda(\tau_A)=\tau_B$. Even though Gibbs-preserving operations have a simple structure and are easy to analyze, they are known to be strictly more powerful than the thermal operations. In fact, it can create and detect energetic coherence from scratch~\cite{Faist2015Gibbs-preserving}, which is prohibited by the time-translation covariance, and it sometimes requires an infinite amount of coherence cost to implement with thermal operations~\cite{Tajima_2024_Gibbs-preserving}.
Here, we also consider the intermediate class of operations called Gibbs-preserving covariant operations, which contains all Gibbs-preserving and time-translation covariant channels.

\subsubsection{Work extraction}
Now, we are in a position to discuss how to quantify the amount and quality of work extraction.
To quantify the work extracted from a quantum state, we consider an additional quantum system, which we call a \emph{work battery}~\cite{Wang_2019, Gour_role_of_quantum_coherence}. Work battery $X_m$ labeled by a positive number $m>0$ is associated with a 2-dimensional Hilbert space $\mH_{X_m}=\Span\qty{\ket{0}_{X_m}, \ket{1}_{X_m}}$ .
We take a Hamiltonian $H_{X}=E_{X,0}\dm{0}+E_{X,1}\dm{1}$ with $E_{X,1}-E_{X,0}=\beta^{-1}\log(m-1)$ so that the thermal state of the battery system $\mu_m$ is $\mu_m=\frac{m-1}{m}\ketbra{0}{0}+\frac{1}{m}\ketbra{1}{1}$.
Suppose that we are given a quantum state $\rho\in\mD(\mH)$. If the conversion $\rho\otimes \mu_m\mapsto\ketbra{1}{1}_\xm$ is possible up to some fidelity error $\ve$ under some free operation $\Lambda\in\mbO$, we say that the work $\beta^{-1}\log m$ is extractable from a quantum state $\rho$ up to the fidelity error $\ve$.
The one-shot optimal extractable work from $\rho\in\mD(\mH)$ under the class of free operation $\mbO$ is defined as 
\bal
\beta W^\ve_{\mbO}(\rho):=\max \qty{m>0~|~\max_{\Lambda\in\mbO}F(\Lambda(\rho\otimes \mu_m),\ketbra{1}{1}_{\xm})\geq 1-\ve}.
\eal
We can also consider the asymptotic (thermodynamic) limit of this quantity by taking multiple copies of the same system. Here, we assume that there is no correlation between subsystems, and the Hamiltonian of the whole system is described as $H^{\times n}:=\sum_{i=1}^nI^{\otimes (i-1)}\otimes H\otimes I^{\otimes (n-i)}$.
The asymptotic limit of the extractable work per copy under the class $\mbO$ of free operations is defined as 
\bal
\beta W^{\rm asymp}_\mbO(\rho):=\lim_{\ve\to +0}\limsup_{n\to \infty}\frac{1}{n}\beta W^\ve_\mbO(\rho^{\otimes n}).
\eal
The previous results studying the work extraction have shown that the asymptotic extractable work from i.i.d. states is characterized as 
\bal
\beta W^{\rm asymp}_\mbO(\rho)=D(\rho\|\tau)
\eal
under any choice of the class $\mbO$ of free operations mentioned above (thermal operations, Gibbs-preserving covariant operations, and Gibbs-preserving operations). 
From this, we can see that whether or not we impose the time-translation covariance on the available operation, this restriction does not influence the extractable work.

\subsubsection{The equivalent definitions of work extraction}
In the following discussion, we adapt the definition of work extraction used in Ref.~\cite{Wang_2019, Gour_role_of_quantum_coherence, Gour_2025_book}.
On the other hand, another definition of work extraction introduced in Ref.~\cite{horodecki_2013} is used in many other papers.
We here explain the definition of work extraction employed in Ref.~\cite{horodecki_2013}, and show that these two definitions are equivalent.
We consider the ancillary system which we call the \emph{work storage} $W$, a qubit system $\mH_W =\qty{\ket{0}_W, \ket{W}_W}$ equipped with the Hamiltonian $H_W=W\ketbra{W}{W}$.
Suppose that we are given a quantum state $\rho$. If the conversion $\rho\otimes \ketbra{0}{0}_W\to \ketbra{W}{W}_W$ is possible with a free operation, we say that we can extract work $W$ from $\rho$.
\begin{lemboxed}
    \begin{lem}[See also Ref.~\cite{watanabe_black_box, Watanabe_universal}]
    Suppose that one can extract work $\beta W=\log  m$ in the sense of the work battery, i.e., $\rho\otimes \mu_\xm\to\ketbra{1}{1}_\xm$, one can also extract work in the sense of the work storage, i.e., $\rho\otimes \ketbra{0}{0}_W\to\ketbra{W}{W}$. Conversely, if one can extract work $\beta W=\log m$ in the sense of work storage, i.e., $\rho\otimes \ketbra{0}{0}\to\ketbra{W}{W}$, we can also extract work in the sense of work battery, i.e., $\rho\otimes \mu_\xm\to\ketbra{1}{1}_\xm$.
    \end{lem}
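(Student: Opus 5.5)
Both directions will be proved by composing the given work-extraction protocol with a fixed free operation that acts only on the battery/storage and converts one battery model into the other. The point is that the battery thermal state $\mu_m$ is free, while the storage ground state $\ket{0}_W$ is not. I will take the storage gap $W$ equal to the battery gap, $\beta W=\log(m-1)$, so that both systems have the same energy splitting. The stated normalization $\beta W=\log m$ differs from this only by $\log\frac{m}{m-1}$, which does not matter asymptotically; I will either absorb it into the choice of $m$ or note that the identification is up to this relabeling. With this choice I identify $\mH_{X_m}\cong\mH_W$ via $\ket{1}_{X_m}\leftrightarrow\ket{W}_W$, so that $\mu_m$ is also the thermal state of $W$.

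\emph{Battery $\Rightarrow$ storage.} Suppose $\Lambda(\rho\otimes\mu_m)\approx\ketbra{1}{1}$ with $\Lambda$ a thermal (or Gibbs-preserving) operation. Starting from $\rho\otimes\ketbra{0}{0}_W$, I first append a fresh battery in its thermal state $\mu_m$, which is free. I then apply $\Lambda$ on $\rho\otimes\mu_m$ to obtain $\ketbra{1}{1}_{X}$. Next I apply the energy-conserving swap between $X$ and $W$; this is allowed because the two systems have equal gaps, so the swap commutes with $H_X+H_W$. Finally I discard $X$ together with whatever is left in the old $W$ register.

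What the storage holds at the end is not entirely clear, because the old $W$ started in $\ket{0}$ and must be discarded. So I will instead use a cleaner reduction. Start from $\rho\otimes\ketbra{0}{0}_W$, discard $W$, append $\mu_m$, apply $\Lambda$, and relabel $X$ as $W$. Discarding a subsystem and appending a thermal state are both thermal operations, and the relabeling is an isometry between systems with identical Hamiltonians. The composite is therefore a thermal operation with output $\approx\ketbra{W}{W}$, and the fidelity is unchanged.

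\emph{Storage $\Rightarrow$ battery.} This direction is the main obstacle, because a thermal state cannot be turned into $\ket{0}$ for free. The plan is to show that preparing $\ket{0}_W$ from $\mu_m$ costs nothing relative to the goal, since the target is the excited state $\ket{1}$. Take $\Gamma$ with $\Gamma(\rho\otimes\ketbra00)\approx\ketbra{W}{W}$ and define the channel
\begin{equation*}
\Lambda(\rho\otimes\omega)=\bra{0}\omega\ket{0}\,\Gamma(\rho\otimes\ketbra00)+\bra{1}\omega\ket{1}\,\Gamma(\rho\otimes\ketbra11).
\end{equation*}
This is $\Gamma$ preceded by dephasing of the battery, which is itself a thermal operation. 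For the input $\mu_m$, the first term contributes at least $\frac{m-1}{m}$ times the fidelity.

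That bound alone is too weak, so I will rely on the one-shot characterizations instead. For GPO, the proof of Proposition~\ref{pro: one-shot error of work extraction} shows that both formulations are equivalent to the hypothesis-testing condition $\log m\le D_H^\ve(\rho\|\tau)$, up to the $\pm\log\frac{m}{m-1}$ relabeling. The storage version follows from the standard test-and-prepare channel: measure $\{M,I-M\}$ on $\rho$ and prepare $\ket{W}$ or a suitable compensating state. For TO, I will use the pinched version with $\mP_\tau(\rho)$ and invoke the cited equivalences in Refs.~\cite{watanabe_black_box, Watanabe_universal}. In particular, I will use the known fact that for energy-incoherent inputs, thermal operations realize any Gibbs-preserving classical transition (Janzing/Horodecki–Oppenheim thermomajorization). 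This fact lets both conversions be reduced to the same thermomajorization (equivalently hypothesis-testing) condition. Verifying that the two conditions coincide exactly, rather than up to $\log\frac{m}{m-1}$, is the step I expect to require the most care.
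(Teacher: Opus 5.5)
\textbf{There is a genuine gap, and it comes from your normalization.}

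The lemma fixes the storage gap by $\beta W=\log m$, while the battery gap is $\beta^{-1}\log(m-1)$. These differ because the storage starts in the nonequilibrium state $\ketbra{0}{0}_W$, whereas the battery starts thermal. Your identification $\mathcal{H}_{X_m}\cong\mathcal{H}_W$ needs equal gaps, so you have proved a different statement. The lemma is a one-shot exact equivalence, so the $\log\frac{m}{m-1}$ discrepancy cannot be dismissed as asymptotically irrelevant.

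In fact, under your convention the converse direction is false. For incoherent $\rho$, thermomajorization gives the following.
\begin{itemize}
\item Exact storage extraction with $\beta W=\log(m-1)$ is possible iff $D_{\min}(\rho\|\tau)\ge\log(m-1)$, since $D_{\min}(\ketbra{0}{0}_W\|\mu_m)=\log\frac{m}{m-1}$ and the target has $D_{\max}=\log m$.
\item The conversion $\rho\otimes\mu_m\to\ketbra{1}{1}_{X_m}$ requires $D_{\min}(\rho\|\tau)\ge\log m$.
\end{itemize}
Any $\rho$ with $D_{\min}(\rho\|\tau)\in[\log(m-1),\log m)$ is therefore a counterexample.

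With the correct gap $\beta W=\log m$, your relabeling in the first direction is neither energy-conserving nor Gibbs-preserving, since $\tau_W=\frac{m}{m+1}\ketbra{0}{0}+\frac{1}{m+1}\ketbra{W}{W}\neq\mu_m$. The second direction is not actually argued: you drop the dephasing construction and leave the exact coincidence of conditions open.

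\textbf{The missing idea: do not discard the storage ground state.} It carries exactly $D_{\min}(\ketbra{0}{0}_W\|\tau_W)=\log\frac{m+1}{m}$, which compensates for the gap mismatch. The paper uses $\beta W_{\rm ext}=D_{\min}(\cdot\|\tau)$ and $\beta W_{\rm form}=D_{\max}(\cdot\|\tau)$ for incoherent states, together with $D_{\min}(\ketbra{1}{1}\|\mu_m)=D_{\max}(\ketbra{1}{1}\|\mu_m)=\log m$.
\begin{itemize}
\item First direction: $\rho\otimes\ketbra{0}{0}_W\to\rho\otimes\mu_m\otimes\ketbra{0}{0}_W\to\ketbra{1}{1}_{X_m}\otimes\ketbra{0}{0}_W\to\ketbra{W}{W}_W$.
\item Converse: $\rho\otimes\ketbra{0}{0}_W\to\ketbra{W}{W}_W\to\mu_m\otimes\ketbra{W}{W}_W\to\ketbra{1}{1}_{X_m}\otimes\ketbra{0}{0}_W$. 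This is a \emph{catalytic} conversion with catalyst $\ketbra{0}{0}_W$. The catalyst is then removed using the fact that catalysis gives no advantage for this target (Czartowski et al., Lemma 1).
\end{itemize}
This catalyst-removal step is what resolves the obstacle you correctly spotted, namely that $\ket{0}_W$ cannot be prepared from $\mu_m$ for free.

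Alternatively, your one-shot route can be completed once the correct gap is used. Both exact conditions then reduce to the same criterion:
\begin{itemize}
\item for GPO, $D_{\min}(\rho\|\tau)\ge\log m$, via test-and-prepare channels;
\item for TO, $D_{\min}(\mathcal{P}_\tau(\rho)\|\tau)\ge\log m$. Here time-translation covariance lets you replace $\rho$ by its pinching, because both targets are energy eigenstates, and thermomajorization finishes the argument.
\end{itemize}

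Finally, the lemma concerns exact conversion only, and the paper notes that the approximate case is open. Your ``fidelity is unchanged'' remarks therefore do not carry over.
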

\end{lemboxed}
\begin{proof}
    We employ the results in Ref.~\cite{horodecki_2013, Wang_2019, Gour_2008_resource_theory_of}: the extractable work---the maximum amount of work extracted from $\rho$--- and the work formation---minimum work cost to create $\rho$---  of an incoherent quantum state $\rho$ with respect to the work storage under thermal operations, Gibbs-preserving covariant operations, and Gibbs-preserving operations are characterized as
    \bal
    \beta W_{\rm ext}(\rho)&=D_{\rm min}(\rho\|\tau),\\
    \beta W_{\rm form}(\rho)&=D_{\rm max}(\rho\|\tau).
    \eal
    We can see that the extractable work and work formation of $\ketbra{1}{1}_\xm$ is $D_{\rm min}(\ketbra{1}{1}_\xm\|\mu_\xm)=D_{\rm max}(\ketbra{1}{1}_\xm\|\mu_\xm)=\log m$.
    
    Suppose that the conversion $\rho\otimes \mu_\xm\to\ketbra{1}{1}_\xm$ is possible. Since the extractable work from $\ketbra{1}{1}_\xm$ with respect to the work storage is $\beta W=\log m$, the following chain of conversion is possible: $\rho\otimes \ketbra{0}{0}_W\to\rho\otimes \mu_\xm\otimes \ketbra{0}{0}_W\to\ketbra{1}{1}_\xm\otimes \ketbra{0}{0}_W\to\ketbra{W}{W}_W$, which proves the first claim.
    
    Now, we show the second claim. Suppose that the conversion $\rho\otimes \ketbra{0}{0}\to\ketbra{W}{W}$. Since the work formation of $\ketbra{1}{1}_\xm$ is $\beta W=\log m$, the following chain of conversion is possible: $\rho\otimes \ketbra{0}{0}_W\to\ketbra{W}{W}_W\to\mu_\xm\otimes \ketbra{W}{W}_W\to\ketbra{1}{1}_\xm\otimes \ketbra{0}{0}_W$. From this, we can at least conclude that the conversion $\rho\to\ketbra{1}{1}_\xm$ is possible in the presence of the exact catalyst $\ketbra{0}{0}_W$.
    
    Here, due to Ref.~\cite[Lemma 1]{Czartowski_2024}, it is shown that the catalyst does not give any advantage when the output state is $\ketbra{1}{1}_\xm$, which implies that the conversion $\rho\to\ketbra{1}{1}_\xm$ is possible without $\ketbra{0}{0}_W$, which concludes the proof.
\end{proof}
We remark that this equivalence holds rigorously in the scenario where one extracts exactly the work. It is not known whether this correspondence also holds for the approximate case, where we allow error on the final state. Thus, the following analysis can depend on the choice of the model that quantifies the extracted work.

\subsection{Reliability functions for work extraction (Proofs of Proposition~\ref{pro: one-shot error of work extraction} and Theorem~\ref{thm: error/strong converse of work extraction})}
Here, we will characterize the optimal precision of work extraction. First, we define the one-shot optimal error of work extraction by focusing on the fidelity error between the final state and the target excited state as follows.
\begin{defnboxed}
    \begin{defn}
        The one-shot optimal error of work extraction from $\rho$ with target amount of work $W$ under the class $\mbO$ of free operations is defined as 
        \bal
\mE_{\mbO,{\rm ext}}(\rho; \beta W):=\min\qty{\ve\geq 0~|~\max_{\Lambda\in\mbO}F(\Lambda(\rho\otimes \mu_{\beta W}),\ketbra{1}{1}_{X_{\beta W}})\geq 1-\ve}.
\eal
    \end{defn}
\end{defnboxed}
We first show that the one-shot optimal error of work extraction is fully characterized by the hypothesis testing divergence.
\begin{proboxed}
    \begin{pro}[Proposition~\ref{pro: one-shot error of work extraction} in the main text]~\label{pro: one-shot error of work extraction app}
    The one-shot optimal error of work extraction from $\rho$ with target amount $W$ of work under Gibbs-preserving operation (GPO), Gibbs-preserving covariant operation (GPC), and thermal operation (TO) is characterized as
    \bal
    \mE_{{\rm GPO, ext}}(\rho; \beta W)&=\min\qty{\ve~|~\beta W\leq D^\ve_H(\rho\|\tau)},\\
    \mE_{{\rm GPC, ext}}(\rho; \beta W)=\mE_{{\rm TO, ext}}(\rho; \beta W)&=\min\qty{\ve~|~\beta W\leq D^\ve_H(\mP_{\tau}(\rho)\|\tau)}.
    \eal
    Here, $\mP_\tau$ is the pinching channel with respect to $\tau$ defined in Eq.~\eqref{eq: spectral pinching}.
\end{pro}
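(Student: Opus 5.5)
The plan is to reduce both equalities to one observation. Fix $m=2^{\beta W}$. For any channel $\Lambda$ acting on $\rho\otimes\mu_m$, the target $\ketbra{1}{1}_{X_m}$ is pure, so the fidelity is linear in the input:
\[
F(\Lambda(\rho\otimes\mu_m),\ketbra{1}{1})=\bra{1}\Lambda(\rho\otimes\mu_m)\ket{1}=\Tr[\rho M_\Lambda].
\]
Here $M_\Lambda:=\Tr_X[(I\otimes\mu_m)\Lambda^\dagger(\ketbra{1}{1})]$, which satisfies $0\le M_\Lambda\le I$. So every protocol induces a test. The proof then has a converse part and an achievability part for each class, and we finish by using that $\ve\mapsto D^\ve_H$ is monotone, so that "$\log m\le D_H^\ve$" is equivalent to "an admissible test with type I error at most $\ve$ exists".

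\emph{GPO.} For the converse, Gibbs preservation gives $\Tr[\tau M_\Lambda]=\bra{1}\mu_m\ket{1}=1/m$. Hence the test $M_\Lambda$ has type II error $2^{-\beta W}$ and type I error $1-F$, which gives $\beta W\le D_H^{\ve}(\rho\|\tau)$ with $\ve=1-F$.

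For achievability, take an optimal test $M$ with $\Tr[\rho(I-M)]\le\ve$ and $\Tr[\tau M]\le 1/m$. Define the measure-and-prepare channel
\[
\Lambda(\omega)=\Tr[(M\otimes I)\omega]\ketbra{1}{1}+\Tr[((I-M)\otimes I)\omega]\,\eta .
\]
Choose the state $\eta$ so that $\Tr[M\tau]\ketbra{1}{1}+(1-\Tr[M\tau])\eta=\mu_m$. This is possible, with $\eta$ diagonal and positive, precisely because $\Tr[M\tau]\le 1/m$. The channel is then Gibbs preserving and attains fidelity $\Tr[\rho M]\ge 1-\ve$.

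\emph{GPC and TO, converse.} Since $\mathrm{TO}\subset\mathrm{GPC}$, it suffices to prove the converse for GPC. If $\Lambda$ is covariant, then the functional $\omega\mapsto\bra{1}\Lambda(\omega\otimes\mu_m)\ket{1}$ is invariant under time translation, because $\mu_m$ and $\ketbra{1}{1}$ are energy-diagonal. Averaging over $t$ then shows that $M_\Lambda$ commutes with $H$. Since the spectral projectors of $\tau$ are exactly the energy eigenspaces of $H$, this means $M_\Lambda=\mP_\tau(M_\Lambda)$. Therefore $\Tr[\rho M_\Lambda]=\Tr[\mP_\tau(\rho)M_\Lambda]$, and the GPO converse argument applied to $\mP_\tau(\rho)$ yields the bound with $D_H^\ve(\mP_\tau(\rho)\|\tau)$.

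\emph{GPC and TO, achievability.} It suffices to prove achievability for TO, and this is the step I expect to be the main obstacle. The plan has four steps.
\begin{enumerate}
\item Apply dephasing, which is a thermal operation, to reach $\mP_\tau(\rho)$. Then apply an energy-conserving unitary, block-diagonal in the energy eigenspaces, that diagonalizes $\mP_\tau(\rho)$ inside each degenerate block. The resulting state $\rho'$ is classical, i.e.\ it commutes with $H$ and shares an eigenbasis with $\tau$.
\item Note that $D_H^\ve(\rho'\|\tau)=D_H^\ve(\mP_\tau(\rho)\|\tau)$, by data processing in both directions: the unitary fixes $\tau$ and is invertible.
\item For the classical pair $(\rho'\otimes\mu_m,\tau\otimes\mu_m)$, the GPO channel constructed above can be taken with diagonal $M$, by the classical Neyman--Pearson lemma, and its output is diagonal. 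So it is a classical Gibbs-stochastic map between energy-incoherent states.
\item Invoke the known equivalence (Horodecki--Oppenheim, via thermo-majorization) that any transition between energy-incoherent states achievable by a Gibbs-preserving map is achievable by a thermal operation. This is where the closure in the definition of TO is used.
\end{enumerate}

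The delicate points lie in steps~3 and~4. First, the optimal test may be randomized on a boundary level. This should be harmless, because thermo-majorization handles arbitrary stochastic matrices preserving the Gibbs vector. Second, the equivalence is needed for an approximate target $F\ket{1}\bra{1}+(1-F)\eta$ rather than an exact one. This is handled by making that mixed diagonal state the exact output of the Gibbs-stochastic map, which is what step~4 then implements as a thermal operation.
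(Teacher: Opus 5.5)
Your proof is correct. For the GPO part and for TO achievability it is essentially the paper's argument.
\begin{itemize}
\item The paper's converse uses data processing of $D_H^\ve$. Your pulled-back test $M_\Lambda$ makes this explicit, including the battery input $\mu_m$, which the paper's write-up suppresses.
\item The paper's achievability uses the same measure-and-prepare channel but assumes $\Tr[M\tau]=1/m$ exactly. Your choice of $\eta$ also covers $\Tr[M\tau]<1/m$.
\item The paper's TO achievability is likewise dephasing followed by the classical equivalence of Gibbs-preserving and thermal operations. You state that equivalence at the level of state transitions (thermo-majorization), which is precisely what is needed. The paper phrases it as the map $\Lambda\circ\Delta$ itself being a thermal operation.
\end{itemize}

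The genuine difference is the GPC converse. The paper does not prove it. It imports $\mE_{\rm GPC,ext}(\rho;\beta W)=\mE_{\rm GPO,ext}(\mP_\tau(\rho);\beta W)$ from prior work and then only shows $\mE_{\rm TO,ext}\le\mE_{\rm GPC,ext}$.

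You derive the converse directly. Covariance forces $M_\Lambda$ to commute with $H$, so $\Tr[\rho M_\Lambda]=\Tr[\mP_\tau(\rho)M_\Lambda]$. Both equalities then follow from the sandwich $\min\{\ve\,|\,\beta W\le D_H^\ve(\mP_\tau(\rho)\|\tau)\}\le\mE_{\rm GPC,ext}\le\mE_{\rm TO,ext}\le\min\{\ve\,|\,\beta W\le D_H^\ve(\mP_\tau(\rho)\|\tau)\}$. GPC achievability comes for free from ${\rm TO}\subset{\rm GPC}$.

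Your version is self-contained and shows why the pinching appears: covariant protocols can only realize energy-block-diagonal tests. The paper's version is shorter but rests on the cited GPC characterization.
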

\end{proboxed}
Proposition~\ref{pro: one-shot error of work extraction app} implies that the optimal error of the work extraction is fully characterized by the minimum parameter of the hypothesis testing divergence. An immediate observation is that when the input state $\rho$ and $\tau$ are easier to distinguish, one can extract work from $\rho$ more reliably.

\begin{proof}
    We begin with the case under Gibbs-preserving operations.
    We take the optimal protocol $\Lambda^*\in {\rm GPO}$ for the work extraction $\rho\mapsto \ketbra{1}{1}_{X_m}$. Let $\ve^*:=1-F(\Lambda^*(\rho)\|\ketbra{1}{1}_{X_m})$ be the optimal fidelity error.
    Then, the following holds.
    \bal
    D^{\ve^*}_H(\rho\|\tau)&\geq  D^{\ve^*}_H(\Lambda^*(\rho)\|\tau_{X_m})\\
    &=-\log\min_{\substack{ 0\leq M\leq I \\ \Tr[\Lambda^*(\rho)(I-M)]\leq \ve^* }}\Tr[M\tau_{X_m}].
    \eal
    Here, we used the data-processing inequality of the hypothesis testing divergence. Due to the definition of $\ve^*$, $M=\ketbra{1}{1}$ is feasible, and it holds that $D^{\ve^*}_H(\rho\|\tau)\geq \log m.$ This implies $\mE^D_{m,{\rm GPO}}(\rho)\geq \min\qty{\ve~|~\log m\leq D^\ve_H(\rho\|\tau)}$.

    Now, let us prove the opposite inequality.
    Let us denote the optimal value of the right-hand side as $\ve^*$. Since $D^\ve_H(\rho\|\tau)$ is increasing in $\ve$, $\ve^*$ satisfies $D^{\ve^*}_H(\rho\|\tau)=\log m$. Let $0\leq M\leq I$ denote the optimal measurement for $D^{\ve^*}_H(\rho\|\tau)$.
    Let us take a Gibbs-preserving operation $\Lambda:\mD(\mH)\to\mD(\mH_{X_m})$ defined as
    \bal\label{eq: work extraction protocol in GPO}
    \Lambda(\omega):=\Tr[M\omega]\ketbra{1}{1}_{X_m}+\Tr[(I-M)\omega]\ketbra{0}{0}_{X_m}
    \eal
    This is Gibbs-preserving, which can be checked as follows:
    \bal
    \Lambda(\tau)=\Tr[M\tau]\ketbra{1}{1}_{X_m}+\Tr[(I-M)\tau]\ketbra{0}{0}_{X_m}=\frac{1}{m}\ketbra{1}{1}_{X_m}+\frac{m-1}{m}\ketbra{0}{0}_{X_m}.
    \eal
    From the definition of $\Lambda$, we have $F(\Lambda(\rho),\ketbra{1}{1}_{X_m})\geq 1-\ve^*$, which implies 
    \bal
    \mE^D_{m,{\rm GPO}}(\rho)\leq \min\qty{\ve~|~\log m\leq D^\ve_H(\rho\|\tau)}.
    \eal

    Now we consider the one-shot optimal error under the Gibbs-preserving covariant operations. 
    Here, it holds that~\cite{Gour_role_of_quantum_coherence,watanabe_black_box}
    \bal\label{eq: GPC=GPO+pinching}
    \mE_{{\rm GPC, ext}}(\rho; \beta W)=\mE_{{\rm GPO, ext}}(\mP_{\tau}(\rho); \beta W).
    \eal
    Connecting this with the optimal error under Gibbs-preserving operation, we obtain
    \bal
    \mE_{{\rm GPC, ext}}(\rho; \beta W)=\min\qty{\ve~|~\beta W\leq D^\ve_H(\mP_{\tau}(\rho)\|\tau)}.
    \eal
    
    Finally, we consider the optimal error under thermal operations.
    Since the class of thermal operations is a subclass of the Gibbs-preserving covariant operations, we can easily see that $\mE_{{\rm GPC, ext}}(\rho; \beta W)\leq \mE_{{\rm TO, ext}}(\rho; \beta W)$ holds. Now, we will show the opposite inequality.

    Let us consider the optimal work extraction protocol under Gibbs-preserving covariant operation. From Eq.~\eqref{eq: work extraction protocol in GPO} and Eq.~\eqref{eq: GPC=GPO+pinching}, the optimal work extraction protocol under Gibbs-preserving covariant operation corresponds to the concatenation of the pinching channel $\mP_{\tau}(\cdot)$ and $\Lambda$ in Eq.~\eqref{eq: work extraction protocol in GPO}. If we feed the input state $\rho$, the output state is 
    \bal
     \Lambda(\mP_\tau(\rho)):=\Tr[M\mP_\tau(\rho)]\ketbra{1}{1}_{X_m}+\Tr[(I-M)\mP_\tau(\rho)]\ketbra{0}{0}_{X_m}.
    \eal
    Here, note that thanks to the pinching channel, the pinched state $\mP_\tau(\rho)$ and $\tau$ commute.
    Now, let us take a completely dephasing map $\Delta$ with respect to the simultaneous eigenbasis of $\mP_\tau(\rho)$ and $\tau$. 
    Also, note that this channel is a thermal operation. We now consider a channel 
    $\tilde{\Lambda}:=\Lambda\circ \Delta$. 
    Here, note that $\tilde{\Lambda}$ acts the same on $\rho$ as $\Lambda(\mP_\tau(\cdot))$
    Since $\tilde{\Lambda}$ maps the diagonal state with respect to the simultaneous eigenbasis of $\mP_\tau(\rho)$ and $\tau$ to the diagonal state of $X_m$, we can see this as a classical stochastic matrix. From this, $\tilde{\Lambda}$ can be seen as the Gibbs-preserving operation acting on the classical system where the state is described as the probability distribution, and there is no energetic coherence.
    In Ref.~\cite{horodecki_2013, Shiraishi_2020}, it is shown that Gibbs-preserving operations and the thermal operations are equivalent in the classical system. From this, we can conclude that $\tilde{\Lambda}$ is the thermal operation, and achieves the same performance on work extraction as $\Lambda(\mP_\tau(\cdot))$, which leads $\mE_{{\rm GPC, ext}}(\rho; \beta W)\geq \mE_{{\rm TO, ext}}(\rho; \beta W)$.

\end{proof}

Since Gibbs-preserving covariant operations and the thermal operation are more restrictive than Gibbs-preserving operations, the reliability of work extraction should decrease. Proposition~\ref{pro: one-shot error of work extraction app} clarifies that the decrease in reliability can be represented as the presence of the pinching channel $\mP_\tau$. In fact, from the data-processing inequality of the hypothesis testing divergence, it holds that $\mE_{{\rm GPO, ext}}(\rho; \beta W)\leq \mE_{{\rm GPC, ext}}(\rho; \beta W)=\mE_{{\rm TO, ext}}(\rho; \beta W)$.

We can also consider the asymptotic limit of this quantity. 
The error exponent and the strong converse exponent of the work extraction, where the work extraction rate is kept equal to or larger than $r$ are defined as 
\bal
B_{\mbO,{\rm ext}}(\qty{\rho^{\otimes n}}_{n\in\mbN}; r)&:=\sup_{\{W_n\}_{n\in\mbN}}\qty{\liminf_{n\to\infty}-\frac{1}{n}\log \mE_{\mbO,{\rm ext}}(\rho^{\otimes n}, \beta W_n),~\liminf_{n\to\infty}\frac{1}{n}\beta W_n\geq r},\\
B^*_{\mbO,{\rm ext}}(\qty{\rho^{\otimes n}}_{n\in\mbN}; r)&:=\inf_{\{W_n\}_{n\in\mbN}}\qty{\limsup_{n\to\infty}-\frac{1}{n}\log (1-\mE_{\mbO,{\rm ext}}(\rho^{\otimes n}, \beta W_n)),~\liminf_{n\to\infty}\frac{1}{n}\beta W_n\geq r}.
\label{eq: def of error exponent and strong converse exponent of work extraction}
\eal
Proposition~\ref{pro: one-shot error of work extraction app} enables us to characterize these quantities via the analysis of the error exponent of the hypothesis testing. Especially, if the thermal operations and the Gibbs-preserving covariant operations are taken as the class of free operations, the problem is reduced to determining the error exponent and strong converse exponent when the null hypothesis and the alternative hypothesis are given as the sequence of the pinched states $\qty{\mP_{\tau^{\otimes n}}(\rho^{\otimes n})}_{n\in\mbN}$ and the thermal states $\qty{\tau^{\otimes n}}_{n\in\mbN}$.

The previous results Ref.~\cite{nagaoka_2006_converse_theorem, Hayashi_2007_error_exponent, Mosonyi_Ogawa_2014,Hiai_2008,Hiai_2009, berta_2025_strong_converse} and Ref.~\cite[Lemma 5]{Lipka_Bartosik_2024} about the hypothesis testing with the group symmetry, including the pinched structure, allow us to characterize the error exponent and the strong converse exponent of work extraction as follows.
\begin{thmboxed}
    \begin{thm}[Theorem~\ref{thm: error/strong converse of work extraction} in the main text]\label{thm: error/strong converse of work extraction app}
The error exponent of work extraction from quantum states $\rho\in\mD(\mH)$ with the target work extraction rate $r$ under the classes of the Gibbs-preserving operations, the Gibbs-preserving covariant operations, and the thermal operations is characterized as 
    \bal\label{eq: error exponent of work extraction}
    B_{{\rm GPO},{\rm ext}}(\qty{\rho^{\otimes n}}_{n\in\mbN}; r)&=\sup_{0<\alpha<1}\frac{\alpha-1}{\alpha}\qty(r-\pD_\alpha(\rho\|\tau))\\
    B_{{\rm GPC},{\rm ext}}(\qty{\rho^{\otimes n}}_{n\in\mbN}; r)=B_{{\rm TO},{\rm ext}}(\qty{\rho^{\otimes n}}_{n\in\mbN}; r)&=\sup_{0<\alpha<1}\frac{\alpha-1}{\alpha}\qty(r-\sD_\alpha(\rho\|\tau)).
    \eal
    Moreover, the strong converse exponent of work extraction from  quantum states $\rho\in\mD(\mH)$ with the target work extraction rate $r$ under the classes of the Gibbs-preserving operations, the Gibbs-preserving covariant operations, and the thermal operations is characterized as 
    \bal
    B_{{\rm GPO},{\rm ext}}^*(\qty{\rho^{\otimes n}}_{n\in\mbN}; r)=B_{{\rm GPC},{\rm ext}}^*(\qty{\rho^{\otimes n}}_{n\in\mbN}; r)=B_{{\rm TO},{\rm ext}}^*(\qty{\rho^{\otimes n}}_{n\in\mbN}; r)&=\sup_{\alpha>1}\frac{\alpha-1}{\alpha}\qty(r-\sD_\alpha(\rho\|\tau)).
    \eal
\end{thm}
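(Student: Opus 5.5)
The plan is to reduce everything to hypothesis-testing exponents through Proposition~\ref{pro: one-shot error of work extraction app}. By that proposition, $\mE_{\mbO,{\rm ext}}(\rho^{\otimes n};\beta W_n)$ is the smallest $\ve$ with $\beta W_n\leq D^\ve_H(\omega_n\|\tau^{\otimes n})$. Here $\omega_n=\rho^{\otimes n}$ for GPO, and $\omega_n=\mP_{\tau^{\otimes n}}(\rho^{\otimes n})$ for GPC and TO; note that the thermal state of $n$ copies with Hamiltonian $H^{\times n}$ is $\tau^{\otimes n}$. Hence the definitions in Eq.~\eqref{eq: def of error exponent and strong converse exponent of work extraction} coincide with the hypothesis-testing exponents $B_H(\{\omega_n\}\|\{\tau^{\otimes n}\};r)$ and $B^*_H(\{\omega_n\}\|\{\tau^{\otimes n}\};r)$.

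To justify this identification, first take a test sequence $M_n$ with type II error at most $2^{-\beta W_n}$ and type I error $\ve_n$. It gives $\mE\le \ve_n$, with the rate of $W_n$ matching the type II exponent. Conversely, an optimal $\ve_n$ comes with a test achieving $D^{\ve_n}_H\ge\beta W_n$. Monotonicity of $D^\ve_H$ in $\ve$ handles the $\liminf$ and $\limsup$ bookkeeping.

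For GPO the claims then follow directly from Eq.~\eqref{eq:exponents_hyp_testing}: the Hoeffding exponent gives the Petz formula and the Mosonyi--Ogawa exponent gives the sandwiched $\alpha>1$ formula.

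For GPC and TO the null hypothesis is the non-i.i.d. pinched sequence $\omega_n$, and this is the main obstacle. I would argue separately for achievability and converse, using that $\omega_n$ commutes with $\tau^{\otimes n}$ and is permutation invariant.

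\textbf{Converse for the error exponent.} Any test on $\omega_n$ can be assumed pinched: replacing $M_n$ by $\mP_{\tau^{\otimes n}}(M_n)$ changes neither error probability. Both states are then diagonal in a common basis, so the classical Hoeffding converse bound gives
\[
-\log\alpha_n\lesssim\sup_{0<\alpha<1}\frac{\alpha-1}{\alpha}\bigl(nr-D_\alpha(\omega_n\|\tau^{\otimes n})\bigr).
\]
Because $\omega_n$ and $\tau^{\otimes n}$ commute, Petz and sandwiched divergences agree on them. Eq.~\eqref{eq: limit of pinched sandwich} then yields $\frac1n D_\alpha(\omega_n\|\tau^{\otimes n})\to\sD_\alpha(\rho\|\tau)$, valid for all $\alpha\ge0$ and in particular for $\alpha<1/2$. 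The $\pm 2\log|\spec(\tau^{\otimes n})|=O(\log n)$ corrections are sublinear and vanish. The delicate step is exchanging $\lim_n$ with $\sup_\alpha$. I would handle it by using the uniform $O(\log n)$ bound together with the concavity and monotonicity in $\alpha$ of $(\alpha-1)\sD_\alpha$, which is the argument of Ref.~\cite[Lemma 5]{Lipka_Bartosik_2024} and Hayashi--Tomamichel.

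\textbf{Achievability for the error exponent.} For each fixed $\alpha$ I would use an Audenaert-type Chernoff bound on commuting states, with a threshold test $\{\omega_n>2^{nr'}\tau^{\otimes n}\}$. This gives $\alpha_n\le 2^{-(1-\alpha)(nr' - D_\alpha(\omega_n\|\tau^{\otimes n}))}$ and $\beta_n\le 2^{-nr'}\cdot(\cdots)$. Optimizing $r'$ yields the Legendre-type expression with $D_\alpha(\omega_n\|\tau^{\otimes n})$, which matches the converse asymptotically.

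\textbf{Strong converse exponent.} For the GPC/TO case the same commuting reduction applies, using the classical Han--Kobayashi formula with $D_\alpha(\omega_n\|\tau^{\otimes n})$ for $\alpha>1$, whose limit is again $\sD_\alpha$. For GPO this reproduces the Mosonyi--Ogawa value directly, so all three classes coincide.

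\textbf{Strict gap.} The separation claimed in the main text follows from the Araki--Lieb--Thirring inequality, $\sD_\alpha<\pD_\alpha$ for $\alpha\in(0,1)$ unless $[\rho,\tau]=0$. For $r<D(\rho\|\tau)$ the Hoeffding supremum is attained at an interior $\alpha$, so this strict inequality transfers to the exponents.
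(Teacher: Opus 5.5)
Your reduction to hypothesis testing and your GPO argument coincide with the paper. Your achievability argument for GPC/TO is also fine, and more elementary than the paper's citation. For each fixed $\alpha\in(0,1)$, the threshold test on the commuting pair $(\omega_n,\tau^{\otimes n})$ gives a one-shot bound; with the sign corrected it reads $\alpha_n\le 2^{-(1-\alpha)(D_\alpha(\omega_n\|\tau^{\otimes n})-nr')}$. Eq.~\eqref{eq: limit of pinched sandwich} then yields $B\ge\frac{\alpha-1}{\alpha}(r-\sD_\alpha(\rho\|\tau))$ with no exchange of limits needed.

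\textbf{The gap is the converse.} There is no ``classical Hoeffding converse'' of the form $-\log\alpha_n\lesssim\sup_{0<\alpha<1}\frac{\alpha-1}{\alpha}\bigl(nr-D_\alpha(\omega_n\|\tau^{\otimes n})\bigr)$, neither as a one-shot inequality nor asymptotically for general non-i.i.d.\ commuting sequences.
\begin{itemize}
\item \emph{One shot.} Take $P=(1-\delta,\delta)$, $Q=(\eta,1-\eta)$ and the test $\{0\}$. The left side is $\log\frac1\delta$, while the right side at $nr=\log\frac1\eta$ is at most $\log\frac{1}{1-\delta}$.
\item \emph{Asymptotically.} Take $\frac12(P_1^{\otimes n}+P_2^{\otimes n})$ versus $Q^{\otimes n}$. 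The limiting function is $\max_i(\alpha-1)D_\alpha(P_i\|Q)$, so the formula gives $\sup_\alpha\min_i f_i(\alpha)$. The true exponent is $\min_i\sup_\alpha f_i(\alpha)$, which can be strictly larger.
\end{itemize}
This mixture example is convex (not concave) and monotone in $\alpha$, with corrections uniform in $\alpha$. So these properties cannot close the argument, and the exchange of $\lim_n$ and $\sup_\alpha$ that you single out is not where the difficulty lies.

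What the converse needs is a large-deviation lower bound, either G\"artner--Ellis or a Nagaoka-type change of measure through tilted states. That requires the limit $\spsi(\alpha)=(\alpha-1)\sD_\alpha(\rho\|\tau)$ to be differentiable. This is exactly what the paper supplies via \cite[Lemma III.1]{Mosonyi_2015_two_approaches}, before applying \cite[Theorem 4.8]{Hiai_2008} and \cite[Corollary IV.6]{Mosonyi_2015_two_approaches}.

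\textbf{The strong-converse step has the same defect on its achievability side.} Han--Kobayashi is an i.i.d.\ result. The mixture example shows that the formula built from $\lim_n\frac1n D_\alpha(\omega_n\|\tau^{\otimes n})$, $\alpha>1$, need not be attained for non-i.i.d.\ sequences. Only the lower bound on $B^*$, which follows from one-shot data processing for $\alpha>1$, is automatic.

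This part can be repaired by blocking instead of invoking differentiability. The eigenprojectors of $(\tau^{\otimes m})^{\otimes k}$ refine those of $\tau^{\otimes mk}$. Hence the pinching $\mP_{\tau^{\otimes m}}^{\otimes k}$ maps $\omega_{mk}$ to $\mP_{\tau^{\otimes m}}(\rho^{\otimes m})^{\otimes k}$ and fixes $\tau^{\otimes mk}$. Diagonal tests for that i.i.d.\ pair therefore transfer with identical error probabilities, and letting $m\to\infty$ gives the claim, as in Mosonyi--Ogawa.

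No such trick works for the Hoeffding converse, because blocking only produces a \emph{less} distinguishable surrogate. There you genuinely need the differentiability input, or an equivalent large-deviation argument specific to the pinched sequence.
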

\end{thmboxed}
\begin{proof}
First, we show the error exponent and the strong converse exponent of work extraction under Gibbs-preserving operations.
Due to Proposition~\ref{pro: one-shot error of work extraction app} and Eq.~\eqref{eq: def of error exponent and strong converse exponent of work extraction}, we can associate the work extraction with those of quantum hypothesis testing as
\bal
B_{{\rm GPO},{\rm ext}} \qty(\qty{\rho^{\otimes n}}_{n\in\mbN}; r)&=\sup\qty{\liminf_{n\to\infty}-\frac{1}{n}\log \mE_{{\rm GPO,  ext}}(\rho^{\otimes n}, \beta W_n),~\liminf_{n\to\infty}\frac{1}{n}\beta W_n\geq r}\\
&=\sup\qty{\liminf_{n\to\infty}-\frac{1}{n}\log \min\qty{\ve_n~|~\beta W_n\leq D^{\ve_n}_H(\rho^{\otimes n}\|\tau^{\otimes n})},~~\liminf_{n\to\infty}\frac{1}{n}\beta W_n\geq r }\\
&=B_H\qty(\qty{\rho^{\otimes n}}_{n\in\mbN}\|\qty{\tau^{\otimes n}}_{n\in\mbN};r),\\
B^*_{{\rm GPO},{\rm ext}}\qty(\qty{\rho^{\otimes n}}_{n\in\mbN}; r)&:=\inf\qty{\limsup_{n\to\infty}-\frac{1}{n}\log (1-\mE_{{\rm GPO,  ext}}(\rho^{\otimes n}, \beta W_n)),~\liminf_{n\to\infty}\frac{1}{n}\beta W_n\geq r}\\
&=\inf\qty{\limsup_{n\to\infty}-\frac{1}{n}\log (1-\min\qty{\ve_n~|~\beta W_n\leq D^{\ve_n}_H(\rho^{\otimes n}\|\tau^{\otimes n})}),~~\liminf_{n\to\infty}\frac{1}{n}\beta W_n\geq r }\\
&=B^*_H\qty(\qty{\rho^{\otimes n}}_{n\in\mbN}\|\qty{\tau^{\otimes n}}_{n\in\mbN};r).
\eal
From the characterization of the exponents of hypothesis testing, recalled in~\eqref{eq:exponents_hyp_testing}, we have
\bal
B_{{\rm GPO},{\rm ext}} \qty(\qty{\rho^{\otimes n}}_{n\in\mbN}; r)&=\sup_{0<\alpha<1}\frac{\alpha-1}{\alpha}\qty(r-\pD_\alpha(\rho\|\tau)),\\
B^*_{{\rm GPO},{\rm ext}} \qty(\qty{\rho^{\otimes n}}_{n\in\mbN}; r)&=\sup_{\alpha>1}\frac{\alpha-1}{\alpha}\qty(r-\sD_\alpha(\rho\|\tau)).
\eal
Here we remark that the exponents of `distinguishability distillation', which is closely related to work extraction under Gibbs-preserving operations, were previously studied in~\cite{Wilde_2022}.

Next, we characterize the error exponent and the strong converse exponent of work extraction under Gibbs-preserving covariant operations and thermal operations.
Following the same argument, they can be associated with the exponents of quantum hypothesis testing as 
\bal
B_{{\rm GPC},{\rm ext}} \qty(\qty{\rho^{\otimes n}}_{n\in\mbN}; r)&=B_{{\rm TO},{\rm ext}} \qty(\qty{\rho^{\otimes n}}_{n\in\mbN}; r)=B_H\qty(\qty{\mP_{\tau^{\otimes n}}(\rho^{\otimes n})}_{n\in\mbN}\|\qty{\sigma^{\otimes n}}_{n\in\mbN};r),\\
B^*_{{\rm GPC},{\rm ext}} \qty(\qty{\rho^{\otimes n}}_{n\in\mbN}; r)&=B^*_{{\rm TO},{\rm ext}} \qty(\qty{\rho^{\otimes n}}_{n\in\mbN}; r)=B^*_H\qty(\qty{\mP_{\tau^{\otimes n}}(\rho^{\otimes n})}_{n\in\mbN}\|\qty{\sigma^{\otimes n}}_{n\in\mbN};r),\\
\eal
from which the problem is reduced to determining the error exponent and the strong converse exponent of the hypothesis testing between $\qty{\mP_{\tau_{\otimes n}}(\rho^{\otimes n})}_{n\in\mbN}$ and $\qty{\tau^{\otimes n}}_{n\in\mbN}$.
Here, we define a function
\bal
\spsi_n(\alpha)&:=\log \Tr\qty((\tau^{\otimes n})^{\frac{1-\alpha}{2\alpha}} \mP_{\tau^{\otimes n}}(\rho^{\otimes n})(\tau^{\otimes n})^{\frac{1-\alpha}{2\alpha}})^\alpha=(\alpha-1)D_\alpha(\mP_{\tau^{\otimes n}}(\rho^{\otimes n})\|\tau^{\otimes n}),\\
\spsi(\alpha)&:=\lim_{n\to\infty}\frac{1}{n}\spsi_n(\alpha)=(\alpha-1)\sD_\alpha(\rho\|\tau).
\eal
Here, the last line is due to the pinching inequality in Eq.~\eqref{eq: pinching inequality}.
In \cite[Lemma III.1]{Mosonyi_2015_two_approaches}, it is shown that $\spsi(\alpha)$ is differentiable in $\alpha\in(0,+\infty)$. From this, thanks to \cite[Theorem 4.8]{Hiai_2008} and \cite[Corollary IV.6]{Mosonyi_2015_two_approaches}, we have
\bal
B_H\qty(\qty{\mP_{\tau^{\otimes n}}(\rho^{\otimes n})}_{n\in\mbN}\|\qty{\sigma^{\otimes n}}_{n\in\mbN};r)&=\sup_{0<\alpha<1}\frac{\alpha-1}{\alpha}(r-\sD_\alpha(\rho\|\tau)),\\
B^*_H\qty(\qty{\mP_{\tau^{\otimes n}}(\rho^{\otimes n})}_{n\in\mbN}\|\qty{\sigma^{\otimes n}}_{n\in\mbN};r)&=\sup_{\alpha>1}\frac{\alpha-1}{\alpha}(r-\sD_\alpha(\rho\|\tau)).\\
\eal

Combining these with the observation above, we reach the conclusion.

\end{proof}

The previous analysis of work extraction shows that the presence or absence of time-translation covariance on the operation does not affect the optimal work extraction rate~\cite{Brandao_resource_theory_of_quantum, Gour_role_of_quantum_coherence}. Accordingly, it is widely recognized that the time-translation covariance of the operations is not critical to asymptotic work extraction. 
Theorem~\ref{thm: error/strong converse of work extraction app} indicates that this observation is not the case.
Specifically, the presence and absence of the time-translation covariance for the error exponent of work extraction appears as the difference of the divergences. In fact, we can see that the error exponent is smaller with the time-translation covariance, which can be easily checked from the fact that $\forall\alpha\in[0,\infty], \sD_\alpha(\rho\|\sigma)\leq \pD_\alpha(\rho\|\sigma)$.

\subsection{Zero-rate error exponent of work extraction (Proof of Theorem~\ref{thm: zero-rate error})}
In the subsequent discussion, we consider the zero-rate error decay of the work extraction, that is, the exponent of the error when we aim at extracting a constant amount of work from an increasing number of quantum states reliably.
The zero-rate error exponent of work extraction is defined as follows.
\bal
B^{\rm Z. R.}_{\mbO,{\rm ext}}(\rho):=\lim_{W\to\infty}\liminf_{n\to\infty}-\frac{1}{n}\log\mE_{\mbO,{\rm ext}}(\rho^{\otimes n}; \beta W)
\eal
The zero-rate exponent of various quantum information processing tasks was recently studied in~\cite{lami_2024_asymptotic_quantification, girardi_2025-1, rippchen_2025_fundamental_quality}.

Employing Proposition~\ref{pro: one-shot error of work extraction app}, the following holds.
\begin{thmboxed}
    \begin{thm}[Theorem~\ref{thm: zero-rate error} in the main text]\label{thm: zero-rate error app}
    The zero-rate error exponent of work extraction under the classes of the Gibbs-preserving operations, the Gibbs-preserving covariant operations, and the thermal operations is characterized as
    \bal
    B^{\rm Z. R.}_{{\rm GPO, ~ ext}}(\rho)&=D(\tau\|\rho),\\
    B^{\rm Z. R.}_{{\rm GPC, ~ ext}}(\rho)=B^{\rm Z. R.}_{{\rm TO, ~ ext}}(\rho)&=D^\star(\tau\|\rho).
    \eal
    Here, $D^\star(\sigma\|\rho)$ is the star divergence defined as 
    \bal
    D^\star(\sigma\|\rho):=\lim_{\alpha\to 1}\lim_{n\to\infty}\frac{1}{n}D_\alpha(\sigma^{\otimes n}\|\mP_{\sigma^{\otimes n}}(\rho^{\otimes n})).
    \eal
\end{thm}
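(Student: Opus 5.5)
By Proposition~\ref{pro: one-shot error of work extraction app}, for fixed $W$ the quantity $\mE_{\mbO,{\rm ext}}(\rho^{\otimes n};\beta W)$ is the smallest type~I error $\ve_n$ such that the type~II error against $\tau^{\otimes n}$ is at most $2^{-\beta W}$. For GPO the null hypothesis is $\rho^{\otimes n}$; for GPC and TO it is $\mP_{\tau^{\otimes n}}(\rho^{\otimes n})$. The zero-rate exponent is therefore the type~I error exponent when the type~II error is only required to stay below a constant threshold $\delta=2^{-\beta W}$, followed by the limit $\delta\to 0$. Exchanging the roles of the two hypotheses, this is Stein's exponent with the hypotheses reversed: the exponent of the error on $\rho^{\otimes n}$ (resp.\ its pinched version), subject to the error on $\tau^{\otimes n}$ being at most $\delta$. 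For GPO this gives $D(\tau\|\rho)$ directly from the quantum Stein lemma of Hiai--Petz and Ogawa--Nagaoka with $(\rho,\sigma)\to(\tau,\rho)$. The strong converse of that lemma shows the exponent is the same for every fixed $\delta\in(0,1)$, so the limit $W\to\infty$ is trivial. If $\supp\tau\not\subset\supp\rho$, both sides are $+\infty$, using the test onto $\supp\rho$; I would handle this edge case separately.

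For TO and GPC, the alternative hypothesis $\tau^{\otimes n}$ is i.i.d., but the null hypothesis $\omega_n:=\mP_{\tau^{\otimes n}}(\rho^{\otimes n})$ is not. It does, however, commute with $\tau^{\otimes n}$, so the two are simultaneously diagonal and the problem is \emph{classical} for each $n$. I would prove the two bounds separately.

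\emph{Achievability.} The Neyman--Pearson likelihood-ratio test, together with the standard one-shot bound, gives the following: for any $\alpha<1$, the type~I error is at most $2^{-(1-\alpha)(D_\alpha(\tau^{\otimes n}\|\omega_n)-\lambda)/\alpha}$-type quantities while the type~II error stays below a constant. Equivalently, one can use the Hoeffding bound with rate $r\to0$ in the reversed problem. Dividing by $n$ reduces everything to $\lim_n\frac1n D_\alpha(\tau^{\otimes n}\|\omega_n)$, the reverse pinched Rényi quantity; here the Petz and sandwiched forms coincide because the operators commute. Letting $\alpha\to1$ then gives $D^\star(\tau\|\rho)$.

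\emph{Converse.} The classical strong-converse bound $-\log(\text{type I})\le \frac{\alpha}{\alpha-1}\cdot$ (\ldots) for $\alpha>1$, or the data-processing inequality for $D_\alpha$ applied to the binary test, gives
\[
-\tfrac1n\log\ve_n\le \tfrac1n D_\alpha(\tau^{\otimes n}\|\omega_n)+\tfrac{\alpha}{\alpha-1}\tfrac1n\log\tfrac{1}{1-\delta}
\]
for each $\alpha>1$. Taking $n\to\infty$ and then $\alpha\downarrow1$ yields $D^\star$ again. This step requires the limit defining $D^\star$ to be the same from both sides of $\alpha=1$.

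\emph{Main obstacle.} The hard step is controlling $\lim_n\frac1n D_\alpha(\tau^{\otimes n}\|\omega_n)$, which should be the reverse sandwiched quantity $\lim_n\frac1n\widetilde D_\alpha(\tau^{\otimes n}\|\rho^{\otimes n})$-like object of \cite{Audenaert_alpha_z_2015}. Three things are needed. First, the limit must exist; I would get this from superadditivity $\omega_{n+m}\ge$ (something related to) $\omega_n\otimes\omega_m$ after pinching, together with the polynomial bound $|\spec(\tau^{\otimes n})|\le\mathrm{poly}(n)$, which controls errors through the pinching inequality Eq.~\eqref{eq: pinching inequality}. Second, the limit must be continuous, in fact convex and differentiable in $\alpha$, so that the two one-sided limits $\alpha\to1^{\pm}$ agree. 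Third, the $\alpha\to1$ limit must be exchangeable with the $n\to\infty$ limit; this interchange is the genuinely delicate point. Following \cite[Lemma 5]{Lipka_Bartosik_2024} and \cite{Hiai_2008}, I would apply the Gärtner--Ellis theorem to the log-likelihood ratio of the commuting pair $(\tau^{\otimes n},\omega_n)$, whose limiting cumulant generating function is $\spsi$-like. This is the same machinery used in the proof of Theorem~\ref{thm: error/strong converse of work extraction app}. The zero-rate endpoint of the resulting Hoeffding exponent, namely the limit $r\to0$ of $\sup_{0<\alpha<1}\frac{\alpha-1}{\alpha}(r-\sD_\alpha(\rho\|\tau))$ expressed through the derivative of $\spsi$ at $\alpha\to0$, should identify with $D^\star(\tau\|\rho)$.
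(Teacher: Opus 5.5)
Your proposal follows the paper's proof essentially step by step. You reduce via the one-shot proposition to $\frac{1}{n}D_H^{2^{-\beta W}}(\tau^{\otimes n}\|\cdot)$ with the hypotheses reversed and apply Stein's lemma for GPO. For GPC/TO you use the two one-shot bounds $D_\alpha(\tau^{\otimes n}\|\omega_n)-\frac{\alpha}{1-\alpha}\log\frac{1}{\varepsilon}\le D_H^{\varepsilon}(\tau^{\otimes n}\|\omega_n)\le D_{\alpha'}(\tau^{\otimes n}\|\omega_n)+\frac{\alpha'}{\alpha'-1}\log\frac{1}{1-\varepsilon}$ for $\alpha<1<\alpha'$, with $\omega_n=\mP_{\tau^{\otimes n}}(\rho^{\otimes n})$, then take $n\to\infty$ followed by $\alpha\to1^-$ and $\alpha'\to1^+$. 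One correction: your displayed achievability exponent should be $D_\alpha(\tau^{\otimes n}\|\omega_n)$ itself, without the factor $\frac{1-\alpha}{\alpha}$.

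These limits are taken in the same order as in the definition of $D^\star$, so no interchange of $n\to\infty$ with $\alpha\to1$ and no G\"artner--Ellis argument is needed. The only extra input is the existence of the two-sided limit in $\alpha$, which the paper takes as part of the definition of $D^\star$, citing Lemma~5 of Lipka-Bartosik et al.
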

\end{thmboxed}
Theorem~\ref{thm: zero-rate error app} exhibits the stark difference of the zero-rate error under the Gibbs-preserving operation and thermal operation.
In fact, from the data-processing inequality of the Petz Rényi relative entropy, we have
\bal
D(\sigma\|\rho)=\lim_{\alpha\to 1}\pD_\alpha(\sigma\|\rho)\geq \lim_{\alpha\to 1}\frac{1}{n}D_\alpha(\sigma^{\otimes n}\|\mP_{\sigma^{\otimes n}}(\rho^{\otimes n}))=D^\star(\sigma\|\rho).
\eal
Here we remark that the quantity $D^\star(\sigma\|\rho)$ admits an exact expression: it can be understood as the limit $\alpha \to 1$ of the so-called reverse sandwiched R\'enyi relative entropies $\frac{\alpha}{1-\alpha} \sD_{1-\alpha}(\rho\|\sigma)$, for which an analytical, albeit unwieldy, form was established by Audenaert and Datta~\cite{Audenaert_alpha_z_2015}.
The relation of this quantity with pinched hypothesis testing was previously studied in~\cite{Lipka_Bartosik_2024}. In fact, the appearance of the same quantity in quantum state estimation can be traced all the way back to early works by Keyl~\cite{keyl_2006}.

\begin{proof}
    First, we consider the zero-rate error of the work extraction under Gibbs-preserving operations.
    \bal
B^{\rm Z. R.}_{{\rm GPO, ~ ext}}(\rho)&=\lim_{W\to\infty}\liminf_{n\to\infty}-\frac{1}{n}\log\min\qty{\ve_n~|~\Tr[\tau^{\otimes n}M]\leq 2^{-\beta W}~, \Tr[\rho^{\otimes n}(I-M)]\leq\ve_n}\\
&=\lim_{W\to\infty}\liminf_{n\to\infty}\frac{1}{n}D^{2^{-\beta W}}_H(\tau^{\otimes n}\|\rho^{\otimes n})=D(\tau\|\rho).
\eal
The last line follows from quantum Stein's lemma.
Similarly, we can calculate the zero-rate error of the work extraction under Gibbs-preserving covariant operations and thermal operations as
\bal
B^{\rm Z. R.}_{{\rm GPC, ~ ext}}(\rho)=B^{\rm Z. R.}_{{\rm TO, ~ ext}}(\rho)&=\lim_{W\to\infty}\liminf_{n\to\infty}\frac{1}{n}D^{2^{-\beta W}}_H(\tau^{\otimes n}\|\mP_{\tau^{\otimes n}}(\rho^{\otimes n})).
\eal
From this, it suffices to show the following: For any $\rho,\sigma\in\mD(\mH)$, we have
\bal
\lim_{\ve\to +0}\lim_{n\to\infty}\frac{1}{n}D^\ve_H(\sigma^{\otimes n}\|\mP_{\sigma^{\otimes n}}(\rho^{\otimes n}))=D^\star(\sigma\|\rho).
\eal
This is already stated in Ref.~\cite[Lemma 5]{Lipka_Bartosik_2024}, but we present it here for completeness. 
 We employ well-known bounds for the hypothesis testing divergence to obtain the result.
    First, the hypothesis testing divergence $\frac{1}{n}D^\ve_H(\sigma^{\otimes n}\|\mP\qty(\rho^{\otimes n}))$ satisfies  
    \bal
    \frac{1}{n}D^\ve_H(\sigma^{\otimes n}\|\mP\qty(\rho^{\otimes n}))&\geq \frac{1}{n}\pD_\alpha(\sigma^{\otimes n}\|\mP(\rho^{\otimes n}))+\frac{1}{n}\frac{\alpha}{\alpha-1}\log\frac{1}{\ve}\\
    &=\frac{1}{n}D_\alpha(\sigma^{\otimes n}\|\mP(\rho^{\otimes n}))+\frac{1}{n}\frac{\alpha}{\alpha-1}\log\frac{1}{\ve},~~\forall \alpha\in(0,1)
    \eal
    Taking the limit $n\to\infty$ and $\alpha\to 1-0$, we have
    \bal
    \lim_{n\to\infty}\frac{1}{n}D^\ve_H(\sigma^{\otimes n}\|\mP\qty(\rho^{\otimes n}))\geq D^\star(\sigma\|\rho).
    \eal

    Now, we show the opposite inequality. Due to the data-processing inequality of the sandwiched Rényi relative entropy, it holds that, for any $\alpha\in(1,\infty)$~\cite{Cooney_2016_strong_converse},
    \bal
    D^\ve_H(\rho\|\sigma)\leq \sD_\alpha(\rho\|\sigma)+\frac{\alpha}{\alpha-1}\log \frac{1}{1-\ve}
    \eal
    \bal
    \frac{1}{n}D^\ve_H(\sigma^{\otimes n}\|\mP\qty(\rho^{\otimes n}))&\leq \frac{1}{n}\sD_\alpha(\sigma^{\otimes n}\|\mP(\rho^{\otimes n}))+\frac{1}{n}\frac{\alpha}{\alpha-1}\log\frac{1}{1-\ve}\\
    &=\frac{1}{n}\sD_\alpha(\sigma^{\otimes n}\|\mP(\rho^{\otimes n}))+\frac{1}{n}\frac{\alpha}{\alpha-1}\log\frac{1}{1-\ve},~~\forall\alpha\in (1,\infty)
    \eal
Taking the limit $n\to\infty$ and $\alpha\to 1+0$, we have
\bal
\lim_{n\to\infty}\frac{1}{n}D^\ve_H(\sigma^{\otimes n}\|\mP\qty(\rho^{\otimes n}))\leq D^\star(\sigma\|\rho).
\eal
Combining these, we reach the claim.
\end{proof}

\subsection{Separation in the optimal precision of work extraction}
So far, we have seen that the performance of work extraction can depend on the class of allowed operations.
However, the expressions for the error exponents in Theorem~\ref{thm: error/strong converse of work extraction app} and Theorem~\ref{thm: zero-rate error app} involve optimizations, and hence their behavior is not immediately transparent.
In this subsection, we analyze the quantities appearing in Eq.~\eqref{eq: error exponent of work extraction} and then explain how their properties lead to examples in which the optimal error decay rate of work extraction differs among different classes of operations.
We focus on the following two Hoeffding-type quantities, which appear in the expression in Eq.~\eqref{eq: error exponent of work extraction}:
\bal
\pH_r(\rho\|\tau)&:=\sup_{\alpha\in(0,1)}\frac{\alpha-1}{\alpha}\qty(r-\pD_\alpha(\rho\|\tau)),\\
\sH_r(\rho\|\tau)&:=\sup_{\alpha\in(0,1)}\frac{\alpha-1}{\alpha}\qty(r-\sD_\alpha(\rho\|\tau)).
\eal
Then, the claim of Theorem~\ref{thm: error/strong converse of work extraction app} implies that
\bal
B_{{\rm GPO},{\rm ext}}(\qty{\rho^{\otimes n}}_{n\in\mbN}; r)=\pH_r(\rho\|\tau),~~~B_{{\rm TO},{\rm ext}}(\qty{\rho^{\otimes n}}_{n\in\mbN}; r)=\sH_r(\rho\|\tau).
\eal
As mentioned above, from Lieb-Thirring inequality, we have $\pH_r(\rho\|\tau)\geq \sH_r(\rho\|\tau)$.
If we take $r\in(D(\rho\|\tau),\infty)$, the supremum in $\pH_r$ and $\sH_r$ is attained in the limit $\alpha\to 1-0$. Since the Petz Rényi relative entropy and the sandwiched Rényi relative entropy converge to Umegaki relative entropy in the limit $\alpha\to 1$, we have $\pH_r(\rho\|\tau)=\sH_r(\rho\|\tau)=0$ in this regime.
Let us consider the smaller $r$ regime. 
As for $\pH_r(\rho\|\tau)$, when we take $r\in(0,D_{\rm min}(\rho\|\tau))$, it can be seen that $\frac{\alpha-1}{\alpha}(r-\pD_\alpha(\rho\|\tau))$ becomes arbitrarily large in the limit $\alpha\to 0$, which implies that $\pH_r(\rho\|\tau)=+\infty$.
Similarly, if we denote $\sD_0(\rho\|\tau):=\lim_{\alpha\to 0}\sD_\alpha(\rho\|\tau)$ and take $r\in(0, \sD_0(\rho\|\tau))$, we have $\sH_r(\rho\|\tau)=\infty$.

We can show the strict separation between these two quantities as follows.

\begin{proboxed}
    \begin{pro}\label{pro: separation of Hoeffding}
        Let $\rho,\tau$ be two quantum states which satisfies $\supp(\rho)\subset\supp(\tau)$ and $r\in(D_{\rm min}(\rho\|\tau), D(\rho\|\tau))$ be a fixed constant.
        Then, it holds that
        \bal
        \pH_r(\rho\|\tau)&=\sH_r(\rho\|\tau), ~~~(\mbox{if }[\rho,\tau]=0)\\
        \pH_r(\rho\|\tau)&>\sH_r(\rho\|\tau). ~~~(\mbox{if }[\rho,\tau]\neq 0)
        \eal
    \end{pro}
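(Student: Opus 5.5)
The commuting case is immediate. If $[\rho,\tau]=0$, then $\rho$ and $\tau$ are simultaneously diagonalizable. Hence $\tau^{\frac{1-\alpha}{2\alpha}}\rho\,\tau^{\frac{1-\alpha}{2\alpha}}=\rho\,\tau^{\frac{1-\alpha}{\alpha}}$ and $\Tr[(\rho\tau^{\frac{1-\alpha}{\alpha}})^\alpha]=\Tr[\rho^\alpha\tau^{1-\alpha}]$. So $\sD_\alpha(\rho\|\tau)=\pD_\alpha(\rho\|\tau)$ for every $\alpha\in(0,1)$, and the two suprema defining $\pH_r$ and $\sH_r$ coincide term by term.

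For the non-commuting case, the plan is to show that the supremum defining $\sH_r$ is attained at an interior point $\alpha^*\in(0,1)$, and that at this point $\pD_{\alpha^*}>\sD_{\alpha^*}$ strictly. Concretely, set $f(\alpha):=\frac{\alpha-1}{\alpha}(r-\sD_\alpha(\rho\|\tau))=\frac{1-\alpha}{\alpha}\sD_\alpha-\frac{1-\alpha}{\alpha}r$. Writing $\spsi(\alpha)=(\alpha-1)\sD_\alpha$, which is convex and differentiable on $(0,\infty)$ (the paper cites \cite[Lemma III.1]{Mosonyi_2015_two_approaches}), we get $f(\alpha)=-\frac{\spsi(\alpha)+(1-\alpha)r}{\alpha}$. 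This is the usual Legendre-type expression.

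We then examine the endpoints. As $\alpha\to1^-$, $f(\alpha)\to0$ and $f'(1^-)=D(\rho\|\tau)-r>0$ by the standard computation $\spsi'(1)=D(\rho\|\tau)$. Hence $f>0$ just below $1$ and the supremum is positive. As $\alpha\to0^+$, $\frac{1-\alpha}{\alpha}(\sD_\alpha-r)$ tends to $-\infty$ provided $\sD_0(\rho\|\tau)<r$. Here I would use $\sD_0\le\pD_0=D_{\min}(\rho\|\tau)<r$, via the Araki--Lieb--Thirring ordering $\sD_\alpha\le\pD_\alpha$, which passes to the limit $\alpha\to0$. Consequently $f$ is continuous on $(0,1)$, negative near $0$, positive near $1$ but tending to $0$ there. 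So $\sup_{(0,1)}f$ is positive and attained at some $\alpha^*\in(0,1)$.

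At this maximizer we have $\sH_r=f(\alpha^*)<\frac{\alpha^*-1}{\alpha^*}(r-\pD_{\alpha^*}(\rho\|\tau))\le\pH_r$. The strict inequality holds because $\frac{\alpha^*-1}{\alpha^*}<0$ and $\pD_{\alpha^*}>\sD_{\alpha^*}$. That last strict inequality is the equality condition of the Araki--Lieb--Thirring inequality (\cite{Friedland1994}): for $\alpha\in(0,1)$ and the exponent $q=\alpha$ with $A=\tau^{\frac{1-\alpha}{\alpha}}$, $B=\rho$, equality $\Tr[(B^{1/2}AB^{1/2})^\alpha]=\Tr[B^\alpha A^\alpha]$ forces $[A,B]=0$. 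Since $\supp\rho\subset\supp\tau$, it forces $[\rho,\tau]=0$, which we have excluded.

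The main obstacle is pinning down this equality condition precisely in the regime $q=\alpha<1$ with possibly non-full-rank $\rho$. I would invoke the strict-convexity/equality analysis of \cite{Friedland1994}. I would also handle supports carefully: $\tau$ is full rank as a Gibbs state, so $A$ is invertible, and commutation of $A$ with $B$ is equivalent to commutation of $\tau$ with $\rho$ because $A$ is a function of $\tau$ with the same eigenprojections. A secondary check is the endpoint claim $\sD_0\le D_{\min}<r$. This only needs the monotone limit of $\sD_\alpha\le\pD_\alpha$ as $\alpha\to0$ together with $\lim_{\alpha\to0}\pD_\alpha=D_{\min}$.
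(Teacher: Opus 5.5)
Your proposal is correct and follows essentially the same route as the paper. You show that the supremum defining $\sH_r(\rho\|\tau)$ is attained at an interior $\alpha^*\in(0,1)$, using $\sD_0(\rho\|\tau)\le D_{\rm min}(\rho\|\tau)<r$ near $\alpha=0$ and $r<D(\rho\|\tau)$ near $\alpha=1$, and then apply the strict Araki--Lieb--Thirring inequality of \cite{Friedland1994} at $\alpha^*$; your direct computation in the commuting case simply replaces the paper's appeal to the same equality condition. One sign slip: $f'(1^-)=r-D(\rho\|\tau)<0$, and $D-r$ is the derivative with respect to $1-\alpha$. Your conclusion that $f>0$ just below $1$ is still right, and it follows more simply from $f(\alpha)=\frac{1-\alpha}{\alpha}\bigl(\sD_\alpha(\rho\|\tau)-r\bigr)$ with $\sD_\alpha(\rho\|\tau)\to D(\rho\|\tau)>r$, so you do not need differentiability of $\spsi$.
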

\end{proboxed}
\begin{proof}

First, we discuss the case $[\rho,\tau]=0$.
Due to Ref.~\cite[Corollary 3.4]{Friedland1994}, which discusses the equality condition of the Lieb-Thirring inequality~\cite{LiebThirring}, the Petz and the sandwiched Rényi relative entropy coincide if and only if $\rho$ is incoherent, i.e., $[\rho, \tau]=0$ holds.
From this, for any incoherent state $\rho$, we have
\bal
\pH_r(\rho\|\tau)=\sH_r(\rho\|\tau)
\eal

We consider the case where $[\rho, \tau]\neq 0$ holds.
We first focus on the behavior of $\pH_r(\rho\|\tau)$.
Let us define a function $\pf:(0,1)\to\mbR$ as 
\bal
\pf(\alpha)=\frac{\alpha-1}{\alpha}\qty(r-\pD_\alpha(\rho\|\tau)).
\eal
Note that $\pf$ is continuous on $(0,1)$.
In the limit $\alpha\to +0$, $\pf(\alpha)$ diverges as $\pf(\alpha)\to -\infty$. On the other hand, in the limit $\alpha\to 1-0$, $\pf(\alpha)$ converges to 0.
Since $\pf(\alpha)$ takes a positive value for at least one $\alpha$, the maximum of $\pf(\alpha)$ is achieved in the interior point of the interval $(0,1)$.
Following the same logic, if we take $r\in(\lim_{\alpha\to 0 }\sD_\alpha(\rho\|\tau),D(\rho\|\tau) )$ and define a function $\saf(\alpha)$ as
\bal
\saf(\alpha)=\frac{\alpha-1}{\alpha}\qty(r-\sD_\alpha(\rho\|\tau)),
\eal
the maximum of $\saf(\alpha)$ is achieved in the interior point of the interval $(0,1)$.
Noting that $\lim_{\alpha\to 0 }\sD_\alpha(\rho\|\tau)\leq D_{\rm min}(\rho\|\tau)$ holds, $\saf(\alpha)$ achieves the maximum in the interior point of the interval $(0,1)$ also with $r\in(D_{\rm min}(\rho\|\tau), D(\rho\|\tau))$.

From this, we reach the claim as follows:
Due to Ref.~\cite[Corollary 3.4]{Friedland1994}, it holds that $\sD_\alpha(\rho\|\tau)<\pD_\alpha(\rho\|\tau)$ for any $\alpha\in(0,1)$. If we denote $\alpha^*:=\argmax_{\alpha\in(0,1)}\saf(\alpha)$, then it holds that
\bal
\sH_r(\rho\|\tau)=\max_{\alpha\in(0,1)}\saf(\alpha)=\saf(\alpha^*)<\pf(\alpha^*)\leq \max_{\alpha\in(0,1)}\pf(\alpha)=\pH_r(\rho\|\tau)
\eal
\end{proof}

Proposition~\ref{pro: separation of Hoeffding} implies that the separation between Gibbs-preserving operations and thermal operations is strict if and only if the initial state is coherent.
Combining the observation made above, the reliability function of work extraction in each case is as in FIG.~\ref{fig: reliability function_def}.

\begin{figure}
    \centering
    \includegraphics[width=0.8\linewidth]{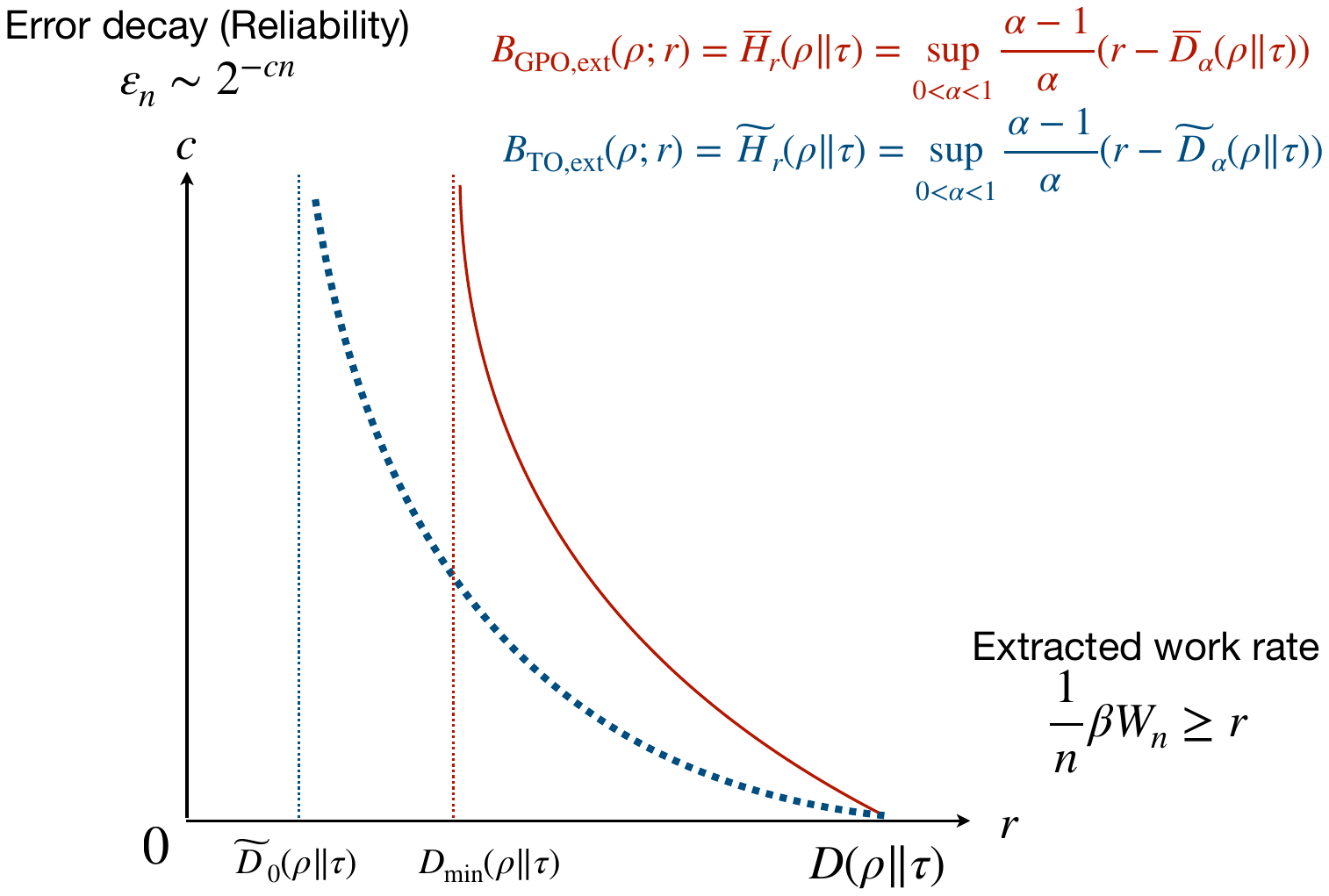}
    \caption{The reliability function of work extraction under the two choices of free operations when the initial state $\rho$ is rank-deficient and coherent. In this case, the reliability function has a strict separation for all rates $D(\rho\|\tau)>r>0$. The reliability function for Gibbs-preserving operations (red line) diverges at the critical rate $r=D_{\rm min}(\rho\|\tau)$, while that for thermal operations (blue dotted line) diverges at the rate $r=\sD_0(\rho\|\tau)$. These two critical rates coincide if and only if $[\Pi_{\supp(\rho)}, \sigma]=0$.}
    \label{fig: reliability function_def}
\end{figure}

Here, we discuss the scenario where the initial state $\rho$ is full-rank. Then, $D_{\rm min}(\rho\|\tau)=0$ holds. Moreover, since $0\leq \sD_0(\rho\|\tau)\leq D_{\rm min}(\rho\|\tau)$ holds , we also have $\sD_0(\rho\|\tau)=0$. In this case, the reliability functions do not diverge. In fact, the reliability function in the limit $r\to 0$ corresponds to the zero-rate error exponent $B^{\rm Z. R.}_{{\rm GPO, ~ ext}}(\rho)=D(\tau\|\rho),~
    B^{\rm Z. R.}_{{\rm GPC, ~ ext}}(\rho)=B^{\rm Z. R.}_{{\rm TO, ~ ext}}(\rho)=D^\star(\tau\|\rho), $ which is finite when $\rho$ is full-rank.
    From these, the reliability function when $\rho$ is coherent and full-rank is given as in FIG.~\ref{fig: reliability function}.
\begin{figure}
    \centering
    \includegraphics[width=0.8\linewidth]{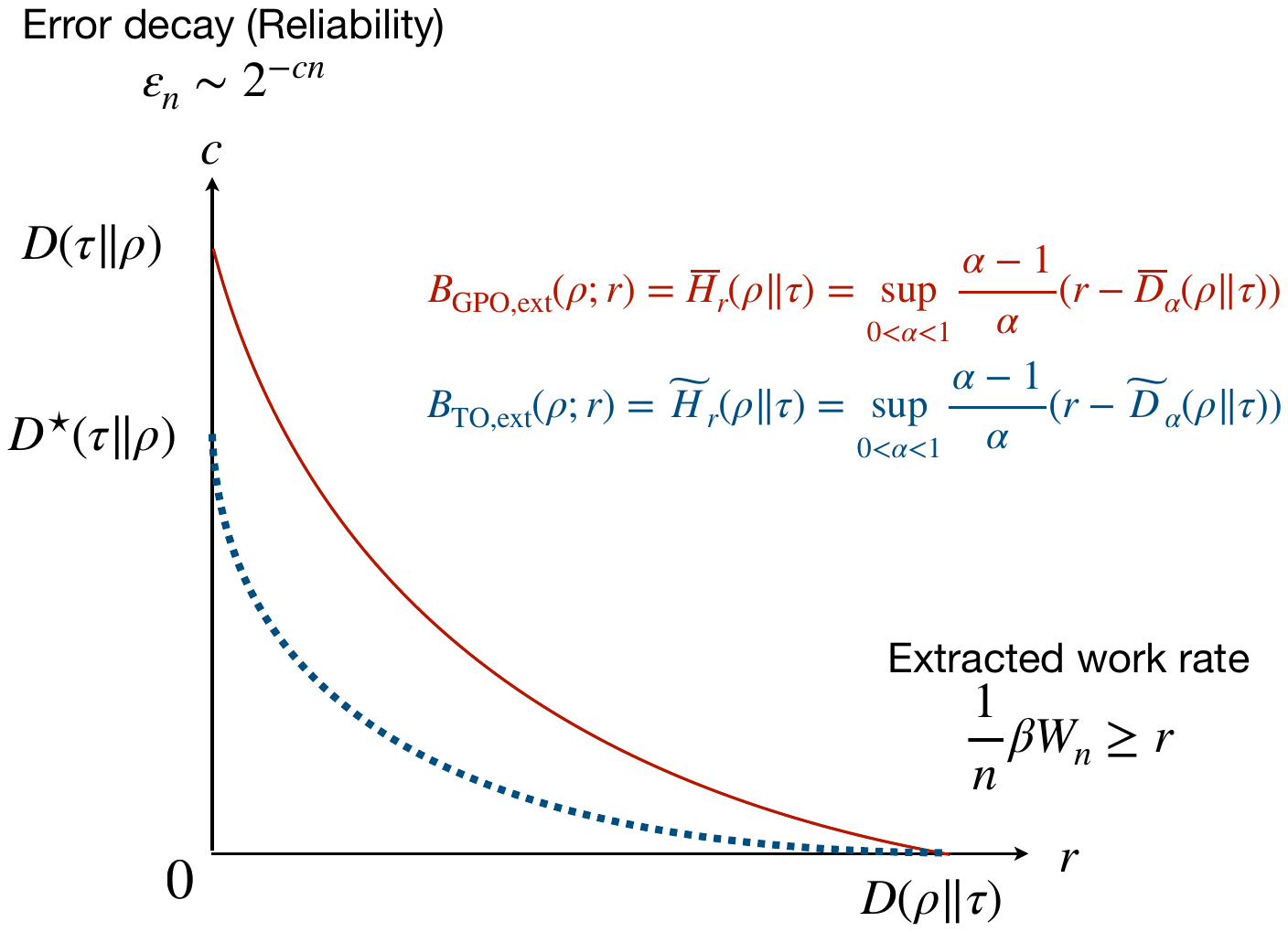}
    \caption{The reliability function of work extraction under the two choices of free operations when the initial state $\rho$ is full-rank and coherent. Also in this case, the reliability function has a strict separation for all rates $D(\rho\|\tau)>r>0$. As we can see from Theorem~\ref{thm: zero-rate error app}, the reliability function under Gibbs-preserving operations (red line) converges to the reversed Umegaki divergence $D(\tau\|\rho)$, while that under thermal operations (blue dotted line) converges to the star divergence $D^\star(\tau\|\rho)$. }
    \label{fig: reliability function}
\end{figure}

For a quantum state $\rho$ satisfying $\sD_0(\rho\|\tau)< D_{\rm min}(\rho\|\tau)$, this separation becomes sharp---the reliability function under Gibbs-preserving operation diverges while that under thermal operations remains finite, when the target extraction rate $r$ is taken as $\sD_0(\rho\|\tau)< r<D_{\rm min}(\rho\|\tau)$.
Let us first provide a simple example such that $\sD_0(\rho\|\tau)< D_{\rm min}(\rho\|\tau)$ is satisfied.
Consider two qubit states $\rho$ and $\tau$, represented in a fixed basis as
\bal
\rho=\mqty(1&0\\0&0),~~\tau=\frac{1}{2}\mqty(1&c\\c&1).
\label{eq: separation example}
\eal
Here, $c$ is a real constant satisfying $0<c<1$.
The state $\tau$ is full-rank and can therefore be regarded, for any inverse temperature $\beta$, as the thermal state of a suitable Hamiltonian.
Ref.~\cite{Datta_a_limit_of_2014} shows that
\bal
\log 2=\pD_0(\rho\|\tau)>\lim_{\alpha\to +0}\sD_\alpha(\rho\|\tau)=\log 2-\log (1+c).
\eal
Therefore, if we choose the target work extraction rate $r$ such that
\bal
\log 2>r>\log 2-\log (1+c),
\eal
then the error exponents of work extraction satisfy
\bal
B_{{\rm GPO},{\rm ext}}(\qty{\rho^{\otimes n}}_{n\in\mbN}; r)&=\infty,\\
    B_{{\rm GPC},{\rm ext}}(\qty{\rho^{\otimes n}}_{n\in\mbN}; r)=B_{{\rm TO},{\rm ext}}(\qty{\rho^{\otimes n}}_{n\in\mbN}; r) &<\infty.
\eal

A natural question to ask is when $\sD_0(\rho\|\tau)$ and $D_{\rm min}(\rho\|\tau)$ coincides. 
In the following, we will derive the necessary and sufficient condition for this.

First, we prove the following two lemmas.

\begin{lemboxed}
    \begin{lem}\label{lem: commuting projector}
        Let $P,Q\in\mL(\mH)$ be two projectors. Then, $PQP$ is also a projector if and only if $[P,Q]=0$ holds.
    \end{lem}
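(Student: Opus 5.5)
The "if" direction is immediate. If $[P,Q]=0$, then $PQP=P^2Q=PQ$, which is Hermitian because $(PQ)^\dagger=QP=PQ$. It is also idempotent: $(PQ)^2=PQPQ=P^2Q^2=PQ$. So $PQP$ is an orthogonal projector.

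For the "only if" direction, assume $PQP$ is a projector, i.e.\ $(PQP)^2=PQP$. Expanding gives $PQPQP=PQP$, which we rewrite as
\begin{equation}
PQ(I-P)QP=0 .
\end{equation}
The left-hand side equals $X^\dagger X$ with $X:=(I-P)QP$, since $(I-P)^2=I-P$ and all three operators are Hermitian. A positive operator of the form $X^\dagger X$ vanishes only if $X$ does, so $(I-P)QP=0$, that is, $QP=PQP$.

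Taking adjoints of $QP=PQP$ gives $PQ=PQP$. Hence $QP=PQ$, i.e.\ $[P,Q]=0$.

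The only step that needs care is recognizing $PQ(I-P)QP$ as the positive operator $X^\dagger X$, which is what lets us conclude $X=0$. Everything else is routine algebra. Finite dimensionality is not needed; the argument is purely algebraic.
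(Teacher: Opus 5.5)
Your proof is correct and is essentially the paper's argument: both derive $(I-P)QP=0$ from the idempotence of $PQP$, then take adjoints to conclude $PQ=PQP=QP$. The paper works vector by vector, showing $\|(I-P)Qv\|^2=0$ for every $v$ in the range of $P$; this is exactly $\langle v, X^\dagger X v\rangle=0$ for your $X=(I-P)QP$, so your identity $PQP-(PQP)^2=X^\dagger X$ is a more compact form of the same computation.
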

\end{lemboxed}
\begin{proof}
    First, we prove that if $[P, Q]=0$ then $PQP$ is also a projector.
    This can be checked as 
    \bal
    (PQP)^\dagger=PQP,~~PQPPQP=PQPQP=P^2Q^2P=PQP.
    \eal

    We now assume that $PQP$ is a projector.
    Let us take an arbitrary $\ket{v}\in \Im P$. Then, it holds that $PQP\ket{v}=PQ\ket{v}$, which yields 
    \bal
     \|PQP\ket{v}\|^2=\|PQ\ket{v}\|^2.
    \label{eq: norm for vector in image}
    \eal
    Also, since $PQP$ is a projector, i.e., $(PQP)^2=PQP$, we have
    \bal
    \|PQP\ket{v}\|^2=\bra{v}(PQP)^2\ket{v}=\bra{v}PQP\ket{v}=\bra{v}Q\ket{v}=\|Q\ket{v}\|^2.
    \label{eq: norm projected vector}
    \eal
    Writing $Q\ket{v}$ as an orthogonal decomposition $Q\ket{v}=PQ\ket{v}+(I-P)Q\ket{v}$, we have
    \bal
    \|Q\ket{v}\|^2=\|PQ\ket{v}\|^2+\|(I-P)Q\ket{v}\|^2.
    \eal
    Since we have $\|Q\ket{v}\|^2=\|PQ\ket{v}\|^2$ because of \eqref{eq: norm for vector in image} and \eqref{eq: norm projected vector}, it holds that $\|(I-P)Q\ket{v}\|^2=0$, which leads to $(I-P)Q\ket{v}=0$.
    Since $\ket{v}$ is an arbitrary vector in $\Im P$, we have $(I-P)QP=0$.
    Taking the adjoint, we have $PQ(I-P)=0$.
    From these two, we have
    \bal
    QP=PQP,~~PQ=PQP~\Rightarrow~PQ=QP.
    \eal
\end{proof}

The following lemma gives an alternative closed form of $\sD_0(\rho\|\sigma).$
\begin{lemboxed}
    \begin{lem}\label{lem: alternative form of sD0}
        Let $\rho,\sigma\in\mD(\mH)$ be two quantum states.
        Let $\sigma=\sum_{k=1}^m\lambda_k E_k$ be the sprctral decomposition of $\sigma$, where $\lambda_1,\cdots, \lambda_m$ are nonzero distinct eigenvalues satisfying $\lambda_1>\lambda_2>\cdots>\lambda_m>0$, and $F_k$ is defined as $F_k:=E_1+\cdots+E_k$ and $F_0=0$.
        
        Then, it holds that
        \bal\label{eq: explicit form of zero sandwiched from chatgpt}
        \sD_0(\rho\|\sigma)=-\log \sum_{k=1}^m\lambda_k\qty[\rank(\Pi_{\supp(\rho)} F_k \Pi_{\supp(\rho)} )-\rank(\Pi_{\supp(\rho)} F_{k-1} \Pi_{\supp(\rho)} )].
        \eal
    \end{lem}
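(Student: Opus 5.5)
We compute $\sD_0(\rho\|\sigma)=\lim_{\alpha\to 0}\sD_\alpha(\rho\|\sigma)$ directly from the definition. Write $\gamma:=\frac{1-\alpha}{2\alpha}$, so that
\[
(\alpha-1)\sD_\alpha(\rho\|\sigma)=\log\Tr\big[(\sigma^{\gamma}\rho\sigma^{\gamma})^{\alpha}\big],
\]
with $\gamma\to\infty$ as $\alpha\to 0$. Since $\frac{1}{\alpha-1}\to -1$, it suffices to show
\[
\lim_{\alpha\to 0}\Tr[(\sigma^{\gamma}\rho\sigma^{\gamma})^{\alpha}]=\sum_{k}\lambda_k\, d_k,\qquad d_k:=\rank(PF_kP)-\rank(PF_{k-1}P),\quad P:=\Pi_{\supp(\rho)} .
\]

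\textbf{Reduction to the support projector.} Since $cP\le\rho\le CP$ with $c,C>0$ the extreme nonzero eigenvalues of $\rho$, we have $c\,\sigma^\gamma P\sigma^\gamma\le \sigma^\gamma\rho\sigma^\gamma\le C\,\sigma^\gamma P\sigma^\gamma$. Operator monotonicity of $x\mapsto x^\alpha$ for $\alpha\in(0,1)$ then gives
\[
c^\alpha\,\Tr[(\sigma^\gamma P\sigma^\gamma)^\alpha]\le \Tr[(\sigma^\gamma\rho\sigma^\gamma)^\alpha]\le C^\alpha\,\Tr[(\sigma^\gamma P\sigma^\gamma)^\alpha].
\]
Because $c^\alpha, C^\alpha\to 1$, we may replace $\rho$ by $P$. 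Next, the nonzero eigenvalues of $\sigma^\gamma P\sigma^\gamma=(\sigma^\gamma P)(P\sigma^\gamma)$ coincide with those of $P\sigma^{2\gamma}P$, so it is enough to study
\[
\Tr[(P\sigma^{2\gamma}P)^\alpha]=\sum_j \mu_j(t)^{\alpha},\qquad t:=2\gamma=\frac{1-\alpha}{\alpha},\quad \alpha=\frac{1}{1+t},
\]
where $\mu_j(t)$ are the eigenvalues of $P\sigma^{t}P$, restricted to $\operatorname{Im}P$ (of dimension $\rank P$; note $\sigma$ is full rank here).

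\textbf{Asymptotics of the eigenvalues.} We claim that, as $t\to\infty$, exactly $d_k$ eigenvalues satisfy $\mu_j(t)^{1/t}\to\lambda_k$. Granting this,
\[
\mu_j^{\alpha}=\big(\mu_j^{1/t}\big)^{t/(1+t)}\to\lambda_k ,
\]
and summing gives the claimed formula. Note that $\sum_k d_k=\rank(PF_mP)=\rank P$ since $F_m=I$, so every eigenvalue is accounted for.

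The claim is a min-max argument, and this is the main obstacle. Order $\mu_1\ge\mu_2\ge\cdots$. By Courant--Fischer,
\[
\mu_j=\max_{V\subset\operatorname{Im}P,\ \dim V=j}\ \min_{v\in V,\ \|v\|=1}\bra{v}\sigma^t\ket{v},
\qquad
\bra{v}\sigma^t\ket{v}=\sum_k\lambda_k^t\|E_kv\|^2 .
\]
Set $r_k:=\rank(PF_kP)=\dim\big(\operatorname{Im}P\ominus(\operatorname{Im}P\cap\ker F_k)\big)$. We show $\mu_j^{1/t}\to\lambda_k$ for $r_{k-1}<j\le r_k$.

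\emph{Lower bound.} Take $V=\operatorname{Im}P\ominus(\operatorname{Im}P\cap\ker F_k)$, which has dimension $r_k\ge j$. For $v\in V$ we have $\|F_kv\|\ge\delta\|v\|$ for some $\delta>0$ by compactness, because $V$ meets $\ker F_k$ only in $0$. Hence
\[
\bra{v}\sigma^t\ket{v}\ge\lambda_k^t\|F_kv\|^2\ge\lambda_k^t\delta^2 ,
\]
so $\liminf_t\mu_j^{1/t}\ge\lambda_k$.

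\emph{Upper bound.} Any $V$ with $\dim V=j>r_{k-1}$ intersects $\operatorname{Im}P\cap\ker F_{k-1}$, whose dimension is $\rank P-r_{k-1}$, in a nonzero vector $v$. For this $v$,
\[
\bra{v}\sigma^t\ket{v}\le\lambda_k^t\|v\|^2,
\]
since $v$ has no component in $E_1,\dots,E_{k-1}$. Therefore $\mu_j\le\lambda_k^t$.

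The delicate point is justifying these dimension counts. Here one uses $\rank(PF_kP)=\rank(F_kP)=\rank P-\dim(\operatorname{Im}P\cap\ker F_k)$. Lemma~\ref{lem: commuting projector} is not needed for this step, though it will presumably be used afterwards for the equality condition.
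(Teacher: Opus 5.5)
Your proof is correct, and it follows a genuinely different route from the paper's.

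The paper never computes the limit $\alpha\to 0$ itself. It imports the known closed form $\widetilde{Q}_0(\rho\|\sigma)=\max\{\sum_{j=1}^s\lambda_{i_j} : (\Pi_{\supp(\rho)}\ket{i_j})_{j}\ \text{linearly independent}\}$ from Audenaert, Datta--Leditzky and Tomamichel. It then argues greedily that an optimal selection takes exactly $\rank(\Pi_{\supp(\rho)}F_k\Pi_{\supp(\rho)})-\rank(\Pi_{\supp(\rho)}F_{k-1}\Pi_{\supp(\rho)})$ vectors from the $k$-th eigenspace. That step amounts to the optimality of the greedy algorithm on the linear matroid of projected eigenvectors, which the paper asserts (``repeating this procedure'') rather than proves.

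You instead work directly from the definition:
\begin{itemize}
\item The sandwich $cP\le\rho\le CP$ removes the dependence on the nonzero spectrum of $\rho$, at the cost of factors $c^\alpha, C^\alpha\to 1$.
\item The swap $\sigma^\gamma P\sigma^\gamma\leftrightarrow P\sigma^{t}P$ reduces the problem to a single Hermitian family on $\operatorname{Im}P$.
\item Courant--Fischer gives the two-sided bound $\delta^2\lambda_k^t\le\mu_j(t)\le\lambda_k^t$ for $r_{k-1}<j\le r_k$, from which the limit follows immediately.
\end{itemize}

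What your route buys: it is self-contained and never chooses an eigenbasis inside a degenerate eigenspace of $\sigma$. It also proves the finer statement that each eigenvalue of $P\sigma^tP$ grows like $\lambda_k^t$, and in effect re-derives the cited closed form rather than relying on it. The paper's route is shorter given the citation, and it keeps the combinatorial picture that is reused, via $r_k\ge t_k$, in the subsequent equality-condition proposition. Your rank counts $r_k$ plug into that argument just as well.

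Your reading that $\sigma$ is full rank matches the lemma's hypothesis that all listed eigenvalues are nonzero. Even without it, the remaining $\rank P-r_m$ eigenvalues would be exactly zero and drop out of the trace, so the formula is unchanged.
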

\end{lemboxed}
\begin{proof}
We consider $\sQ_0(\rho\|\sigma):=2^{-\sD_0(\rho\|\sigma)}$ instead.
We begin with the explicit form of $\sQ_0(\rho\|\sigma)$~\cite{Audenaert_2007, Datta_a_limit_of_2014, Tomamichel_2014_relating_different}
    \bal\label{eq: explicit form of zero sandwiched in Audenaert}
    \sQ_0(\rho\|\sigma)=\max_{i_1,\ldots,i_s}\qty{\sum_{j=1}^s\lambda_{i_j}~|~\qty(\Pi_{\supp(\rho)}\ket{i_j})_{j=1}^s\mbox{ is linearly independent}}
    \eal
    Here, $\sigma=\sum_{i=1}^d\lambda_i\ketbra{i}{i}$ is the spectral decomposition, and $s=\rank(\Pi_{\supp(\rho)}\sigma)$.
    Regarding Eq.~\eqref{eq: explicit form of zero sandwiched in Audenaert}, we need to maximize the sum of $s$ eigenvalues, whose eigenvector has nonzero overlap with the support of $\rho$. To maximize this sum, we first consider the greatest eigenvalue $\lambda_1$. 
    Since the dimension of the common subspace of $\supp(\rho)$ and $\Im(E_1)$ is $\rank(\Pi_{\supp(\rho)} F_1 \Pi_{\supp(\rho)})$, the contribution of $\lambda_1$ to the sum is $\lambda_1\rank(\Pi_{\supp(\rho)} F_1 \Pi_{\supp(\rho)})$. To consider the second greatest eigenvalue $\lambda_2$, one needs to consider the number of eigenvectors corresponding to $\lambda_2$ which has nonzero overlap with $\supp(\rho)$ and which is linearly independent of the vectors in  $\Im(\Pi_{\supp(\rho)}F_1\Pi_{\supp(\rho)})$. 
    Since the dimension of the subspace 
$\Im(\Pi_{\supp(\rho)}F_2\Pi_{\supp(\rho)})\cap\Im(\Pi_{\supp(\rho)}F_1\Pi_{\supp(\rho)})^\perp$
    is $\rank(\Pi_{\supp(\rho)} F_2 \Pi_{\supp(\rho)} )-\rank(\Pi_{\supp(\rho)} F_{1} \Pi_{\supp(\rho)} )$, the contribution of $\lambda_2$ to the sum is $\lambda_2\qty[\rank(\Pi_{\supp(\rho)} F_2 \Pi_{\supp(\rho)} )-\rank(\Pi_{\supp(\rho)} F_{1} \Pi_{\supp(\rho)} )]$.
    Repeating this procedure, we reach the equivalent form in Eq.~\eqref{eq: explicit form of zero sandwiched from chatgpt}.
\end{proof}

\begin{proboxed}
    \begin{pro}\label{pro: separation of sandwich and petz at 0}
        Let $\rho,\sigma$ be quantum states satisfying $\supp(\rho)\subset\supp(\sigma)$. Then, $\sD_0(\rho\|\sigma)=D_{\rm min}(\rho\|\sigma)$ holds if and only if $[\Pi_{\supp(\rho)},\sigma]=0$ holds.
    \end{pro}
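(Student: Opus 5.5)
Write $P:=\Pi_{\supp(\rho)}$ and $\sigma=\sum_{k=1}^m\lambda_k E_k$ with $\lambda_1>\cdots>\lambda_m>0$ and $F_k=E_1+\cdots+E_k$, as in Lemma~\ref{lem: alternative form of sD0}. The plan is to compare the two quantities $\sQ_0(\rho\|\sigma)=2^{-\sD_0(\rho\|\sigma)}$ and $2^{-D_{\rm min}(\rho\|\sigma)}=\Tr[P\sigma]$ through a layer-cake decomposition. Setting $\lambda_{m+1}:=0$, we have $\sigma=\sum_{k=1}^m(\lambda_k-\lambda_{k+1})F_k$. Hence
\begin{equation*}
\Tr[P\sigma]=\sum_{k=1}^m(\lambda_k-\lambda_{k+1})\Tr[PF_kP].
\end{equation*}
Summation by parts applied to Lemma~\ref{lem: alternative form of sD0} gives
\begin{equation*}
\sQ_0(\rho\|\sigma)=\sum_{k=1}^m(\lambda_k-\lambda_{k+1})\rank(PF_kP),
\end{equation*}
since $\rank(PF_0P)=0$. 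Both expressions carry the same strictly positive weights $\lambda_k-\lambda_{k+1}>0$, so the comparison reduces to comparing $\Tr[PF_kP]$ with $\rank(PF_kP)$ for each $k$.

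For each $k$, the operator $PF_kP$ satisfies $0\le PF_kP\le I$. Its trace is the sum of its eigenvalues, all of which lie in $[0,1]$, while its rank counts the nonzero eigenvalues. Therefore $\Tr[PF_kP]\le\rank(PF_kP)$, with equality if and only if every nonzero eigenvalue equals $1$, that is, if and only if $PF_kP$ is a projector. Combining this with the weighted sums yields $D_{\rm min}(\rho\|\sigma)\ge\sD_0(\rho\|\sigma)$, with equality if and only if $PF_kP$ is a projector for every $k=1,\dots,m$.

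By Lemma~\ref{lem: commuting projector}, $PF_kP$ is a projector exactly when $[P,F_k]=0$. It remains to show that $[P,F_k]=0$ for all $k$ is equivalent to $[P,\sigma]=0$. For one direction, each $E_k=F_k-F_{k-1}$ then commutes with $P$, and so does $\sigma=\sum_k\lambda_kE_k$. For the other, the spectral projectors of $\sigma$ are polynomials in $\sigma$, so they commute with anything that commutes with $\sigma$; hence $[P,\sigma]=0$ gives $[P,F_k]=0$ for every $k$.

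The step needing care is the summation by parts together with the sign and positivity bookkeeping: $D_{\rm min}$ and $\sD_0$ are both minus logarithms, so equality of the divergences is equivalent to equality of $\Tr[P\sigma]$ and $\sQ_0$. The full-support assumption $\supp(\rho)\subset\supp(\sigma)$ ensures $\sigma$ has no zero eigenvalue on the relevant subspace, so all the $F_k$ enter with strictly positive weights $\lambda_k-\lambda_{k+1}$, including $k=m$ where the weight is $\lambda_m>0$.
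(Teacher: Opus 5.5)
Your proof is correct and follows the paper's argument essentially step by step: the same summation-by-parts comparison of $\rank(PF_kP)$ against $\Tr[PF_kP]$ with weights $\lambda_k-\lambda_{k+1}>0$, then Lemma~\ref{lem: commuting projector}, then the equivalence of $[P,F_k]=0$ for all $k$ with $[P,\sigma]=0$. Your justification of $[P,\sigma]=0\Rightarrow[P,F_k]=0$, via spectral projectors being polynomials in $\sigma$, is tidier than the paper's version of that step; one small correction is that the weights are positive simply because the $\lambda_k$ are by construction the nonzero eigenvalues, so the support assumption is not what secures this.
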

\end{proboxed}
We remark that if $[\rho,\sigma]=0,$ then $[\Pi_{\supp(\rho)},\sigma]=0$ also holds, but the opposite direction is not always the case.
\begin{proof}
It suffices to deal with $\pQ_0(\rho\|\sigma):=2^{-D_{\rm min}(\rho\|\sigma)}$ and $\sQ_0(\rho\|\sigma):=2^{-\sD_0(\rho\|\sigma)}$.
 We denote $r_k:=\rank(\Pi_{\supp(\rho)} F_k \Pi_{\supp(\rho)})$, and $t_k:=\Tr[\Pi_{\supp(\rho)} F_k \Pi_{\supp(\rho)}]$. Note that, since $F_k$ is a projector, all the eigenvalues of $\Pi_{\supp(\rho)} F_k \Pi_{\supp(\rho)}$ is in the interval $[0,1]$, which implies that $r_k\geq t_k$.
    The alternative expression of $\sD_0(\rho\|\sigma)$ exhibited in Lemma~\ref{lem: alternative form of sD0} yields
    \bal
    \sQ_0(\rho\|\sigma):= \sum_{k=1}^m \lambda_k(r_k-r_{k-1}),
    \eal
    where we set $r_0=0$ for convention, while $\pQ_0$ is denoted as 
    \bal
    \pQ_0(\rho\|\sigma)&=\Tr[\Pi_{\supp(\rho)}\sigma]\\
    &=\Tr[\Pi_{\supp(\rho)}\sum_{k=1}^m \lambda_kE_k  ]\\
    &=\Tr[\Pi_{\supp(\rho)}\sum_{k=1}^m \lambda_k(F_k-F_{k-1})  ]\\
    &=\sum_{k=1}^m\lambda_k(t_k-t_{k-1}).
    \eal
    From this, it holds that
    \bal
    \sQ_0(\rho\|\sigma)-\pQ_0(\rho\|\sigma)&=\sum_{k=1}^m\lambda_k(r_k-r_{k-1}-t_k+t_{k-1})\\
    &=\sum_{k=1}^{m-1}\qty[(\lambda_k-\lambda_{k+1})(r_k-t_k)]+\lambda_m(r_m-t_m).
    \eal
    Noting that $\lambda_k-\lambda_{k+1}>0$, $\lambda_m>0$, and $r_k\geq t_k$, $\sQ_0(\rho\|\sigma)=\pQ_0(\rho\|\sigma)$ holds if and only if $r_k=t_k$ for all $k$. This equailty holds if and only if $\Pi_{\supp(\rho)} F_k \Pi_{\supp(\rho)}$ is a projector.
    Due to Lemma~\ref{lem: commuting projector}, we have that $\sQ_0(\rho\|\sigma)=\pQ_0(\rho\|\sigma)$ holds if and only if $[\Pi_{\supp(\rho)} ,F_k]=0,~\forall k$. Since $E_k=F_k-F_{k-1} ~(k\geq 2), E_1=F_1$, this condition is equivalent to 
    \bal
    [\Pi_{\supp(\rho)} ,E_k]=0,~\forall k.
    \label{eq: support commutes with eigen projector}
    \eal
    It is straightforward to see that \eqref{eq: support commutes with eigen projector} implies $[\Pi_{\supp(\rho)},\sigma]=0$ recalling that $\sigma=\sum_{k=1}^m\lambda_kE_k$.
    On the other hand, if $[\Pi_{\supp(\rho)},\sigma]=0$ holds, we have 
    \bal
     \sum_j \lambda_j \qty[\Pi_{\supp(\rho)} E_j -  E_j\Pi_{\supp(\rho)}] = 0.
    \eal
    By applying $E_k$ from left and right in both sides, we get \eqref{eq: support commutes with eigen projector}.
    This ensures that \eqref{eq: support commutes with eigen projector} is equivalent to $[\Pi_{\supp(\rho)},\sigma]=0$, showing the claim.
\end{proof}

Proposition~\ref{pro: separation of sandwich and petz at 0} implies the following: Whenever a quantum state does not satisfy $[\Pi_{\supp(\rho)},\tau]=0$, we can always take a work extraction rate $r$ such that the reliability function diverges under the Gibbs-preserving operations but remains finite under the Gibbs-preserving covariant and thermal operations.  
It can be directly checked that the example in \eqref{eq: separation example} indeed violates the commuting condition.

One might expect that the sharp separation where the reliability function under Gibbs-preserving operation diverges while that under thermal operations remains finite can also happen in the zero-rate regime.
However, we can see that this sharp separation is specific to the positive-rate regime, which never occurs in the zero-rate regime. 
For any quantum state $\rho$ that is not full-rank, both the reversed Umegaki divergence and the star divergence $D^\star(\tau\|\rho)$ also diverge, which follows from Ref.~\cite[Theorem 2]{Audenaert_alpha_z_2015}.

\subsection{Data-processing inequality with covariance (Proof of Proposition~\ref{pro: DPI of sandwiched})}
In Eq.~\eqref{eq: error exponent of work extraction}, the optimization of the sandwiched Rényi divergence in the interval $\alpha\in(0,1)$ appears. However, the sandwiched Rényi divergence of order $\alpha\in \qty(0,\frac{1}{2})$ does not satisfy the data-processing inequality~\cite{Berta_2017_on_variational}.
Interestingly, despite this fact, we can show the data-processing inequality of the sandwiched Rényi divergence with a certain covariance.
\begin{proboxed}
    \begin{pro}[Proposition~\ref{pro: DPI of sandwiched} in the main text. Data-processing inequality of sandwiched Rényi divergence under covariant channel]\label{pro: DPI of sandwiched app}
    Let $\sigma\in\mD(\mH)$ be a quantum state and $\mP_\sigma$ be a pinching channel corresponding to $\sigma$ defined in Eq.~\eqref{eq: spectral pinching}.
    Furthermore, let $\Lambda:\mD(\mH_A)\to\mD(\mH_B)$ be a channel satisfying
    \bal
    \mP_{\Lambda(\sigma)^{\otimes n}}\circ \Lambda^{\otimes n}\circ\mP_{\sigma^{\otimes n}}=\mP_{\Lambda(\sigma)^{\otimes n}}\circ \Lambda^{\otimes n}, ~~n\in\mbN.
    \label{eq: DPI condition}
    \eal
    Then, for any quantum state $\rho\in\mD(\mH)$, we have
    \bal
    \sD_\alpha(\rho\|\sigma)&\geq \sD_\alpha(\Lambda(\rho)\|\Lambda(\sigma)), ~~\forall\alpha\in(0,\infty],\\
    D^\star(\sigma\|\rho)&\geq D^\star(\Lambda(\sigma)\|\Lambda(\rho)).
    \eal
\end{pro}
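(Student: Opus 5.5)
The plan is to reduce both inequalities to the data-processing inequality for \emph{commuting} (classical) pairs, where every Rényi divergence of order $\alpha\in[0,\infty]$ is monotone under any channel. The reduction goes through pinching and regularization. Write $\sigma_n:=\sigma^{\otimes n}$, $\rho_n:=\rho^{\otimes n}$, $\Lambda_n:=\Lambda^{\otimes n}$, $\mP_n:=\mP_{\sigma_n}$ and $\mP'_n:=\mP_{\Lambda(\sigma)^{\otimes n}}$.

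For the sandwiched inequality, start from Eq.~\eqref{eq: limit of pinched sandwich}:
\begin{equation}
\sD_\alpha(\rho\|\sigma)=\lim_{n\to\infty}\frac{1}{n}\sD_\alpha(\mP_n(\rho_n)\|\sigma_n).
\end{equation}
The pair $(\mP_n(\rho_n),\sigma_n)$ commutes, so on it $\sD_\alpha$ coincides with the classical (equivalently Petz) Rényi divergence. Apply the channel $\mP'_n\circ\Lambda_n$. For this step I would use the data-processing inequality of $\pD_\alpha$ for $\alpha\in[0,2]$, together with the fact that the output pair also commutes. That fact holds because $\mP'_n\circ\Lambda_n(\sigma_n)=\Lambda(\sigma)^{\otimes n}$ is invariant under its own pinching, and the image of $\mP'_n$ commutes with $\Lambda(\sigma)^{\otimes n}$.

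For $\alpha>2$ the Petz DPI is not available. Here I would instead argue classically. Restrict $\mP'_n\circ\Lambda_n$ to the commutative algebra generated by $\sigma_n$ and $\mP_n(\rho_n)$; its output lies in a commutative algebra as well. This gives a classical stochastic map between diagonal distributions, and classical Rényi DPI holds for every $\alpha\in[0,\infty]$. Concretely, one diagonalizes in a joint eigenbasis and uses a complete dephasing after $\mP'_n$, which changes neither output since both are block-diagonal and can be jointly diagonalized.

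This yields
\begin{equation}
\sD_\alpha(\mP_n(\rho_n)\|\sigma_n)\ge \sD_\alpha\big(\mP'_n\Lambda_n\mP_n(\rho_n)\,\big\|\,\Lambda(\sigma)^{\otimes n}\big).
\end{equation}
The covariance condition \eqref{eq: DPI condition} now lets me delete the inner $\mP_n$, turning the right-hand side into $\sD_\alpha(\mP'_n(\Lambda(\rho)^{\otimes n})\|\Lambda(\sigma)^{\otimes n})$. Dividing by $n$ and applying Eq.~\eqref{eq: limit of pinched sandwich} again, this time for the pair $(\Lambda(\rho),\Lambda(\sigma))$, gives the first claim. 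The endpoint $\alpha=\infty$ follows by taking limits, or directly by the same argument, since $D_{\max}$ is classical-DPI monotone.

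The star divergence is handled in the same way with the arguments reversed. For fixed $\alpha$ near $1$, the quantity $D_\alpha(\sigma_n\|\mP_n(\rho_n))$ again involves a commuting pair. Applying $\mP'_n\circ\Lambda_n$ and the classical DPI gives
\begin{equation}
D_\alpha(\sigma_n\|\mP_n(\rho_n))\ge D_\alpha\big(\Lambda(\sigma)^{\otimes n}\,\big\|\,\mP'_n\Lambda_n\mP_n(\rho_n)\big)=D_\alpha\big(\Lambda(\sigma)^{\otimes n}\,\big\|\,\mP'_n(\Lambda(\rho)^{\otimes n})\big),
\end{equation}
where the equality is once more the covariance condition. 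Dividing by $n$, letting $n\to\infty$ and then $\alpha\to1$ gives $D^\star(\sigma\|\rho)\ge D^\star(\Lambda(\sigma)\|\Lambda(\rho))$. This assumes the limits defining $D^\star$ exist on both sides, as the definition in Theorem~\ref{thm: zero-rate error app} presupposes. If needed, I would note that each $n$-level inequality passes to $\limsup$ and $\liminf$ alike.

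The main obstacle I expect is the careful justification that the commuting-pair step is legitimate for all $\alpha$, including $\alpha\in(0,1/2)$ and large $\alpha$, where quantum DPI fails. The point to verify is that both the inputs and the outputs of $\mP'_n\circ\Lambda_n$ can be simultaneously diagonalized, so that the composite acts as a classical Markov kernel on the relevant pair. A secondary check is that the pinching-regularization formula \eqref{eq: limit of pinched sandwich} holds uniformly for the full range $\alpha\in(0,\infty]$. I would rely on the cited bound $\sD_\alpha(\mP_\sigma(\rho)\|\sigma)\le\sD_\alpha(\rho\|\sigma)\le\sD_\alpha(\mP_\sigma(\rho)\|\sigma)+2\log|\spec(\sigma)|$ stated for all $\alpha\ge0$, together with polynomial growth of $|\spec(\sigma^{\otimes n})|$.
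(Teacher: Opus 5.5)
Your proposal is correct and follows the paper's argument: regularize via Eq.~\eqref{eq: limit of pinched sandwich}, apply the Petz/classical data-processing inequality to the commuting pair under $\mP_{\Lambda(\sigma)^{\otimes n}}\circ\Lambda^{\otimes n}$, and remove the inner pinching using the covariance condition. The one minor difference is for $D^\star$: you apply the classical data-processing inequality directly to the reversed commuting pair, whereas the paper uses $(1-\alpha)D_\alpha(p\|q)=\alpha D_{1-\alpha}(q\|p)$ to reduce to the first inequality at order $1-\alpha$. Your route is slightly more direct and not restricted to $\alpha<1$, while the paper's identifies the inner limit as $\frac{\alpha}{1-\alpha}\sD_{1-\alpha}(\rho\|\sigma)$ and so guarantees that it exists.
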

\end{proboxed}
If we take $\sigma$ as the thermal state $\tau$, and $\Lambda$ as a Gibbs-preserving covariant operation or a thermal operation, the condition is satisfied.
\begin{proof}
First, we prove the first inequality.
Since the statement is obvious for $\alpha\in\qty[\frac{1}{2},\infty]$, we focus on $\alpha\in[0,\frac{1}{2}]$.
First, recall Eq.~\eqref{eq: limit of pinched sandwich}
\bal
\sD_\alpha(\rho\|\sigma)=\lim_{n\to\infty}\frac{1}{n}D_\alpha(\mP_{\sigma^{\otimes n}}(\rho^{\otimes n})\|\sigma^{\otimes n}),
\eal
where $D_\alpha$ is the classical Rényi divergence. Here, note that the value of the Petz Rényi divergence and the sandwiched Rényi divergence coincide for commutative states. Due to the data-processing inequality of Petz Rényi divergence, we have
\bal
\sD_\alpha(\rho\|\sigma)&=\lim_{n\to\infty}\frac{1}{n}D_\alpha(\mP_{\sigma^{\otimes n}}(\rho^{\otimes n})\|\sigma^{\otimes n})\\
&=\lim_{n\to\infty}\frac{1}{n}\pD_\alpha(\mP_{\sigma^{\otimes n}}(\rho^{\otimes n})\|\sigma^{\otimes n})\\
&\geq \lim_{n\to\infty}\frac{1}{n}\pD_\alpha(\mP_{\Lambda(\sigma)^{\otimes n}}\circ\Lambda^{\otimes n}\circ\mP_{\sigma^{\otimes n}}(\rho^{\otimes n})\|\Lambda(\sigma)^{\otimes n})\\
&=\lim_{n\to\infty}\frac{1}{n}\pD_\alpha(\mP_{\Lambda(\sigma)^{\otimes n}}\circ\Lambda^{\otimes n}(\rho^{\otimes n})\|\Lambda(\sigma)^{\otimes n})\\
&=\lim_{n\to\infty}\frac{1}{n}\pD_\alpha(\mP_{\Lambda(\sigma)^{\otimes n}}(\Lambda(\rho)^{\otimes n})\|\Lambda(\sigma)^{\otimes n})=\sD_\alpha(\Lambda(\rho)\|\Lambda(\sigma)),
\eal
where in the third line we used the data-processing inequality of Petz Rényi divergence under the channel $\mP_{\Lambda(\sigma)^{\otimes n}}\circ\Lambda^{\otimes n}\circ\mP_{\sigma^{\otimes n}}$ and that $\mP_{\Lambda(\sigma)^{\otimes n}}\circ\Lambda^{\otimes n}\circ\mP_{\sigma^{\otimes n}}(\sigma^{\otimes n})=\Lambda(\sigma)^{\otimes n}$, and in the fourth line we used the assumption \eqref{eq: DPI condition}.
This concludes the proof of the first statement.

We now prove the second inequality.
Since the classical Rényi divergence satisfies $(1-\alpha)D_\alpha(p\|q)=\alpha D_{1-\alpha}(q\|p)$, we have 
\bal
D_\alpha(\sigma^{\otimes n}\|\mP_{\sigma^{\otimes n}}(\rho^{\otimes n}))=\frac{\alpha}{1-\alpha}D_{1-\alpha}(\mP_{\sigma^{\otimes n}}(\rho^{\otimes n})\|\sigma^{\otimes n}).
\eal
Dividing by $n$ and taking the limit $n\to\infty$, we have
\bal
\lim_{n\to\infty}\frac{1}{n}D_\alpha(\sigma^{\otimes n}\|\mP_{\sigma^{\otimes n}}(\rho^{\otimes n}))=\frac{\alpha}{1-\alpha}\sD_{1-\alpha}(\rho\|\sigma).
\eal
From this, it suffices to show that 
\bal
\lim_{\alpha\to 1}\frac{\alpha}{1-\alpha}\sD_{1-\alpha}(\rho\|\sigma)\geq \lim_{\alpha\to 1}\frac{\alpha}{1-\alpha}\sD_{1-\alpha}(\Lambda(\rho)\|\Lambda(\sigma))
\eal
for any $\Lambda$ satisfying the covariance condition.
Indeed, this follows from the first inequality.

\end{proof}

For any Rényi divergence $\mbD_\alpha$ with the data-processing inequality, we can define a resource measure in the resource theory of thermodynamics as 
\bal
\mR_{\mbD_\alpha}(\rho):=\mbD_\alpha(\rho\|\tau).
\eal
Since the sandwiched relative entropy is the minimal divergence of all the divergences satisfying the data-processing inequality and additivity, the resource measure induced from the sandwiched Rényi divergence is no larger than any other resource measure defined in this way. Especially, Proposition~\ref{pro: DPI of sandwiched app} implies that the resource measure $\mR_{\sD_\alpha}$ is monotone under thermal operations and Gibbs-preserving covariant operations for any $\alpha\in(0,\infty]$, while it is not under Gibbs-preserving operations. 
Theorem~\ref{thm: error/strong converse of work extraction app} can be understood as the operational interpretation of the minimal resource measure defined from Rényi divergences.

%TC:endignore
\end{document}